\providecommand{\U}[1]{\protect\rule{.1in}{.1in}}
\DeclareMathAlphabet{\pazocal}{OMS}{zplm}{m}{n}
\newtheorem{theorem}{Theorem}
\newtheorem{definition}[theorem]{Definition}
\newtheorem{lemma}[theorem]{Lemma}
\newtheorem{proposition}[theorem]{Proposition}
\newtheorem{remark}[theorem]{Remark}
\newenvironment{proof}[1][Proof]{\noindent\textbf{#1.} }{\ \rule{0.5em}{0.5em}}
\newenvironment{proofof}[1][Proof]{\noindent \textbf{#1.} } {\ \rule{0.5em}{0.5em}}
\numberwithin{equation}{section}
\numberwithin{theorem}{section}
\newcommand{\E}{{\mathbb E}}
\newcommand{\R}{{\mathbb R}}
\newcommand{\cf}{{\mathbbm 1}}
\newcommand{\ep}{\epsilon}
\newcommand{\ud}{\;\mathrm{d}}
\newcommand{\eps}{{\epsilon}}
\newcommand{\beq}{\begin{equation}}
\newcommand{\eeq}{\end{equation}}
\newcommand{\beqs}{\begin{eqnarray}}
\newcommand{\eeqs}{\end{eqnarray}}
\newcounter{jlisti}
\newcommand{\func}[1]{\operatorname{#1}}
\def\be{\begin{equation}}
\def\ee{\end{equation}}
\def\bea{\begin{eqnarray}}
\def\eea{\end{eqnarray}}
\begin{document}

\title{ Interacting particle systems
with long range interactions:  approximation by tagged particles in random fields}

\author{Alessia Nota \thanks{\emailalessia} , Juan J. L. Vel\'azquez \thanks{\emailjuan}, Raphael Winter \thanks{\emailraphael} \\[1em]
$\,^*$\UAaddress\\[0.5em] $\, ^\dag$\UBaddress \\[0.5em]
 $\,^\ddag$\ULaddress}

\date{\today}

\newcommand{\email}[1]{E-mail: \tt #1}
\newcommand{\emailalessia}{\email{alessia.nota@univaq.it} (corresponding author)}
\newcommand{\emailjuan}{\email{velazquez@iam.uni-bonn.de}}
\newcommand{\emailraphael}{\email{raphael.winter@ens-lyon.fr}}

\newcommand{\UAaddress}{\em Dipartimento di Ingegneria e Scienze dell'Informazione e Matematica,\\ \em Universit\`a degli studi dell'Aquila, L'Aquila, 67100 Italy}
\newcommand{\UBaddress}{\em University of Bonn, Institute for Applied Mathematics\\
\em Endenicher Allee 60, D-53115 Bonn, Germany}
\newcommand{\ULaddress}{\em Universit\'e de Lyon, \\ \em 43 Boulevard du 11 Novembre 1918, 69100 Villeurbanne, France}

\date{\today }
\maketitle

\begin{abstract}

In this paper we continue the study of the derivation of different types of kinetic equations which arise from
scaling limits of interacting particle systems. We began this study in \cite{NVW}.  
More precisely, we consider the derivation of the kinetic equations for systems with long range interaction. Particular emphasis is
put on the fact that all the kinetic regimes can be obtained approximating 
the dynamics of interacting particle systems, as well as the dynamics of Rayleigh Gases, by a stochastic Langevin-type dynamics for a single particle.  We will present this approximation in detail and we will obtain precise formulas for the diffusion and friction coefficients appearing in the limit Fokker-Planck equation for the probability density of the tagged particle $f\left( x,v,t\right)$, for three different classes of potentials. The case of interaction potentials behaving as Coulombian
potentials at large distances will be considered in detail. In particular, we will discuss the onset of the the so-called Coulombian
logarithm. 
\end{abstract}

\tableofcontents

\bigskip\ \noindent

\vspace{0.5cm}

\bigskip

\section{Introduction} 
The goal of this paper and its companion paper \cite{NVW} is the kinetic description of interacting particle systems in the mathematical framework of scaling limits. The main emphasis is on the physically important case of long-range interaction, including the Coulomb case. 
The currently available mathematical tools do not allow for a fully rigorous derivation of the corresponding kinetic equations from a scaling limit of particle systems.  Indeed, even in the case of short-range interaction such a result has not been achieved for kinetic equations different from the Boltzmann equation, although only for short times. Instead we present a formalism in which the equations of motion for an interacting particle system, i.e. 
\begin{equation}
\frac{dX_{j}}{d\tau}=V_{j}\ \ \ ,\ \ \frac{dV_{j}}{d\tau}=-\sum_{k}\nabla
\Phi_{\eps}\left( X_{j}-X_{k}\right) \ \ ,\ \ j\in S  \label{eq:IntNew},
\end{equation}
or for a Rayleigh-gas system (see for instance \cite{S1, S2} and also \cite{NWL19})
\begin{equation} \label{eq:RayNew}
\begin{aligned}
\frac{dX}{d\tau }& =V\ \ ,\ \ \frac{dV}{d\tau }=-\sum_{j\in S}\nabla \Phi
_{\eps }\left( X-Y_{j}\right),  \\
\frac{dY_{k}}{d\tau }& =W_{k}\ \ ,\ \ \frac{dW_{k}}{d\tau }=-\nabla \Phi
_{\eps }\left( Y_{k}-X\right) \ \ ,\ \ k\in S , 
\end{aligned}
\end{equation}
are approximated by a stochastic system of Langevin type which describes the evolution of the tagged particle $(X,V)$. We will assume that the initial positions and velocities $\left\{ \left(
X_{j},V_{j}\right) :j\in S\right\} $ in~\eqref{eq:IntNew} are chosen according to
some probability distribution which is spatially homogeneous and with a
distribution of velocities $g=g\left( v\right) .$ 
Then this system is of the form
\begin{equation}
\frac{dX}{d\tau}=V_0\ \ ,\ \ \frac{dV}{d\tau}=-\Lambda_{\eps}\left(
V;g\right) +F_{\eps}\left( X,\tau;\omega;g\right)  \label{eq:TagFric},
\end{equation} 
 where $F_{\eps}$ is a random force field defined for $\omega$ in a
suitable probability space $\Omega$ and $\Lambda_{\eps}\left(
V;g\right) $ is a function which can be thought of as a
friction term depending only on the particle velocity $V.$ In Section~\ref{KinRanForcFields}, we recall some of the notations and properties of random force fields. 

In this paper we will restrict to classes of long-range potentials for which the evolution is driven by weak deflections of particles. In Section \ref{sec:TimeScale} this is formulated precisely, in terms of the Landau- and Boltzmann-Grad timescales $T_{L}$ and $T_{BG}$, as  $T_{L}\ll T_{BG}$.

The main content of the paper is a rigorous derivation of the friction- and diffusion coefficients resulting from \eqref{eq:TagFric} on a suitable timescale. The results of this analysis will be presented in Section~\ref{KineLimit}, while the proofs are postponed to Section~\ref{Sec:Pfs}.

A formal validation of the approximation \eqref{eq:TagFric} is the content of  Section~\ref{ss:JustR}.

Due to the long-range nature of the interaction potentials considered here, collective effects and screening play a crucial role. In the case of Coulomb interaction, the contribution of interactions over distances between the interparticle distance and the Debye screening length yield the dominant contribution, by a logarithmic factor in the scaling parameter $\eps \rightarrow 0$. The onset of this so-called Coulomb-logarithm is one of the main difficulties which present in this paper. The results presented here clarify the connection between the mathematical kinetic theory in the framework of scaling limits and the physical theory of plasmas. In the latter case, the higher order corrections of size $|\log \eps|^{-1}$ might not be small enough to be neglected, while they disappear in the scaling limit $\eps \rightarrow 0$. In the companion paper \cite{NVW}, we give an overview of scaling limits and their resulting kinetic equations for a large class of long-range potentials.

\section{Mathematical framework} 
\label{KinRanForcFields}

\subsection{Generalities about random force fields}

As indicated in the Introduction one of our goals is to approximate the
dynamics of (\ref{eq:IntNew}), through suitable scaling limits, by means of the dynamics of
a tagged particle in some families of random force fields. In this Section,
we describe some general properties of random force fields as well as some
specific random force fields generated by sets of moving particles. Other
random force fields will not be directly related to fields generated by
point particles, but they will be useful in order to approximate the
dynamics (\ref{eq:IntNew}) by (\ref{eq:TagFric}). 

Since we want to consider random force fields having singularities at the
particle centers, we will introduce some notation to deal with this case. We
will denote as $\Lambda $ the space of particle configurations in $\mathbb{R}%
^{3}$. More precisely, the elements of $\Lambda $ are locally finite  
subsets of $\R^3,$ or more precisely, sequences with the form $%
\left\{ x_{k}\right\} _{k\in \mathbb{N}}$ such that $\#\left[ \left\{
x_{k}\right\} _{k\in \mathbb{N}}\cap B_{R}\left( 0\right) \right] <\infty $
for any $R<\infty .$ Notice that we do not need to assume that $x_{k}\neq
x_{j}$ for $k\neq j.$ In order to allow force fields which diverge at some
points we define $\mathbb{R}_{\ast }^{3}$ as the compactification of $%
\R^3$ using a single infinity point $\infty .$ For a function $%
F\in C\left( \R^3;\mathbb{R}_{\ast }^{3}\right) $ we write $%
F\left( x_{0}\right) =\infty $ for some $x_{0}\in \R^3$, if the
function satisfies $\lim_{x\rightarrow x_{0}}\left\vert F\left( x\right)
\right\vert =\infty .$

We then introduce the following notation:%
\begin{equation*}
C_{\ast }\left( \R^3\right) =C\left( \R^3;\mathbb{R}%
_{\ast }^{3}\right).
\end{equation*}

Since $\mathbb{R}_{\ast }^{3}$ is a metric space, we can endow $C_{\ast
}\left( \R^3\right) $ with a metric topology in the usual manner.
Most of the random force fields used in this paper will be more regular than
just continuous. We then define:%
\begin{equation*}
C_{\ast }^{k}\left( \R^3\right) =\left\{ F\in C_{\ast }\left( 
\R^3\right) :F\in C^{k}\left( \R^3\setminus F^{-1}\left(
\infty \right) \right) \right\} \ \text{for}\quad k=1,2,...
\end{equation*}

Therefore, the elements of $C_{\ast }^{k}\left( \R^3\right) $ are
just $C^{k}$ functions at the points where they are bounded. We will not
need to define any topology on the spaces $C_{\ast }^{k}\left( \mathbb{R}%
^{3}\right) .$

We are interested in time-dependent random force fields. We then define the
metric space $C\left( \mathbb{R}:C_{\ast }\left( \R^3\right)
\right) .$ We could define similarly $C\left( \left[ 0,T\right] :C_{\ast
}\left( \R^3\right) \right) ,$ but we will use in this paper only
time-dependent random force fields defined globally in time.

Therefore we define:%
\begin{equation*}
C^{k}\left( \mathbb{R}:C_{\ast }^{k}\left( \R^3\right) \right)
=\left\{ F\in C\left( \mathbb{R}:C_{\ast }\left( \R^3\right)
\right) :F\in C^{k}\left( \left( \mathbb{R}\times \R^3\right)
\diagdown F^{-1}\left( \infty \right) \right) \right\} \ \text{for}\quad k=1,2,\dots
\end{equation*}%
Notice that $C^{k}\left( \mathbb{R}:C_{\ast }^{k}\left( 
\R^3\right) \right) $ is the subset of the set of functions of $%
C\left( \mathbb{R}:C_{\ast }\left( \R^3\right) \right) $ which
have $k$ continuous derivatives at the points where $F$ is bounded. We will
use the shorthand notation $F\in C^{k}$ for $F\in C^{k}\left( \mathbb{R}%
:C_{\ast }^{k}\left( \R^3\right) \right) .$

We introduce a $\sigma -$algebra on the space $C\left( \mathbb{R}:C_{\ast
}\left( \R^3\right) \right) $ generated by the cylindrical sets,
i.e. the $\sigma -$algebra generated by the sets $\left\{ Y(t_0,x_0)\in B
\right\} $ where $B$ is a Borel set of $\R^3_{\ast }$ and $\left(
t_{0},x_{0}\right) \in \mathbb{R}\times \R^3$. We will denote this 
$\sigma -$algebra as $\mathcal{B}$.

All the random force fields in which we are interested in this paper are
contained in the following definition.

\begin{definition}
\label{RandForField}Let $\left( \Omega ,\mathcal{F},\mu \right) $ be a measure
space where $\mathcal{F}$ is a $\sigma -$algebra of subsets of $\Omega $ and 
$\mu $ is a probability measure. A random force field is a measurable
mapping $F$ from $\Omega $ to the set of functions $C\left( \mathbb{R}%
:C_{\ast }\left( \R^3\right) \right) $ with respect to the $\sigma
-$algebra $\mathcal{B}$.
\end{definition}

Notice that a random force field defines a probability $\mathbb{P}$ on the $%
\sigma -$algebra $\mathcal{B}$, which consists of subsets of $C\left( 
\mathbb{R}:C_{\ast }\left( \R^3\right) \right) $, by means of:%
\begin{equation*}
\mathbb{P}\left( A\right) =\mu \left( \left\{ \omega \in \Omega :F\left(
\omega \right) \in A\right\} \right) \ \ ,\ \ A\in \mathcal{B}.
\end{equation*}%
A random force field can be characterized by the family of random variables 
\begin{equation*}
\left\{\omega \rightarrow F\left( X,\tau ;\omega \right),\ X\in \mathbb{R}%
^{3},\ \tau \in \left[ 0,T\right] \right\}.
\end{equation*} 
We can define the action of the group of spatial translations on $%
C^{k}\left( \mathbb{R}:C_{\ast }^{k}\left( \R^3\right) \right) $
by means of:%
\begin{equation*}
T_{a}F\left( X,\tau \right) =F\left( X+a,\tau \right) \text{ for each }a\in 
\R^3
\end{equation*}%
and the group of time translations by means of:%
\begin{equation*}
U_{b}F\left( X,\tau \right) =F\left( X,\tau +b\right) \text{ for each }b\in 
\mathbb{R}.
\end{equation*}

We will say that a random force field $F$ is invariant under spatial
translations (or just invariant under translations) if we have:%
\begin{equation*}
\mu \left( \left\{ \omega \in \Omega :F\left( \cdot ;\omega \right) \in
A\right\} \right) =\mu \left( \left\{ \omega \in \Omega :T_{a}F\left( \cdot
;\omega \right) \in A\right\} \right)
\end{equation*}%
for each $A\in \mathcal{B} $ and any $a\in \R^3.$ We will say that
a random force field $F$ is invariant under time translations (or
stationary) if:%
\begin{equation*}
\mu \left( \left\{ \omega \in \Omega :F\left( \cdot ;\omega \right) \in
A\right\} \right) =\mu \left( \left\{ \omega \in \Omega :U_{b}F\left( \cdot
;\omega \right) \in A\right\} \right)
\end{equation*}%
for each $A\in \mathcal{B} $ and any $b\in \mathbb{R}$.

\bigskip

Actually, all the random force fields considered in this paper will be also
invariant under rotations. Given $M\in SO\left( 3\right) $ we define the
action of the group $SO\left( 3\right) $ on $C^{k}\left( \mathbb{R}:C_{\ast
}^{k}\left( \R^3\right) \right) $ by means of:%
\begin{equation*}
R_{M}F\left( X,\tau \right) =F\left( MX,\tau \right) \text{ for each }M\in
SO\left( 3\right).
\end{equation*}

Then, the random force field $F$ is invariant under the group $SO\left(
3\right) $ if: 
\begin{equation*}
\mu \left( \left\{ \omega \in \Omega :F\left( \cdot ;\omega \right) \in
A\right\} \right) =\mu \left( \left\{ \omega \in \Omega :R_{M}F\left( \cdot
;\omega \right) \in A\right\} \right)
\end{equation*}%
for each $A\in \mathcal{B}$ and any $M\in SO\left( 3\right) .$

\subsection{Time-dependent particle configurations and random force fields
\label{TimeDepConf}}

\subsubsection{Random particle configurations}

In this Section we describe a family of random force fields that are
generated by particles distributed randomly in the phase space according to
the Poisson distribution at time $t=0$ which move at constant velocity for
positive times. We will take as unit of length the typical distance between
particles $d,$ i.e. $d=1.$ The particle velocities are distributed according
to a finite nonnegative measure $g=g\left( dv\right) $, independently from
the particle positions. Note that $\int_{\mathbb{R}^{d}}g\left( dv\right) $
is the spatial particle density. We will choose the unit of time $\tau$ in
such a way that the average particle velocity is of order one. Each particle
is the center of a radial potential $\Phi=\Phi\left( \left\vert y\right\vert
\right) \in C^{2}\left( \R^3\diagdown\left\{ 0\right\} \right) .$
We will consider two different types of potentials. First we will consider
potentials of order one with short ranges, i.e. smaller range than the
particle distance $d.$ Secondly, we will consider weak potentials with
arbitrary range, but typically larger or equal than the particle distance $%
d. $

More precisely, we will denote as $\Lambda_{p}$ the space of locally finite
particle configurations in the phase space $\R^3\times\mathbb{R}%
^{3}.$ Each of these particle configurations can be represented by a
sequence $\left\{ \left( x_{k},v_{k}\right) \right\} _{k\in\mathbb{N}}$ with 
$x_{k}\in\R^3$ and $v_{k}\in\R^3$ where all the
sequences which can be obtained from another by means of a 
permutation of the particles are equivalent and represent the same particle
configuration. We consider the $\sigma-$algebra $\Sigma_{p}$ generated by
the sets:%
\begin{equation*}
U_{B,n}=\left\{ \left\{ \left( x_{k},v_{k}\right) \right\} _{k\in \mathbb{N}%
}\in\Lambda_{p}:\#\left[ \left\{ \left( x_{k},v_{k}\right) \right\} \cap B%
\right] =n\right\} 
\end{equation*}
for each $n\in\mathbb{N}_{\ast}$ any Borel set $B\subset\R^3\times%
\R^3.$ We define a measure in $\mathbb{R}^{d}\times \mathbb{R}^{d}$
by means of the product measure $dxg\left( dv\right) .$ We then define a
probability measure $\nu_{g}$ on $\Sigma_{p}$ by means of:%
\begin{equation}
\nu_{g}\left( \bigcap_{j=1}^{J}U_{B_{j},n_{j}}\right) =\prod_{j=1}^{J}\left[ 
\frac{\left\vert \int_{B_{j}}dxg\left( dv\right) \right\vert ^{n_{j}}}{%
\left( n_{j}\right) !}e^{-\int_{B_{j}}dxg\left( dv\right) }\right] 
\label{S4E2}
\end{equation}
where $B_{j}$ is a Borel set of $\R^3\times\R^3$ and $%
n_{j}\in\mathbb{N}_{\ast}$ for each $j\in\left\{ 1,2,...,J\right\} $ with $%
B_{j}\cap B_{k}=\varnothing$ if $j\neq k.$

We define the free flow evolution group $T\left( \tau\right) $, $\tau \in%
\mathbb{R}$, on the space of particle configurations $\Lambda_{p}$ as
follows. Suppose that we represent a particle configuration $\xi\in\Lambda
_{p}$ by the sequence $\left\{ \left( x_{k},v_{k}\right) \right\} _{k\in%
\mathbb{N}}.$ Then we define: 
\begin{equation}
T\left( \tau\right) \xi=\left\{ \left( x_{k}+v_{k}\tau,v_{k}\right) \right\}
_{k\in\mathbb{N}}\ \ ,\ \ \tau\in\mathbb{R} \ .  \label{S4E3}
\end{equation}

This definition yields a mapping $T\left( \tau\right) :\Lambda
_{p}\rightarrow\Lambda_{p}$ which is independent of the specific sequence
used to label $\xi\in\Lambda_{p}.$ It is not hard to exhibit examples of
particle configurations $\xi\in\Lambda_{p}$ for which the configuration
defined by means of (\ref{S4E3}) is not locally finite for $\tau\neq0$. This
can be achieved giving to some particles placed very far away from the
origin large velocities which transport infinitely many particles to a
bounded region for some times $t_{0}\in\mathbb{R}$. However, this does not
happen with probability one if the particles are chosen according to the
probability measure $\nu_{g}$ defined in (\ref{S4E2}). More precisely, we
have the following result:

\begin{proposition}
\label{PoissonEvolution}Let $\left( \Lambda_{p},\Sigma_{p},\nu_{g}\right) $ be
the measure space of particle configurations, with $\nu_{p}$ as in (\ref%
{S4E2}). Then, the evolution group $T\left( \tau\right) :\Lambda
_{p}\rightarrow\Lambda_{p}$ defined by means of (\ref{S4E3}) is well defined
(i.e. $T\left( \tau\right) \xi$ is locally finite) for any $\tau \in\mathbb{R%
}$, for $a.e.$ $\xi\in\Lambda_{p}.$

The pushforward measure $\nu_{g}\circ T\left( - \tau\right) $ satisfies $%
\nu_{g}\circ T\left( -\tau\right) =\nu_{g}$ for each $\tau\in\mathbb{R}$.
\end{proposition}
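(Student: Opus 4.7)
The plan is to treat the two assertions of the proposition separately. First I will establish the $\nu_{g}$-a.s.\ local finiteness of $T(\tau)\xi$ by a direct first-moment estimate. Then I will establish the invariance $\nu_{g}\circ T(-\tau)=\nu_{g}$ by verifying it on the $\pi$-system of cylindrical events $U_{B,n}$ generating $\Sigma_{p}$ and extending via Dynkin's theorem; the single key input is that the phase-space shear $\Phi_{\tau}(x,v):=(x+v\tau,v)$ preserves the intensity measure $dx\,g(dv)$.

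For local finiteness, I fix $\tau\in\R$ and $R>0$ and compute the $\nu_{g}$-expectation of $\#\{k:x_{k}+v_{k}\tau\in B_{R}(0)\}$. By formula \eqref{S4E2} (equivalently, by Campbell's formula for Poisson processes) this expectation equals the intensity of the slab $\{(x,v):x+v\tau\in B_{R}(0)\}$, which by Fubini and translation invariance of Lebesgue measure in $x$ reduces to $|B_{R}(0)|\cdot g(\R^{3})$, finite because $g$ is a finite measure. Hence the count is a.s.\ finite, and intersecting over $R\in\N$ yields local finiteness of $T(\tau)\xi$ for $\nu_{g}$-a.e.\ $\xi$, at the fixed $\tau$.

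For the invariance, I first observe that $\Phi_{\tau}$ is a measurable bijection of $\R^{3}\times\R^{3}$ with inverse $\Phi_{-\tau}$, and that, for every nonnegative Borel $f$,
\begin{equation*}
\int f(\Phi_{\tau}(x,v))\,dx\,g(dv)\;=\;\int g(dv)\int f(x+v\tau,v)\,dx\;=\;\int f(y,v)\,dy\,g(dv),
\end{equation*}
by Fubini and Lebesgue translation invariance in the $x$ variable. Because $T(\tau)$ acts on a configuration $\xi=\{(x_{k},v_{k})\}$ by applying $\Phi_{\tau}$ to each particle, one has $T(\tau)^{-1}U_{B,n}=U_{\Phi_{\tau}^{-1}(B),n}$; this in particular gives the $\Sigma_{p}$-measurability of $T(\tau)$. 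Substituting the pairwise disjoint sets $\Phi_{\tau}^{-1}(B_{j})$ into \eqref{S4E2} and using the intensity-preservation identity above, one obtains $\nu_{g}(T(\tau)^{-1}\bigcap_{j}U_{B_{j},n_{j}})=\nu_{g}(\bigcap_{j}U_{B_{j},n_{j}})$. Since finite intersections of cylinders over disjoint Borel sets form a $\pi$-system generating $\Sigma_{p}$, the $\pi$--$\lambda$ theorem extends the identity to all of $\Sigma_{p}$, yielding $\nu_{g}\circ T(-\tau)=\nu_{g}$.

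There is no real analytic content; the only aspects needing care are the measure-theoretic bookkeeping, namely the measurability of $T(\tau):\Lambda_{p}\to\Lambda_{p}$ (handled above), the $\Sigma_{p}$-measurability of the "locally finite" event, and the $\pi$--$\lambda$ extension. One conceptual subtlety worth flagging is that the statement only asserts local finiteness on a $\tau$-dependent full-measure set; upgrading this to a single null set valid simultaneously for all $\tau$ would require controlling $\sup_{|\tau|\le T_{0}}\#\{k:x_{k}+v_{k}\tau\in B_{R}(0)\}\le\#\{k:|x_{k}|\le R+T_{0}|v_{k}|\}$, which demands a moment condition on $g$ (for instance $\int|v|^{3}\,g(dv)<\infty$) that is not assumed here, so the weaker pointwise-in-$\tau$ version is the natural one.
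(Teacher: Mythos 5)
Your proof is correct and takes essentially the same route as the paper's: both hinge on the fact that the phase-space shear $(x,v)\mapsto(x+v\tau,v)$ preserves the intensity measure $dx\,g(dv)$, and both verify invariance on cylinder events and extend to $\Sigma_{p}$. The only minor difference is in the local-finiteness step, where you compute the first moment of the particle count in $B_{R}$ directly via Campbell's formula (finite because $g$ is a finite measure), whereas the paper decomposes by velocity shells and invokes Borel--Cantelli; your version is a slightly shorter path to the same conclusion.
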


\begin{proof}
First we note that $T(\tau)\xi\in\Lambda_{p}$ holds $\nu_{g}$ a.e. for every 
$\tau\in{\mathbb{R}}$. We have 
\begin{align*}
\nu_{g}({\xi\in\Lambda_{p}: T(\tau)\xi\text{ not loc. finite}}) \leq\sum
_{n=1}^{\infty}\nu_{g}(\xi\in\Lambda_{p}: T(\tau)\xi\cap B_{n} \text{
infinite}),
\end{align*}
therefore it suffices to show $\nu_{g}(\xi\in\Lambda_{p}: T(\tau)\xi\cap
B_{R} \times{\mathbb{R}}^{3} \text{ infinite}) = 0$ for every $R>0$. To this
end we observe that 
\begin{align*}
\sum_{n=1}^{\infty}\nu_{g}(\xi\in\Lambda_{p}: \exists k\in S,
n\leq|v_{k}|\leq n+1, x_{k} \in B_{R} -\tau v_{k})< \infty.
\end{align*}
Hence by Borel-Cantelli we have $\nu_{g}(\xi\in\Lambda_{p}: \exists N>0\,
x_{k} +\tau v_{k} \in B_{R} \Rightarrow x_{k} \in B_{N})=1$, so $%
\nu_{g}(\xi\in\Lambda_{p}: T(\tau)\xi\cap B_{R} \times{\mathbb{R}}^{3} \text{
infinite}) = 0$ as claimed.

Furthermore, for any Borel sets $A,B\subset{\mathbb{R}}^{3}$ and $n\in{%
\mathbb{N}}$ we have 
\begin{align*}
\nu_{g}(\xi\in\Lambda_{p} : |\{T(\tau) \xi\cap A \times B\}| = n) & =
\nu_{g}(\xi\in\Lambda_{p} : |\{k \in S: v_{k} \in B, x_{k} \in A-\tau
v_{k}\}| = n) \\
& = \nu_{g}(\xi\in\Lambda_{p} : |\{k \in S: v_{k} \in B, x_{k} \in A\}| = n)
\\
& = \nu_{g}(\xi\in\Lambda_{p} : |\xi\cap A \times B\}| = n).
\end{align*}
Repeating the same computation for the cylinder sets shows $\nu_{g} =
\nu_{g} \circ T(-\tau)$.
\end{proof}

Actually several of the random force fields considered in this paper will
contain at least two different types of particles having different types of
charges. This is due to the fact that in order to define some of the long
range potentials, in particular those behaving for large values as
Coulombian potentials, an  electroneutrality condition is required in order to
be able to define spatially homogeneous random force fields (cf. \cite{NSV},
Theorem 2.13). On the other hand, there is no reason to assume that in
multicomponent systems all the particles have the same velocity
distribution. Suppose that we consider systems with $L$ different types of
particles having respectively the charges $\left\{ Q_{\ell}\right\}
_{\ell=1}^{L}$ and velocity distributions $\left\{ g_{\ell}\left( dv\right)
\right\} _{\ell=1}^{L}$ where $g_{\ell}$ are finite Radon measures in $%
\R^3.$ We can then generalize (\ref{S4E2}) as follows. We define a
set of configurations $\Lambda _{p}^{\left( L\right) }$ by means of: 
\begin{equation}
\Lambda_{p}^{\left( L\right) }=\left\{ \omega=\left\{ \left(
x_{k,1},v_{k,1};x_{k,2},v_{k,2};...;x_{k,L},v_{k,L}\right) \right\} _{k\in%
\mathbb{N}}:\left\{ \left( x_{k,\ell},v_{k,\ell}\right) \right\} _{k\in%
\mathbb{N}},\ \ell\in\left\{ 1,2,...,L\right\} \right\} / {\sim} 
\label{S5E5}
\end{equation}
where the equivalence relation $\sim$ identifies all the sequences which can
be obtained from another by means of a permutation of the particles within a
single species. Now, given Borel sets $B_{j}\subset\mathbb{R}^{d}\times%
\mathbb{R}^{d}$ and integers $n_{j,\ell}\in\mathbb{N}_{\ast}$ for $%
j\in\left\{ 1,2,...,J\right\} ,\ \ell\in\left\{ 1,2,...,L\right\} $ we
define sets:%
\begin{equation*}
U_{B_{j},n_{j,\ell}}^{\left( \ell\right) }=\left\{ \omega\in\Lambda
_{p}^{\left( L\right) }:\#\left[ \left\{ \left( x_{k,\ell},v_{k,\ell
}\right) \right\} \cap B_{j}\right] =n_{j,\ell}\right\} 
\end{equation*}
where we assume that $\omega$ is as in (\ref{S5E5}) and $\ell\in\left\{
1,2,...,L\right\} .$ We define the $\sigma-$algebra $\mathcal{F}_{L}$ of
subsets of $\Lambda_{p}^{\left( L\right) }$ as the smallest $\sigma-$algebra
containing all the sets $U_{B_{j},n_{j,k}}^{\left( k\right) }.$ We then
define a measure space $\left( \Lambda_{p}^{\left( L\right) },\mathcal{F}%
_{L},\nu_{\left\{ g_{\ell}\right\} _{\ell=1}^{L}}\right) $ by means of:%
\begin{equation}
\nu_{\left\{ g_{\ell}\right\} _{\ell=1}^{L}}\left(
\bigcap_{j=1}^{J}\bigcap_{\ell=1}^{L}U_{B_{j},n_{j,\ell}}\right)
=\prod_{j=1}^{J}\prod _{\ell=1}^{L}\left[ \frac{\left\vert
\int_{B_{j}}dxg_{\ell}\left( dv\right) \right\vert ^{n_{j,\ell}}}{\left(
n_{j,\ell}\right) !}e^{-\int_{B_{j}}dxg_{\ell}\left( dv\right) }\right] . 
\label{S5E6}
\end{equation}

We will say that the distribution of particles defined by means of the
probability measure $\nu_{\left\{ g_{\ell}\right\} _{\ell=1}^{L}}$ satisfies
the electroneutrality condition (with charges $\left\{ Q_{\ell}\right\}
_{\ell=1}^{L}$) if the following identity holds:%
\begin{equation}
\sum_{\ell=1}^{L}Q_{\ell}\int_{\mathbb{R}^{d}}g_{\ell}\left( dv\right) =0. 
\label{ElNeut}
\end{equation}

\subsubsection{Random force fields generated by freely moving random
particle distributions}

We can define now a family of random force fields taking as starting point
the random particle configurations defined in the previous section. Given a
family of radially symmetric interaction potentials $\phi=\phi\left(
\left\vert x\right\vert \right) $ such that $\nabla\phi\in
C_{\ast}^{2}\left( \R^3\right) ,$ we want to give a meaning to the
following expressions in order to define suitable random force fields. In
the case of particle configurations in $\Lambda_{p}:$%
\begin{equation}
F\left( x,\tau;\omega;g\right) =-\sum_{k\in\mathbb{N}}\nabla\phi\left(
x-x_{k}-v_{k}\tau\right) \ \ ,\ \ \ \omega=\left\{ \left( x_{k},v_{k}\right)
\right\} _{k\in\mathbb{N}}\in\Lambda_{p}   \label{S5E7}
\end{equation}
and in the case of particle configurations with different types of charges $%
\left\{ Q_{\ell}\right\} _{\ell=1}^{L}$ and velocity distributions $\left\{
g_{\ell}\left( dv\right) \right\} _{\ell=1}^{L}$ the goal is to give a
meaning to expressions like:%
\begin{equation}
F\left( x,\tau;\omega;\left\{ g_{\ell}\right\} _{\ell=1}^{L}\right)
=-\sum_{\ell=1}^{L}\sum_{k\in\mathbb{N}}Q_{\ell}\nabla\phi\left( x-x_{k,\ell
}-v_{k,\ell}\tau\right) \ \ ,\ \ \ \omega\in\Lambda_{p}^{\left( L\right) } 
\label{S5E8}
\end{equation}
with $\omega$ as in (\ref{S5E5}). Similarly, we define truncations of the
expressions above, defined by 
\begin{align}
F^{R}\left( x,\tau;\omega;g\right) & =-\sum_{k\in\mathbb{N}: |x_{k}|\leq
R}\nabla\phi\left( x-x_{k}-v_{k}\tau\right) \ \ ,\ \ \ \omega=\left\{ \left(
x_{k},v_{k}\right) \right\} _{k\in\mathbb{N}}\in\Lambda _{p}
\label{S5E7trunc} \\
F^{R}\left( x,\tau;\omega;\left\{ g_{\ell}\right\} _{\ell=1}^{L}\right) &
=-\sum_{\ell=1}^{L}\sum_{k\in\mathbb{N}: |x_{k,l}|\leq R}Q_{\ell}\nabla
\phi\left( x-x_{k,\ell}-v_{k,\ell}\tau\right) \ \ ,\ \ \ \omega\in
\Lambda_{p}^{\left( L\right) } .   \label{S5E8trunc}
\end{align}
\medskip

The convergence of the series on the right-hand side of (\ref{S5E7}), (\ref%
{S5E8}) for $\tau=0$ and a large class of interaction potentials $\phi$ has
been considered in \cite{NSV}. Given that the distribution of points $%
\left\{ x_{k}+v_{k}\tau\right\} _{k\in\mathbb{N}}$ is given by a Poisson
distribution (cf.~\ref{PoissonEvolution}) in $\R^3$ these results
hold for any $\tau\in\mathbb{R}$. The convergence of the series in (\ref%
{S5E7}), (\ref{S5E8}) is not immediate for potentials $\phi\left( \left\vert
x\right\vert \right) $ decreasing like nonintegrable power laws for large
values of $\left\vert x\right\vert .$ Using the methods in \cite{NSV} we
might then see that the right-hand side of (\ref{S5E7}) can be given a
meaning for any $\tau\in\mathbb{R}$ with probability one and $\phi\left(
\left\vert x\right\vert \right) \sim\frac{C}{\left\vert x\right\vert ^{s}},\
s>1$ defining the right-hand side of (\ref{S5E7}) as the limit as $%
R\rightarrow\infty$ of the sum over points contained in a sphere $%
B_{R}\left( 0\right) .$ In the case $s=1$ it has been proved in \cite{NSV}
that such limits exist and they define a random force field invariant
under translations if the electroneutrality condition (\ref{ElNeut}) holds.

The type of arguments used in \cite{NSV} can be adapted to prove that the
random force fields in (\ref{S5E7}), (\ref{S5E8}) are defined for all $\tau
\in\mathbb{R}$ with probability one. The following set of conditions for the
function $\phi$ will be used in the definition of the random force fields $F.
$

\begin{equation}
\left\vert \phi\left( x\right) \right\vert +\left\vert x\right\vert
\left\vert \nabla\phi\left( x\right) \right\vert \leq\frac{C}{\left\vert
x\right\vert ^{s}}\ \ \ \text{for }\left\vert x\right\vert \geq1\text{ with }%
s>2   \label{phiDec_slarge}
\end{equation}%
\begin{equation}
\left\vert \phi\left( x\right) -\frac{A}{\left\vert x\right\vert ^{s}}%
\right\vert +\left\vert x\right\vert \left\vert \nabla\phi\left( x\right) +%
\frac{Ax}{\left\vert x\right\vert ^{s+2}}\right\vert \leq\frac{C}{\left\vert
x\right\vert ^{s+1}}\ \ \ \text{for }\left\vert x\right\vert \geq1\text{
with }s\in(1,2) ,\ A\in\mathbb{R}   \label{phiDec_sInt}
\end{equation}%
\begin{equation}
\left\vert \phi\left( x\right) -\frac{A}{\left\vert x\right\vert }%
\right\vert +\left\vert x\right\vert \left\vert \nabla\phi\left( x\right) +%
\frac{Ax}{\left\vert x\right\vert ^{3}}\right\vert \leq\frac{C}{\left\vert
x\right\vert ^{2+\delta}}\ \ \ \text{for }\left\vert x\right\vert \geq1\text{
with } \ A\in\mathbb{R}\text{,\ }\delta>0.   \label{phiDec_sOne}
\end{equation}
In the three formulas (\ref{phiDec_slarge})-(\ref{phiDec_sOne}) we assume
that $C>0.$

We will further assume that the functions $g$, $g_{l}$ satisfy: 
\begin{align}
\int g(v)|v|^{6+\kappa }\;\mathrm{d}{v}& <\infty ,  \label{Momentg} \\
\int g_{l}(v)|v|^{6+\kappa }\;\mathrm{d}{v}& <\infty ,  \label{Momentgl}
\end{align}%
for some $\kappa >0$. We then have the following result:

\begin{proposition}
\label{RandForFieldMovPart} Let $\phi :\R^3\rightarrow \mathbb{R}%
_{\ast }$ be a radially symmetric interaction potential such that $\phi
,\nabla \phi ,\nabla ^{2}\phi \in C_{\ast }\left( \R^3\right) $. Let $B_{1}\left( 0\right) $ be the unit ball in $\R^3$. 
The following statements hold.

\begin{itemize}
\item[(i)] Suppose that $\phi $ satisfies (\ref{phiDec_slarge}) and that $\nabla
^{2}\phi $ is bounded in $L^{p}\left(
B_{1}\left( 0\right) \right) $ for some $p>1$. Let $\nu _{g}$ be the
probability measure in the measure space $\left( \Lambda _{p},\Sigma
_{p},\nu _{g}\right) $ defined by means of (\ref{S4E2}) where $g$ satisfies %
\eqref{Momentg}. Then for any $x\in \R^3$ the series in (\ref{S5E7}%
) converges absolutely for all $\tau \in \mathbb{R}$ for $\nu _{g}-$almost $%
\omega \in \Lambda _{p}.$ Moreover, the series (\ref{S5E7}) defines a random
force field in $C\left( \mathbb{R}:C_{\ast }\left( \R^3\right)
\right) .$

\item[(ii)] Suppose that $\phi $ satisfies (\ref{phiDec_sInt}) and that $%
\nabla ^{2}\phi $ is bounded in $L^{p}\left(
B_{1}\left( 0\right) \right) $ for some $p>1$. Let  $\nu _{g}$ be the
probability measure in the measure space $\left( \Lambda _{p},\Sigma
_{p},\nu _{g}\right) $ defined by means of (\ref{S4E2}) where $g$ satisfies %
\eqref{Momentg}. Then, for each $x\in \R^3$ the following limit
exists for all $\tau \in \mathbb{R}$ for $\nu _{g}-$almost $\omega \in
\Lambda _{p}:$ 
\begin{equation}
F\left( x,\tau ;\omega ;g\right) =-\lim_{R\rightarrow \infty }\sum_{\left\{
\left\vert x_{k}\right\vert \leq R\right\} }\nabla \phi \left(
x-x_{k}-v_{k}\tau \right) \ .  \label{S9E9}
\end{equation}
Moreover, the series in (\ref{S9E9}) defines a random force field in $%
C\left( \mathbb{R}:C_{\ast }\left( \R^3\right) \right) .$

\item[(iii)] Suppose that $\phi $ satisfies (\ref{phiDec_sOne}) and that $%
\nabla \phi $  is in the
Sobolev space $H^{s}\left( B_{1}\left( 0\right) \right) $ with $s>\frac{1}{2}
$. Let $\left( \Lambda _{p}^{\left( L\right) },\mathcal{F}_{L},\nu
_{\left\{ g_{\ell }\right\} _{\ell =1}^{L}}\right) $ be the measure space
defined by means of (\ref{S5E5}), (\ref{S5E6}) where the functions $\left\{
g_{\ell }\right\} _{\ell =1}^{L}$ satisfy \eqref{Momentgl} and the
electroneutrality condition (\ref{ElNeut}) holds. Then, for each $x\in 
\R^3$ the following limit exists for all $\tau \in \mathbb{R}$ for 
$\nu _{\left\{ g_{\ell }\right\} _{\ell =1}^{L}}-$almost $\omega \in \Lambda
_{p}^{\left( L\right) }:$%
\begin{equation}
F\left( x,\tau ;\omega ;\left\{ g_{\ell }\right\} _{\ell =1}^{L}\right)
=-\lim_{N\rightarrow \infty }\sum_{\ell =1}^{L}\sum_{\left\{ \left\vert
x_{k,\ell }\right\vert \leq 2^{N}\right\} }Q_{\ell }\nabla \phi \left(
x-x_{k,\ell }-v_{k,\ell }\tau \right) \ .  \label{T8E2}
\end{equation}
Moreover, the series in (\ref{T8E2}) defines a random force field in $%
C\left( \mathbb{R}:C_{\ast }\left( \R^3\right) \right) .$
\end{itemize}
\end{proposition}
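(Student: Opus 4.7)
The plan is to derive (i)--(iii) by reducing each to the corresponding $\tau=0$ result in \cite{NSV} via the invariance of the Poisson measure under the free flow provided by Proposition~\ref{PoissonEvolution}, and then to upgrade pointwise existence in $(x,\tau)$ to continuous sample paths in $C(\mathbb{R};C_\ast(\mathbb{R}^3))$ by a near/far decomposition. First I would note that $\nu_g\circ T(-\tau)=\nu_g$ (componentwise for $\nu_{\{g_\ell\}}$) implies that, at each fixed $\tau$, the law of $\omega\mapsto F(x,\tau;\omega;g)$ coincides with the law of $\omega\mapsto F(x,0;\omega';g)$ evaluated on a fresh Poisson configuration, so the existence statements at $\tau=0$ transfer to each fixed $\tau$. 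For parts (ii) and (iii) a small additional argument handles the mismatch between the truncation $|x_k|\leq R$ (initial positions) and $|y_k|\leq R$ (positions at time $\tau$): the two truncations differ only on a spherical shell of width $O(|v_k|\tau)$, whose contribution vanishes as $R\to\infty$ by the velocity moment \eqref{Momentg} and the decay in \eqref{phiDec_sInt}--\eqref{phiDec_sOne}.

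For (i) I would give a short direct proof by absolute convergence. Under \eqref{phiDec_slarge} one has $|\nabla\phi(y)|\leq C|y|^{-(s+1)}$ with $s+1>3$, whence
\begin{equation*}
\mathbb{E}\!\left[\sum_{k:\,|x-x_k-v_k\tau|\geq 1}\!\!|\nabla\phi(x-x_k-v_k\tau)|\right]\leq \int g(dv)\!\int_{|y|\geq 1}\!\!\frac{C\,dy}{|y|^{s+1}}<\infty.
\end{equation*}
The contribution of particles with $|x-x_k-v_k\tau|<1$ is a.s. finite over any bounded time window: a Borel--Cantelli argument using the moment bound \eqref{Momentg}, in the spirit of the proof of Proposition~\ref{PoissonEvolution}, shows that only finitely many indices visit $B_1(x)$ during $[-T,T]$, and $\nabla\phi$ is a.e. finite on $B_1(0)$ because $\nabla^2\phi\in L^p(B_1(0))$. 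Parts (ii) and (iii) at fixed $\tau$ then reduce to the conditional convergence results of \cite{NSV}, which exploit the cancellation built into \eqref{phiDec_sInt} and, in case (iii), the electroneutrality condition \eqref{ElNeut}.

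To upgrade pointwise-in-$(x,\tau)$ existence to continuity I would fix a compact $K\subset\mathbb{R}^3$ and $T>0$ and decompose
\begin{equation*}
F = F^{R}_{\mathrm{near}}+F^{R}_{\mathrm{far}},
\end{equation*}
where $F^{R}_{\mathrm{near}}$ sums over particles whose free trajectory meets $K+B_1(0)$ during $[-T,T]$. By the Borel--Cantelli argument of Proposition~\ref{PoissonEvolution}, this index set is a.s. finite for every $R$, so $F^{R}_{\mathrm{near}}$ is a finite sum of $C^2_\ast$ functions, hence in $C(\mathbb{R};C_\ast(\mathbb{R}^3))$ on $K\times[-T,T]$. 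For the tail $F^{R}_{\mathrm{far}}$, a uniform-in-$(x,\tau)$ second moment estimate combined with Chebyshev and Borel--Cantelli along a dyadic sequence $R=2^N$ gives uniform convergence on $K\times[-T,T]$, which yields a continuous contribution. A countable exhaustion of $\mathbb{R}\times\mathbb{R}^3$ then produces the desired modification.

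The main obstacle is the uniform tail estimate in case (iii): convergence there is only conditional and must be controlled uniformly over $(x,\tau)\in K\times[-T,T]$. The key is a dyadic variance estimate: by \eqref{phiDec_sOne} the leading Coulomb-like part of $\nabla\phi$ has expectation zero thanks to \eqref{ElNeut}, while the remainder decays like $|y|^{-(2+\delta)}$, giving $\mathrm{Var}\bigl(F^{2^{N+1}}-F^{2^N}\bigr)\leq C\,2^{-\alpha N}$ for some $\alpha>0$. A Kolmogorov-type chaining argument in $(x,\tau)$, using the $H^s$ regularity of $\nabla\phi$ on $B_1(0)$ together with Sobolev embedding to handle the singular set, then promotes this to uniform control, closing the proof.
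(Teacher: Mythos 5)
Your proposal follows essentially the same route as the paper's sketch for case (iii): a near/far decomposition of the potential, a dyadic variance estimate in which the electroneutrality condition \eqref{ElNeut} kills the leading Coulomb-like mean and leaves a summable remainder, Borel--Cantelli along dyadic scales, and a Sobolev-embedding-type argument to upgrade from square-integrable time correlations to uniform-in-$\tau$ control. The paper's concrete realizations are the cutoff $\phi=\phi_1+\phi_2$ with $\phi_1$ supported in the unit ball, estimates on $\partial_\tau f^{(2)}_j$ (and fractional $\partial_\tau^s$ for the near part) combined with Morrey's theorem, whereas you phrase the last step as a ``Kolmogorov-type chaining argument''; these are the same family of ideas. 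Your preliminary reduction to $\tau=0$ via $\nu_g\circ T(-\tau)=\nu_g$ and the shell comparison between the truncations $\{|x_k|\leq R\}$ and $\{|x_k+v_k\tau|\leq R\}$ is an extra organizational step not present in the paper, but it is correct (and for (ii)/(iii) the shell contribution does vanish, using the moment bound \eqref{Momentg} and, in (iii), the cancellation from \eqref{ElNeut}). One point that deserves to be made explicit in your dyadic variance estimate: because the particles move, a particle at $|x_k|\sim 2^j$ with large velocity can end up near the observation point at time $\tau$, so the $2^{-\alpha N}$ decay is not a consequence of spatial decay of $\nabla\phi$ alone; it requires splitting the velocity integral at $|v|\sim 2^{j/2}$ and invoking \eqref{Momentg}/\eqref{Momentgl}, as the paper does to obtain $|K_j|\lesssim 2^{-\kappa j/2}$. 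You invoke the moment conditions but do not spell out this split.
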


\begin{proof}
We will only sketch the proof of Proposition \ref{RandForFieldMovPart} in
the case (iii) since it is the most involved. The generalization to (ii) and
(iii) can be made along similar lines using ideas analogous to the ones in the proof of
Theorem 2.6 in \cite{NSV}.

We first split $F$ into the contribution due to the close particles and the
long-range contribution.  To this end we split $\phi $ as $\phi =\phi
_{1}+\phi _{2}$ where $\phi _{1}$ is supported in the unit ball and it is
smooth away from the origin. The function $\phi _{2}$ is smooth in the whole
space $\R^3.$ More precisely we introduce a cutoff 
$\eta\in C^{\infty}\left(\R^3\right)$ such that $\eta\left(  x\right)=\eta\left(
\left\vert x\right\vert \right),$ $0\leq\eta\leq1,$ $\eta\left(  x\right)
=\frac 1 2$ if $\left\vert x\right\vert \leq1,$ $\eta\left(  x\right)  =0$ if
$\left\vert x\right\vert \geq 1$. We set %
\begin{equation*}
\phi_{1}\left(x\right)  :=\phi\left(  x \right)
\eta\left(  \left\vert x\right\vert \right)
\ \ ,\ \ \phi_{2}\left(  x \right)  :=\phi\left(  x 
\right)  \left[ 1-\eta\left(  \left\vert x\right\vert \right)  \right]\;.  \ \label{S4E5}%
\end{equation*}
We then define the random force fields $F_{1},\ F_{2}$, as in (%
\ref{T8E2}), using the potentials $\phi _{1},\ \phi _{2}$, so that $F=F_1+F_2$. We estimate first
the contribution $F_{2}.$ To this end we define the random variables%
\begin{eqnarray}
f^{(2)}_{0}\left( x,\tau ;\omega \right)  &=&-\sum_{\ell =1}^{L}\sum_{\left\{
\left\vert x_{k,\ell }\right\vert \leq 1\right\} }Q_{\ell }\nabla \phi
_{2}\left( x-x_{k,\ell }-v_{k,\ell }\tau \right)   \label{fPiecesDef} \\
f^{(2)}_{j}\left( x,\tau ;\omega \right)  &=&-\sum_{\ell =1}^{L}\sum_{\left\{
2^{j-1}<\left\vert x_{k,\ell }\right\vert \leq 2^{j}\right\} }Q_{\ell
}\nabla \phi _{2}\left( x-x_{k,\ell }-v_{k,\ell }\tau \right) \ \ ,\ \
j=1,2,3,...  \notag
\end{eqnarray}
Then, we have
\begin{equation}
-\sum_{\ell =1}^{L}\sum_{\left\{ \left\vert x_{k,\ell }\right\vert \leq
2^{N}\right\} }Q_{\ell }\nabla \phi _{2}\left( x-x_{k,\ell }-v_{k,\ell }\tau
\right) =\sum_{j=0}^{N}f_{j}\left( x,\tau ;\omega \right) \ .  \label{FfarFract}
\end{equation}

Notice that using the probability measure $\nu _{\left\{ g_{\ell }\right\}
_{\ell =1}^{L}}$ defined by means of (\ref{S5E6}) as well as (\ref{ElNeut})
we obtain
\begin{equation}
\mathbb{E}\left[ f^{(2)}_{j}\left( x,\tau ;\omega \right) \right] =0.
\end{equation}
We now observe that 
\begin{equation}\label{eq:corf2jf2m}
\mathbb{E}\left[ \int_{0}^{T}f^{(2)}_{j}\left( x,\tau ;\omega \right) \otimes
f^{(2)}_{m}\left( x,\tau ;\omega \right) d\tau \right]  
=\delta _{j,m}\int_{0}^{T}d\tau \mathbb{E}\left[ f_{j}\left( x,\tau
;\omega \right) \otimes f_{m}\left( x,\tau ;\omega \right) \right] .
\end{equation}
Using that the distributions of particles in the sets $\left\{ 2^{j-1}<\left\vert x_{k,\ell
}\right\vert \leq 2^{j}\right\} $ (and $\left\{ \left\vert x_{k,\ell
}\right\vert \leq 2\right\} $) are mutually independent,  which implies that the random
variables $f^{(2)}_{j}\left( x,\tau ;\omega \right) $ are mutually independent as well, we obtain 
\begin{eqnarray*}
&&\mathbb{E}\left[ f^{(2)}_{j}\left( x,\tau ;\omega \right) \otimes f^{(2)}_{m}\left(
x,\tau ;\omega \right) \right]  \\
&=&\sum_{\ell =1}^{L}\mathbb{E} \left[ \sum_{\left\{ 2^{j-1}<|x_{k,\ell }|
\leq 2^{j}\right\} }\left( Q_{\ell }\right) ^{2}\left[ \nabla \phi
_{2}\left( x-x_{k,\ell }-v_{k,\ell }\tau \right) \otimes \nabla \phi
_{2}\left( x-x_{k,\ell }-v_{k,\ell }\tau \right) \right]\right] .
\end{eqnarray*}
We now use the fact that the Poisson distributions for the different charges have rate $\int_{\mathbb{R}%
^{3}}g_{\ell }\left( dv\right) $ and that they are invariant under translations. Then:
\begin{eqnarray*}
&&\mathbb{E}\left[\sum_{\left\{ 2^{j-1}<|x_{k,\ell }|
	\leq 2^{j}\right\} } \nabla \phi _{2}\left( x-x_{k,\ell }-v_{k,\ell }\tau
\right) \otimes \nabla \phi _{2}\left( x-x_{k,\ell }-v_{k,\ell }\tau \right) %
\right]  \\
&=&\int_{\R^3}dy\int_{\left\{ 2^{j-1}<\left\vert y\right\vert \leq
2^{j}\right\} }\left[ \nabla \phi _{2}\left( x-y-v\tau \right) \otimes
\nabla \phi _{2}\left( x-y-v\tau \right) \right] g_{\ell }\left( dv\right) .
\end{eqnarray*}
Since $\phi _{2}$ is smooth near the origin and $\nabla\phi_2(\vert y\vert )\sim \frac{1}{\left\vert y\right\vert ^{2}}$ for
large distances, we obtain
\begin{eqnarray*}
&&\left\vert \mathbb{E}\left[ \sum_{\left\{ 2^{j-1}<|x_{k,\ell }|
	\leq 2^{j}\right\} }\nabla \phi_{2}\left( x-x_{k,\ell }-v_{k,\ell
}\tau _{1}\right) \otimes \nabla \phi_{2} \left( x-x_{k,\ell }-v_{k,\ell }\tau
_{2}\right) \right] \right\vert  \\
&\leq &\int_{\R^3}dy\int_{\left\{ 2^{j-1}<\left\vert y\right\vert
\leq 2^{j}\right\} }\frac{g_{\ell }\left( dv\right) }{\left( 1+\left\vert
x-y-v\tau _{1}\right\vert ^{2}\right) \left( 1+\left\vert x-y-v\tau
_{2}\right\vert ^{2}\right) }.
\end{eqnarray*}
We assume that $\left\vert x\right\vert $ is bounded, and that $%
\tau _{1},\ \tau _{2}$ are bounded. We estimate separately the contributions due to the region where $\left\vert v\right\vert \leq
2^{\frac{j}{2}}$ and the region where $\left\vert v\right\vert >2^{\frac{j}{2}}$ to obtain
\begin{equation*}
\int_{\R^3}dy\int_{\left\{ 2^{j-1}<\left\vert y\right\vert
	\leq 2^{j}\right\} }\frac{g_{\ell }\left( dv\right) }{\left( 1+\left\vert
	x-y-v\tau _{1}\right\vert ^{2}\right) \left( 1+\left\vert x-y-v\tau
	_{2}\right\vert ^{2}\right) }\leq C\left(\frac{1}{2^{j}}+2^{-\frac{\kappa j}{2}}\right).
\end{equation*}
Hence, from \eqref{eq:corf2jf2m} using the estimates above, we otain
\begin{equation*}
\mathbb{E}\left[ \int_{0}^{T}f^{(2)}_{j}\left( x,\tau ;\omega \right) \otimes
f^{(2)}_{\ell }\left( x,\tau ;\omega \right) d\tau \right]  =\left[
\int_{0}^{T}K_{j}\left( \tau \right) d\tau \right] \delta _{j,\ell }
\end{equation*}
for $j=0,1,2,...,$ where:%
\begin{equation*}
\left\vert K_{j}\left( \tau \right) \right\vert \leq C2^{- \frac{\kappa }{2} j}\ \ \text{for }\tau \in \left[ 0,T\right] 
\end{equation*}
for some $C>0,$ which depends only on $T.$

Arguing similarly, we can obtain estimates for the derivatives of the
functions $f^{(2)}_{j}.$ More precisely, we have the estimate
\begin{equation*}
\mathbb{E}\left[ \int_{0}^{T}\left\vert \partial _{\tau }f^{(2)}_{j}\left( x,\tau
;\omega \right) \right\vert ^{2}d\tau \right] \leq \tilde{C}_{T}2^{-\frac{\kappa }{2} j}\ \ ,\ \ j=0,1,2,...
\end{equation*}%
whence, using Morrey's Theorem, we obtain that%
\begin{equation*}
\mathbb{E}\left[ \sup_{0\leq t\leq T}\left\vert f^{(2)}_{j}\left( x,\tau ;\omega
\right) \right\vert \right] \leq C_{T}2^{-\frac{\kappa }{4}j}\ \ ,\ \ j=0,1,2,...
\end{equation*}

We can then prove, using Borel-Cantelli Theorem as in \cite{NSV}, that the
limit as $N\rightarrow \infty $ of the right-hand side of (\ref{FfarFract})
exists for all $\tau \in \left[ 0,T\right] .$

We now prove the existence of the random force field $F_1$ associated to the
localized part of the potential $\phi _{1}.$ To this end we define random
functions $f^{(1)}_{0},\ f^{(1)}_{j}$ as in (\ref{fPiecesDef}) replacing $\phi _{2}$ by $%
\phi _{1}.$ Arguing as before and using the fact that $\phi _{2}\in H^s(\R^3)$ we obtain the estimate%
\begin{equation*}
\mathbb{E}\left[ \int_{0}^{T}\left\vert \partial _{\tau }^{s}f^{(1)}_{j}\left(
x,\tau ;\omega \right) \right\vert ^{2}d\tau \right] \leq C_{T}2^{- 
\frac{\kappa }{2} j}\ \ ,\ \ j=0,1,2,...
\end{equation*}%
with $p>1,$ where $\partial _{\tau }^{s}$ is the fractional derivative of
order $s$ defined by means of incremental quotiens (see for instance \cite{Ad, Br}). Then, using that $s>\frac{1}{2}$ we obtain%
\begin{equation*}
\mathbb{E}\left[ \sup_{0\leq \tau \leq T}\left\vert f^{(1)}_{j}\left( x,\tau
;\omega \right) \right\vert \right] \leq C_{T}2^{- \frac{\kappa }{4} j}\ \ ,\ \ j=0,1,2,...
\end{equation*}

We can then argue as in the estimate of the random force field $F_2$ 
due to $\phi _{2}$ to prove the convergence of the corresponding series
uniformly for all $\tau \in \left[ 0,T\right] .$ Then the convergence of the
right hand side of (\ref{T8E2}) follows.  
\end{proof}

\begin{remark}
We could also define random force fields as in Proposition \ref%
{RandForFieldMovPart} for interaction potentials $\phi$ satisfying $%
\phi\left( x\right) \sim\frac{A}{\left\vert x\right\vert ^{s}}$ as $%
\left\vert x\right\vert \rightarrow\infty$ for $\frac{1}{2}<s<1,$ as it was
made in \cite{NSV} for stationary particle distributions. The kinetic
equations that can arise for this class of potentials for interacting
particle systems has been discussed in \cite{NVW}.
\end{remark}

\begin{remark}
We can define also singular potentials $\phi$ taking only the values $0$ and 
$\infty$ in the case of hard-sphere interactions with simple modifications
of the previous arguments.
\end{remark}

\begin{remark}
Notice that if we assume additional regularity for $\phi,$ say $\phi\in C^{k}
$ we obtain a similar regularity for the random force field $F.$
\end{remark}

\bigskip 

\subsubsection{Different classes of random force fields}

We now introduce some classes of random force fields which can be obtained
as in Proposition~\ref{RandForFieldMovPart} by means of suitable choices of
potentials $\phi =\Phi _{\eps }$ with $\eps \rightarrow 0.$
The resulting random force fields will be of two different types which will
be denoted as Boltzmann and Landau random force fields respectively. 
We will introduce both of them below for the sake of completeness but in this paper we will just consider Landau random force fields. 

\paragraph{Boltzmann random force fields.}

We will say that a family of random force fields defined by means of (\ref%
{S5E7}) or (\ref{S5E8}) in the sense of Proposition \ref{RandForFieldMovPart}
is a family of Boltzmann random force fields if the functions $\phi
_{\eps }$ have the form:%
\begin{equation}
\Phi _{\eps }\left( x\right) =\Phi \left( \frac{\left\vert
x\right\vert }{\eps }\right) \ \ ,\ \ x\in \R^3\diagdown
\left\{ 0\right\} ,\ \eps >0  \label{S4E4}
\end{equation}%
where $\Phi \left( s\right) $ is compactly supported.

The force fields with the form (\ref{S4E4}) will yield a kinetic limit as $%
\eps\rightarrow0.$ Their key feature is that they yield particle
deflections of order one if the interacting particles approach at distances
of order $\eps.$

\begin{remark}
It would be possible to make less restrictive assumptions on $\Phi ,$ for
instance that $\Phi $ decreasing exponentially or $\Phi \left( s\right) \sim 
\frac{C}{s^{a}}$ as $s\rightarrow \infty $ with $a>1$ (cf. \cite{NSV}). (The
critical exponents depend on the space dimension).
\end{remark}

\paragraph{Landau random force fields.}

The families of Landau random force fields are characterized by the fact
that the interactions generated by each individual particle are small and
tend to zero as $\eps \rightarrow 0.$ We will say that a family of
random force fields with anyone of the forms (\ref{S5E7}) or (\ref{S5E8}) is
a family of Landau random force fields if the function $\Phi _{\eps }$
has the form:%
\begin{equation}
\Phi _{\eps }\left( x\right) =\eps \Phi \left( \frac{%
\left\vert x\right\vert }{L_{\eps }}\right) \ \ ,\ \ x\in \mathbb{R}%
^{3},\ \eps >0  \label{S4E6}
\end{equation}%
where $\Phi \in C^{2}\left( \R^3\right) .$ We can choose the
characteristic length $L_{\eps }$ in one of the two possible ways.
Either:%
\begin{equation}
L_{\eps }\gg 1  \label{S4E7}
\end{equation}%
or%
\begin{equation}
L_{\eps }\sim 1\ \ \text{or \ }L_{\eps }\ll 1 \ . \label{S4E7b}
\end{equation}

In order to be able to define the random force fields $F_{\eps }$ by
means of (\ref{S5E7}) or (\ref{S5E8}) (or in the more precise forms (\ref%
{S9E9}), (\ref{T8E2}) in Proposition \ref{RandForFieldMovPart}) we need to
assume a sufficiently fast decay of $\Phi \left( s\right) $ as $s\rightarrow
\infty $ or to impose suitable electroneutrality conditions if, say (\ref%
{phiDec_sOne}) holds.

\bigskip

We have seen in \cite{NSV} how to obtain a kinetic limit for tagged
particles moving in stationary random force fields. Using analogous
arguments it is possible to derive kinetic equations describing the
evolution of a tagged particle moving in time-dependent random force fields
with the form (\ref{S5E7}) or (\ref{S5E8}) and (\ref{S4E6}) in the limit $%
\eps \rightarrow 0.$ Some additional assumptions on $L_{\eps }$
are also required, in order to obtain uncorrelated velocity deflections over
distances of the order of the mean free path. The case (\ref{S4E7b})
corresponds to the so-called grazing collisions limit. In this case,
although the limit kinetic equation is a Landau equation, we can interpret
the dynamics as a sequence of weak, Boltzmann-like, binary collisions. In
the case in which $L_{\eps }\gg 1$ the resulting limit kinetic
equation is a linear Landau equation.

\subsection{Scatterer distributions in Rayleigh gases \label{ScattRayl}}

\bigskip

In the case of the dynamics (\ref{eq:RayNew}) the evolution of the tagged
particle cannot be described only by means of the action of the random force
field generated by the scatterers. This can be seen most clearly is the case
in which the mass of the tagged particle is the same as the mass of the
scatterers (or comparable) and the interaction between the tagged particle
and the scatterers are as in the case of the Boltzmann random force fields
(cf. (\ref{S4E4})). Also in the case of Landau interaction potentials (cf. \ref{S4E6}) the
dynamics of the tagged particle depends on whether the scatterers are
affected by the tagged particle or not. This may be seen by studying the
binary interaction (cf.~\cite{NVW}, Section 5.2). 

\bigskip

The information that we need to keep about the system of scatterers in order
to compute the evolution of a tagged particle which evolves according to (%
\ref{eq:RayNew}) is the whole configuration of positions and velocities of the
scatterers at any time (i.e. $\left\{ \left( X_{k}\left( t\right)
,V_{k}\left( t\right) \right) \right\} _{k\in\mathbb{N}}$ as well as their
charges, if there are different types of particles, and the form of the
potentials yielding the interactions. If the tagged particle and the
scatterers have a different mass, the ratio between these masses would also
play a role in the determination of the dynamics. We could also consider
different types of scatterers having different masses. The distribution of
velocities of the different scatterers then plays also a role in the
dynamics of the tagged particle $\left( X\left( t\right) ,V\left( t\right)
\right) .$

\bigskip

\section{Computation of the kinetic timescale and limit equation}\label{sec:TimeScale}

As indicated in the Introduction, the paper \cite{NSV} describes how to
obtain kinetic limits for the dynamics of tagged particles moving in a field
of fixed scatterers (Lorentz Gases). There are basically two different types of kinetic
limits, namely the ones leading to the Boltzmann and the Landau equations. It has been seen in \cite{NSV} that it
is possible to define two time scales $T_{BG}$ and $T_{L}$ which are the
characteristic times for the velocity of the tagged particle to experience a
deflection of order one due to a binary collision and to the accumulation of
many small random deflections due to weak interactions with many particles
respectively.

\smallskip

In the case of more complicated particle dynamics, like (\ref{eq:IntNew}) or (\ref%
{eq:RayNew}) we can also define kinetic time scales. We consider first the case
of a tagged particle moving in a Rayleigh gas, i.e. the dynamics of the
tagged particle is described by means of (\ref{eq:RayNew}). In the case of
systems described by means of (\ref{eq:IntNew}) we will compute the kinetic time
scales approximating the dynamics of a tagged particle by means of (\ref%
{eq:TagFric}). This approximation will be obtained in Section \ref{ss:JustR}.
The kinetic time scale associated to equations with the form (\ref{eq:TagFric})
can be readily obtained and this will provide a method to obtain the kinetic
time scale for systems described by (\ref{eq:IntNew}).

In the case of tagged particles moving in a Rayleigh gas (cf.~(\ref{eq:RayNew}))
we can define a characteristic kinetic time and also the mean free path for
a particle moving in a Rayleigh gas as follows. Suppose that $\left(
X,V\right) $ is the position and velocity of a tagged particle whose
dynamics is given by (\ref{eq:RayNew}) where the initial velocity distribution of
the scatterers is given by the measure $g$ in the case of particles of a
single type, or a set of measures $\left\{ g_{\ell }\right\} _{\ell =1}^{L}$
in the case of particles of different types as discussed in Subsection \ref%
{ScattRayl}. We define the mean free path associated to the family of
potentials $\Phi _{\eps }$ and to the set of particle distributions
as the value $T_{\eps }$ such that:%
\begin{equation}
\mathbb{E}\left[ \left\vert V_{\eps }\left( T_{\eps }\right)
-V_{\eps }\left( 0\right) \right\vert ^{2}\right] =\frac{1}{4}%
\left\vert V_{\eps }\left( 0\right) \right\vert ^{2}  \label{MFP}
\end{equation}%
where we assume that $\left\vert V_{\eps }\left( 0\right) \right\vert 
$ is the characteristic speed of the tagged particle.

\bigskip

It has been seen in \cite{NSV} how to estimate $T_{\eps}$ in the case
of static scatterers, splitting the random force field in the sum of a
Boltzmann part and a Landau part. We can then define a Boltzmann-Grad limit
time scale $T_{BG}$ by means of (\ref{MFP}) for the Rayleigh gas associated
to the Boltzmann part of the potential and the Landau time scale $T_{L}$
which is computed applying (\ref{MFP}) to the Landau part of the potential.
In most of the interaction potentials considered in \cite{NSV} one of the
two time scales is much larger than the other as $\eps\rightarrow0.$
Then, given that in the kinetic limit the deflections are additive we obtain:%
\begin{equation*}
T_{\eps}\sim\min\left\{ T_{BG},T_{L}\right\} \text{ as }%
\eps\rightarrow0.
\end{equation*}

A similar decomposition in Boltzmann part and Landau part can be made in the
case of arbitrary Rayleigh gases. We can then use the same approach to
estimate $T_{\eps }$ as in \cite{NSV} in the kinetic limit $%
\eps \rightarrow 0.$ We then recall the decomposition of the
interaction potentials obtained in \cite{NSV}.

Suppose that $\Phi _{\eps }$ is an interaction potential as it appears in (\ref{eq:RayNew}). The potentials
for which there is a Boltzmann part are characterized by the existence of a
collision length $\lambda _{\eps }$ satisfying $\lim_{\eps
\rightarrow 0}\lambda _{\eps }=0$ and such that:%
\begin{equation*}
\lim_{\eps \rightarrow 0}\Phi _{\eps }\left( \lambda
_{\eps }y\right) =\Psi \left( y\right)
\end{equation*}%
uniformly in compact sets of $\R^3\setminus \left\{ 0\right\} ,$
where $\Psi \left( y\right) \neq 0.$ The distance $\lambda _{\eps }$
is the characteristic distance at which the tagged particle experiences
deflections of order one due to the interaction with one of the scatterers.

We then split $\Phi _{\eps }$ as:%
\begin{equation}
\phi _{B,\eps }\left( x\right) =\Phi _{\eps }\left( x\right)
\eta \left( \frac{\left\vert x\right\vert }{M\lambda _{\eps }}\right)
\ \ ,\ \ \phi _{L,\eps }\left( x\right) =\Phi _{\eps }\left(
x\right) \left[ 1-\eta \left( \frac{\left\vert x\right\vert }{M\lambda
_{\eps }}\right) \right]  \label{S1E5}
\end{equation}%
where $\eta \in C^{\infty }\left( \R^3\right) $ is a radially
symmetric cutoff function satisfying $\eta \left( y\right) =1$ if $%
\left\vert y\right\vert \leq 1,\ \eta \left( y\right) =0$ if $\left\vert
y\right\vert \geq 2,\ 0\leq \eta \leq 1.$ We assume that $M$ is a large
number, which eventually might be sent to infinity at the end of the
argument. It would be also possible to take $M=M_{\eps }$ with $%
M_{\eps }\rightarrow \infty $ as $\eps \rightarrow 0$ at a
sufficiently slow rate to control the transition region between Boltzmann
and Landau collisions. In (\ref{S1E5}), $\phi _{B,\eps }$ stands for
Boltzmann and $\phi _{L,\eps }$ for Landau.

\bigskip

We now compute the characteristic time scales $T_{BG}$ and $T_{L}$ which
yield the expected time to have velocity deflections comparable to the
tagged particle velocity itself for the Boltzmann part of the interaction $%
\phi _{B,\eps }$ and the Landau part of the interaction $\phi
_{L,\eps }$ for a tagged particle moving in a Rayleigh gas. We notice that the time scale for the change of the velocity distribution for a Lorentz Gas with moving scatterers is the same time scale of the Rayleigh Gas, in the case in which the mass of the tagged particle and the scatterers  are comparable.  For this reason, to compute the kinetic time scale for the Rayleigh Gas it is enough to compute the time scale for a Lorentz Gas with moving scatterers. This is the strategy we will use in the sections below. 

\bigskip

\subsection{Computation of $T_{BG}$ for a tagged particle in a Rayleigh gas}

\bigskip

We first compute the characteristic time scale to obtain relevant
deflections for a tagged particle moving in a Rayleigh gas generated by
potentials with the form $\phi _{B,\eps }$ in (\ref{S1E5}). To
compute the deflection time using the definition (\ref{MFP}) would require
some involved computations, since it would require to compute the
contributions to the deflection of having multiple collisions, something
that would require some tedious computations. Therefore, instead of using
the definition (\ref{MFP}) we will compute a simpler quantity, namely the
length of a tube in which the probability of finding one particle is of
order one.

\bigskip

We will restrict ourselves to the case in which the velocity of the tagged
particle experiences deflections of order one at distances of order $%
\eps $ of one scatterer. Suppose that a tagged particle moves at
speed $V.$ We assume that the initial positions of the scatterers $x_{j}$,
as well as their velocities $v_{j}$, denoted as $\omega =\left\{ \left(
x_{k},v_{k}\right) \right\} _{k\in J}$ are determined by means of the
Poisson measure defined in (\ref{S4E2}). Suppose without loss of generality
that the initial position of the tagged particle is the origin. Since its
initial velocity is $V$ we have that the position of the tagged particle at
time $t$ is $Vt.$ We then define $T_{BG}$ in the following way:%
\begin{equation*}
T_{BG}=\inf \left\{ \tau :\mathbb{P}\left( \exists k:\min_{0\leq s\leq \tau
}\left\vert Vs-\left( x_{k}+v_{k}s\right) \right\vert \leq \eps
\right) \geq \frac{1}{2}\right\} .
\end{equation*}%
To compute $T_{BG}$, we compute the probability above for each $\tau >0$. To
this end, we decompose the velocity space ${\mathbb{R}}^{3}$ into disjoint
cubes $Q_{\ell }$: 
\begin{equation*}
{\mathbb{R}}^{3}=\cup _{\ell \in {\mathbb{N}}}Q_{\ell },\quad Q_{\ell
}=\{v\in {\mathbb{R}}^{3}:x=c_{\ell }+p,p\in \lbrack 0,{\eps }/\tau
)^{3}\},
\end{equation*}%
for some appropriately chosen centers $c_{\ell }\in {\mathbb{R}}^{3}$. Then
for $\tau >0$ we have 
\begin{align*}
\mathbb{P}\left( \exists k:\min_{0\leq s\leq \tau }\left\vert Vs-\left(
x_{k}+v_{k}s\right) \right\vert \leq \eps \right) & =\sum_{\ell \in {%
\mathbb{N}}}\mathbb{P}\left( \exists k:\min_{0\leq s\leq \tau }\left\vert
Vs-\left( x_{k}+v_{k}s\right) \right\vert \leq \eps ,v_{k}\in Q_{\ell
}\right) \\
& \sim \sum_{\ell \in {\mathbb{N}}}\mathbb{P}\left( \exists k,s\in \lbrack
0,\tau ]:|x_{k}-s(V-c_{\ell })|\leq {\eps },v_{k}\in Q_{\ell }\right)
\\& \sim \tau \lambda _{\eps }^{2},
\end{align*}%
where $\lambda _{\eps }$ is the collision length. Therefore, the
Boltzmann-Grad timescale $T_{BG}$ is given by 
\begin{equation*}
T_{BG}\sim \lambda _{\eps }^{-2},
\end{equation*}%
as in the Lorentz gas with fixed obstacles, see \cite{NSV}. We notice that in the case of potentials of the form \eqref{S4E4}, namely $\Phi _{\eps }\left( x\right) =\Phi \left( \frac{\left\vert x\right\vert }{\eps }\right)$ with $\Phi \left( s\right)$ compactly supported, the collision length $\lambda_{\eps}=\eps$.

\bigskip

\subsection{Computation of $T_{L}$ for a tagged particle in a Rayleigh gas \label{CharTimeRayl}}

\bigskip

We now compute the characteristic time in which the deflections of the
velocity are of order one for one tagged particle moving in a time-dependent
Landau random force field with the form (\ref{S5E7}) (or (\ref{S5E8}))
generated by potentials with the form (\ref{S4E6}) with $\Phi \left(
r\right) \sim \frac{1}{r^s}$ as $r\rightarrow \infty$, $s\geq 1$. To this end we compute
the variance of the velocity deflections experienced by a particle moving in
a straight line during the time $T.$ This deflection is given by:%
\begin{equation}\label{def:deflection}
d_{\eps }\left( T;\omega \right) =\int_{0}^{T}F_{\eps }\left(
Vt,t;\omega \right) dt
\end{equation}%
where we assume that the initial position of the tagged particle is $X=0$
and the velocity is $V.$ We have $\mathbb{E}\left[ d_{\eps }\left(
T\right) \right] =0.$ We now compute:%
\begin{equation}
\mathbb{E}\left[ d_{\eps }\left( T\right) \otimes d_{\eps
}\left( T\right) \right] =\int_{0}^{T}dt_{1}\int_{0}^{T}dt_{2}\mathbb{E}%
\left[ F_{\eps }\left( Vt_{1},t_{1};\omega \right) \otimes
F_{\eps }\left( Vt_{2},t_{2};\omega \right) \right] \ . \label{T8E3}
\end{equation}

In order to indicate how to compute (\ref{T8E3}) we consider the case of
random force fields (\ref{S5E7}) with $\Phi =\Phi _{\eps }$ given by:%
\begin{equation*}
\Phi _{\eps }\left( x\right) =\eps \Phi \left( \frac{x}{%
L_{\eps }}\right)
\end{equation*}%
where $\Phi \left( \xi \right) $ decreases sufficiently fast as $\left\vert
\xi \right\vert \rightarrow \infty ,$ for instance as in (\ref{phiDec_slarge}%
). In order to compute (\ref{T8E3}) we approximate it assuming that we have $%
N$ particles independently distributed in $B_{R_{N}}\left( 0\right) \times 
\R^3\subset \ \mathbb{R}^{6}$ with $N=\frac{4\pi \left(
R_{N}\right) ^{3}}{3}$ with the probability density $\frac{1}{\left\vert
B_{R_{N}}\left( 0\right) \right\vert }g\left( v\right) .$ (It would be also
possible to use a macrocanonical distribution, but the result is equivalent
in the limit $N\rightarrow \infty $). We will denote as $d_{\eps
}^{N}\left( T\right) $ the corresponding deflections and:
\begin{equation*}
F\left( x,\tau ;\omega ;g\right) =-\sum_{k=1}^{N}\nabla \Phi _{\eps
}\left( x-x_{k}-v_{k}\tau \right)\ .
\end{equation*}
We now compute \eqref{T8E3} and using the equation above we obtain
\begin{align*}
&\mathbb{E}[ d_{\eps }^{N}( T) \otimes d_{\eps }^{N}\left(
T\right) ] =\sum_{k,j=1}^{N}\int_{0}^{T}dt_{1}\int_{0}^{T}dt_{2}%
\mathbb{E}\left[ \nabla \Phi _{\eps }\left(
Vt_{1}-x_{k}-v_{k}t_{1}\right) \otimes \nabla \Phi _{\eps }\left(
Vt_{2}-x_{j}-v_{j}t_{2}\right) \right] \\
&=\sum_{k=1}^{N}\int_{0}^{T}dt_{1}\int_{0}^{T}dt_{2}\mathbb{E}\left[ \nabla
\Phi _{\eps }\left( Vt_{1}-x_{k}-v_{k}t_{1}\right) \otimes \nabla
\Phi _{\eps }\left( Vt_{2}-x_{k}-v_{k}t_{2}\right) \right] \\
&=\frac{N}{\left\vert B_{R_{N}} \right\vert }\int_{0}^{T}dt_{1}%
\int_{0}^{T}dt_{2}\int_{B_{R_{N}} }dy\int_{\R^3}dvg\left( v\right)
\nabla \Phi _{\eps }\left( \left( V-v\right) t_{1}-y\right) \otimes
\nabla \Phi _{\eps }\left( \left( V-v\right) t_{2}-y\right) \\
&=\int_{0}^{T}dt_{1}\int_{0}^{T}dt_{2}\int_{B_{R_{N}}\left( 0\right)
}dy\int_{\R^3}dvg(v) \nabla \Phi _{\eps }\left( \left(
V-v\right) t_{1}-y\right) \otimes \nabla \Phi _{\eps }\left( \left(
V-v\right) t_{2}-y\right).
\end{align*}
Taking the limit $N\rightarrow \infty $ we obtain:%
\begin{align*}
&\mathbb{E}[ d_{\eps }( T) \otimes d_{\eps }\left( T\right) ]
\\&=\int_{0}^{T}dt_{1}\int_{0}^{T}dt_{2}\int_{\R^3}dy\int_{\mathbb{R}%
^{3}}dvg\left( v\right) \nabla \Phi _{\eps }\left( \left( V-v\right)
t_{1}-y\right) \otimes \nabla \Phi _{\eps }\left( \left( V-v\right)
t_{2}-y\right) \\
&=\frac{\eps ^{2}}{\left( L_{\eps }\right) ^{2}}%
\int_{0}^{T}dt_{1}\int_{0}^{T}dt_{2}\int_{\R^3}dy\int_{\mathbb{R}%
^{3}}dvg\left( v\right) \nabla _{\xi }\Phi \left( \frac{y-\left( V-v\right)
\left( t_{1}-t_{2}\right) }{L_{\eps }}\right) \otimes \nabla _{\xi
}\Phi _{\eps }\left( \frac{y}{L_{\eps }}\right) \\
&=\eps ^{2}\left( L_{\eps }\right)
^{2}\int_{0}^{T}dt_{1}\int_{-\frac{t_{1}}{L_{\eps }}}^{\frac{T-t_{1}}{%
L_{\eps }}}K\left( V;\tau \right) d\tau
\end{align*}%
where:%
\begin{equation*}
K\left( V;\tau \right) =\int_{\R^3}d\xi \int_{\mathbb{R}%
^{3}}dvg\left( v\right) \nabla _{\xi }\Phi \left( \xi +\left( V-v\right)
\tau \right) \otimes \nabla _{\xi }\Phi \left( \xi \right).
\end{equation*}

We will now assume that $T\gg L_{\eps }.$ Then, for most of the
values of $t_{1}$ in the integral we have that $\frac{T-t_{1}}{%
L_{\eps }}\gg 1$ and $\frac{t_{1}}{L_{\eps }}\gg 1.$ On the
other hand, the function $K\left( V;\tau \right) $ is integrable in $\tau $
under the assumption (\ref{phiDec_slarge})-(\ref{phiDec_sInt}), namely for $s>1$. There are some boundary effects
in the integrals if $t_{1}$ or $\left( T-t_{1}\right) $ are of order $%
L_{\eps }.$ We then obtain the approximation:%
\begin{equation*}
\mathbb{E}\left[ d_{\eps }\left( T\right) \otimes d_{\eps
}\left( T\right) \right] \sim \eps ^{2}\left( L_{\eps }\right)
^{2}T\int_{-\infty }^{\infty }K\left( V;\tau \right) d\tau.
\end{equation*}
Hence, the characteristic time scale is 
$T=\frac{1}{\eps ^{2}\left( L_{\eps }\right) ^{2}}$ 
and the diffusion coefficient for the tagged particle is the matrix 
$
D\left( V\right) =\int_{-\infty }^{\infty }K\left( V;\tau \right) d\tau
$ 
which is diagonal if $g$ is isotropic.  
Notice that in order to obtain a
dynamics without correlations (and also in order to have self-consistency of
the previous argument), we need to have $T\gg L_{\eps },$ i.e.:%
\begin{equation*}
\eps ^{2}\left( L_{\eps }\right) ^{3}\ll 1\ .
\end{equation*}

It is interesting to compute the characteristic time in the case of
potentials decreasing as Coulombian potentials $\Phi(r)\sim \frac 1 r$ as $r\to\infty$. In this case we need to take
at least two types of charges in order to have electroneutrality. We
consider a family of rescaled potentials $\Phi_\ep(x)=\ep \Phi(x)$ and random force fields that are the limit as $N\rightarrow \infty $ of
fields with the form:%
\begin{equation*}
F\left( x,\tau ;\omega ;g\right) =-\eps \sum_{k=1}^{N}\nabla \Phi
\left( x-x_{k}-v_{k}\tau \right) +\eps \sum_{k=1}^{N}\nabla \Phi
\left( x-y_{k}-w_{k}\tau \right)
\end{equation*}%
where the particles $\left\{ \left( x_{k},v_{k}\right) \right\} $ and $%
\left\{ \left( y_{k},w_{k}\right) \right\} $ are chosen independently and $%
\Phi $ is a smooth function satisfying:%
\begin{equation*}
\Phi \left( x\right) \sim \frac{1}{\left\vert x\right\vert }\text{ as }%
\left\vert x\right\vert \rightarrow \infty \text{ and }\nabla \Phi \left(
x\right) \sim -\frac{x}{\left\vert x\right\vert ^{3}}\text{ as }\left\vert
x\right\vert \rightarrow \infty.
\end{equation*}

We can then compute the variance of the deflections arguing as above. We
then obtain:%
\begin{align*}
\mathbb{E}\left[ d_{\eps }\left( T\right) \otimes d_{\eps
}\left( T\right) \right] 
&=2\eps ^{2}\int_{0}^{T}dt_{1}\int_{-t_{1}}^{T-t_{1}} ds K\left(
V;s\right),
\end{align*}%
where:%
\begin{equation*}
K\left( V;s\right) =\int_{(\mathbb{R}^3)^2}dy dv g\left( v\right) \nabla
\Phi \left( y+\left( v-V\right) s\right) \otimes \nabla \Phi \left( y\right).
\end{equation*}

Notice that the function $K\left( V;s\right) $ is well defined for each $%
s\in \mathbb{R}$. We can compute the asymptotics of $K\left( V;s\right) $ as 
$\left\vert s\right\vert \rightarrow \infty .$ We can assume without loss of
generality that $s>0$ since the case $s<0$ would be similar. We rescale $%
y=s\xi$, then: 
\begin{equation*}
K\left( V;s\right) =s^{3}\int_{\R^3}d\xi \int_{\mathbb{R}%
^{3}}dvg\left( v\right) \nabla _{y}\Phi \left( s\left( \xi +\left(
v-V\right) \right) \right) \otimes \nabla \Phi \left( s\xi \right).
\end{equation*}

Taking the limit as $s\rightarrow \infty $ and using the asymptotics of $%
\nabla \Phi \left( y\right) $ we obtain:
\begin{equation}
K\left( V;s\right) \sim \frac{1}{s}\int_{\R^3}d\xi \int_{\mathbb{R}%
^{3}}dvg\left( v\right) \frac{\left( \xi +\left( v-V\right) \right) }{%
\left\vert \xi +\left( v-V\right) \right\vert ^{3}}\otimes \frac{\xi }{%
\left\vert \xi \right\vert ^{3}}\ .  \label{Kdef}
\end{equation}

This integral exists. Indeed, the integrals on $\xi $ are well defined if $%
v\neq V.$ On the other hand, if $v\rightarrow V$ we obtain a divergence like 
$\frac{1}{\left\vert v-V\right\vert }.$ We will assume that $g\left(
v\right) $ is smooth enough to have integrability of $\int_{\R^3}dv%
\frac{g\left( v\right) }{\left\vert v-V\right\vert }.$ Then the integral is
well defined and we obtain the asymptotics:%
\begin{equation}
K\left( V;s\right) \sim \frac{\Gamma \left( V,\func{sgn}\left( s\right)
\right) }{s}\text{ as }\left\vert s\right\vert \rightarrow \infty
\label{Gammadef}
\end{equation}%
for some function $\Gamma :\R^3\times \left\{ -1,+1\right\}
\rightarrow M_{3}\left( \R^3\right) .$ We then obtain:%
\begin{equation}
\mathbb{E}\left[ d_{\eps }\left( T\right) \otimes d_{\eps
}\left( T\right) \right] \sim 2\eps
^{2}\int_{0}^{T}dt_{1}\int_{-t_{1}}^{T-t_{1}}ds \frac{\Gamma \left( V,\func{sgn}%
\left( s\right) \right) }{\left\vert s\right\vert } \sim D\left( V\right)
\eps ^{2}T\log \left( T\right)  \label{DiffRay}
\end{equation}%
where $D\left( V\right) \in M_{3}\left( \R^3\right) $ is a
nonnegative matrix.

\bigskip

This yields the characteristic time scale:%
\begin{equation}
T_L
\sim \frac{1}{\eps ^{2}\log \left( \frac{1}{%
\eps }\right) }  \label{TimeCoulLog}
\end{equation}%
which is exactly the same time scale obtained in \cite{NSV} for random force
fields generated by static distributions of particles (Lorentz gases).  
The logarithmic term $\log \left( \frac{1}{\eps }\right) $ is a well
known correction appearing in systems with Coulombian interactions. It is
usually termed as Coulombian logarithm (cf.~\cite{LL2}).

%
%

\bigskip


\section{Kinetic limits for approximate models \label{KineLimit}}

In this Section we show how to approximate the dynamics of the systems \textcolor{blue}{\eqref{eq:IntNew}-}\eqref{eq:RayNew} by means of an equation of the form (\ref{eq:TagFric}), at
least during small macroscopic times in suitable scaling limits. The method
can be applied in systems in which the range of the potential is much larger
than the average distance between particles. The rationale behind the method
is to decompose the force made by the scatterers on the tagged particle in a
friction term, which is the reaction to the force made by the tagged
particle on the scatterers and yields their deflection, plus a random force
term which is the sum of the forces produced by the scatterers in the tagged
particle assuming that they are not affected by it. The friction term is due
to the fact that the tagged particle is moving against the mean velocity of
the surrounding medium.

We will restrict our analysis
to three types of potentials, namely weak potentials with a finite range
much larger than the distance between particles, potentials behaving for
large distances as Coulomb potentials and finally the case of so-called
``grazing collisions" in which the interactions between particles are very
weak and have a range smaller or similar to the particle distance.

\subsection{Derivation of the kinetic equations for Rayleigh gases and interacting particle systems by means of the evolution of a tagged particle} \label{KinEquInteractCase}

Our strategy for obtaining, in a suitable scaling limit, kinetic equations from particle systems, both for Rayleigh gases and interacting particle systems, consists in approximating the evolution of a tagged particle by:
\begin{equation} \label{eq:Langevin}
\frac{dx}{dt}=w\ ,\ \ \ \frac{dw}{dt}=\eta_g \left( t;w\right) -\Lambda
_{g}\left( w\right) \ , 
\end{equation}
where $\eta_g$ is a Gaussian noise with
\begin{align}
	\E[\eta_g(t,w)]&= 0,\\
	\E [\eta_g(t_1,w)\eta_g(t_2,w)] &= D_g(w) \delta(t_1-t_2).
\end{align}
The function $g$ describes the velocity distribution of scatterers in the Rayleigh gas case, and the initial one-particle function in the case of interacting particle systems. 
Therefore, if we denote the probability density describing the distribution
of the tagged particle as $f\left( x,w,t\right) $ we obtain the following
evolution equation for it%
\begin{equation}
\partial _{t}f+w\cdot \partial _{x}f=\partial _{w}\cdot \left( \frac{1}{2}%
D_{g}\left( w\right) \partial _{w}f+\Lambda _{g}\left( w\right) f\right) .
\label{KinFastDec}
\end{equation}

In the case of interacting particle systems, an approximation of the form \eqref{eq:Langevin} is valid only for infinitesimal macroscopic times with $g=f_0$.
For positive macroscopic times, we need to take into account the change of the one-particle distribution function $f(t,v)$. The kinetic equation can then be obtained through the closure assumption $g(t,x,v)=f(t,x,v)$. Therefore, the kinetic equation reads:
\begin{equation}
\partial _{t}f+w\cdot \partial _{x}f=\partial _{w}\cdot \left( \frac{1}{2}%
D_{f}\left( w\right) \partial _{w}f+\Lambda _{f}\left( w\right) f\right) .
\label{Nonlinear}
\end{equation}
Notice however, that the dependence of $\Lambda_g$ and $D_g$ in the case of interacting particle systems is in general different from the form of the same operators for the Rayleigh gas.

\paragraph{Rayleigh gases}

We first present the kinetic equations and the corresponding scaling
limits that we will obtain in this Section for a
tagged particle evolving according to the equations (\ref{eq:RayNew}) (i.e.~Rayleigh gases).

\smallskip 

\begin{itemize}
\item In the case of potentials with the form%
\begin{equation}\label{pot1}
\Phi _{\eps }\left( x\right) =\eps \Phi \left( \frac{%
\left\vert x\right\vert }{L_{\eps }}\right) \ \ \text{with\ \ }%
L_{\eps }\gg 1
\end{equation}%
and $\Phi \left( s\right) $ decreasing faster than $\frac{1}{s^{\alpha }}$
with $\alpha >1$ (cf. (\ref{S4E6}), (\ref{S4E7})) we have seen that the
distribution of particle velocities $f\left( x,w,t\right) $ evolves
according to the equation 
\begin{equation*}
\partial _{t}f+v\cdot \partial _{x}f=\partial_{v}\cdot \left( \frac{1}{2}%
D_{g}\left( v\right) \partial _{v}f+\Lambda _{g}\left( v\right) f\right) 
\end{equation*}%
where the precise form of the diffusion coefficient $D_{g}\left( v\right) $ and the friction term $\Lambda _{g}\left(
v\right) $ will be given in {Theorem \ref{thm:FiniteRange}}. The formula relating the microscopic $%
\tilde{t}$ and macroscopic variable $t$ is $t=\left( \theta _{\eps
}\right) ^{2}\tilde{t}$ with $\theta _{\eps }=\eps \left(
L_{\eps }\right) ^{\frac{3}{2}}.$
\end{itemize}

\bigskip 

\begin{itemize}
\item In the case of Coulombian potentials, i.e potentials with one of the
forms%
\begin{equation}
\Phi _{\eps }\left( x\right) =\Phi \left( \frac{\left\vert
x\right\vert }{\eps }\right) \ \ \text{or\ \ }\phi _{\eps
}\left( x\right) =\eps \Phi \left( \left\vert x\right\vert \right) 
\label{CoulPot}
\end{equation}%
with $\Phi \left( s\right) \sim \frac{1}{s}$ as $s\rightarrow \infty $ (cf. (%
\ref{S8E5})) we obtain the following kinetic equation  
\begin{equation*}
\partial _{t}f+v\cdot \partial _{x}f=\partial _{v}\cdot \left( \frac{1}{2}%
D_{g}\left( v\right) \partial _{v}f+\Lambda _{g}\left( v\right) f\right) 
\end{equation*}%
where the precise form of the diffusion coefficient $D_{g}\left( v\right) $ and the friction term $\Lambda _{g}\left(
v\right) $ will be given in Theorem \ref{thm:Coulomb}. The relation between the microscopic time
scale $\tau $ and the macroscopic time scale is given by $\tau =\frac{t}{%
2\eps ^{2}\log \left( \frac{1}{\eps }\right) }.$
\end{itemize}

\begin{itemize}
\item In the case of potentials with the form (weak coupling case)%
\begin{equation}
\Phi _{\eps }\left( x\right) =\eps \Phi \left( \frac{%
\left\vert x\right\vert }{\ell _{\eps }}\right) \ \ ,\ \ \ell
_{\eps }\lesssim 1  \label{WeakCoupling}
\end{equation}%
where $\Phi \left( s\right) $ decreases faster than $\frac{1}{s^{\alpha }}$
with $\alpha >1$ we can approximate the distribution of
particle velocities $f$ by means of the solutions of the kinetic equation 
\begin{equation*}
\partial _{t}f+v\cdot \partial _{x}f=\partial _{v}\cdot \left( \frac{1}{2}%
D_{g}\left( v\right) \partial _{v}f+\Lambda _{g}\left( v\right) f\right) 
\end{equation*}%
where the precise form of the diffusion coefficient $D_{g}\left( v\right) $ and the friction term $\Lambda _{g}\left(
v\right) $ will be given in Theorem \ref{thm:Grazing}.
\end{itemize}

\paragraph{Interacting particle systems}

We now present the precise scaling limit in which a system of particles
described by the system of equations (\ref{eq:IntNew}) can be approximated by
means of Landau and Balescu-Lenard equations.
\smallskip 

\begin{itemize}
	\item In the case of interaction potentials with the form%
	\begin{equation*}
	\Phi _{\eps }\left( x\right) =\eps \Phi \left( \frac{%
		\left\vert x\right\vert }{L_{\eps }}\right) \ \ \text{with\ \ }%
	\eps \left( L_{\eps }\right) ^{3}\rightarrow 1
	\end{equation*}%
	and $\Phi \left( s\right) $ decreasing faster than $\frac{1}{s^{\alpha }}$
	with $\alpha >1$ (cf. (\ref{S4E6}), (\ref{S4E7})) the kinetic equation
	yielding the evolution of the distribution function $f$ is given by the
	Balescu-Lenard equation 
	\begin{equation*}
	\partial _{t}f+v\cdot \partial _{x}f=\partial _{v}\left( \frac 1 2 D_{f}\left(
		v\right) \partial _{v}f +\Lambda_{f}\left( v\right) f\right) 
	\end{equation*}%
	where the precise form of the diffusion coefficient $D_{f}\left( v\right)$ and the friction term $\Lambda_{f}\left(v\right)$ will be given in Theorem  				\ref{thm:FiniteRange} with $f=g$. The formula relating the
	microscopic time scale $\tilde{t}$ with the macroscopic time scale $t$ is $%
	\tilde{t}=\eps ^{2}\left( L_{\eps }\right) ^{3}t.$ 
\end{itemize}

\begin{itemize}
	\item In the case of Coulombian potentials as in (\ref{CoulPot}) (cf. also (%
	\ref{S8E5})) the evolution of the distribution function $f$ is given by the
	Landau equation 
	\begin{equation*}
	\partial _{t}f+v\cdot \partial _{x}f=\partial _{v}\left(\frac{1}{2} D_{f}\left(
	v\right) \partial _{v}f+\Lambda_{f}\left( v\right) f\right) 
	\end{equation*}%
	 where the precise form of the diffusion coefficient $D_{f}\left( v\right)$ and the friction term $\Lambda_{f}\left(v\right)$ will be given in Theorem \ref{thm:Coulomb} with $f=g$. 
	 The microscopic time scale $%
	\tau $ and the macroscopic time scale are related by means of the formula $%
	\tau =\frac{t}{2\eps ^{2}\log \left( \frac{1}{\eps }\right) }.$ The fact that the kinetic equation for Coulombian potentials is the Landau equation instead of the Balescu-Lenard equation, due to the presence of the Coulombian logarithm, has been originally noticed by Landau and Lenard (cf.~\cite{LL2, Le}).
\end{itemize}

\begin{itemize}
	\item We consider one version of the weak coupling case with potentials of
	the form (\ref{WeakCoupling}) with $\ell _{\eps }=1.$ The kinetic
	equation describing the evolution of the particle distribution $f$ is the
	following version of the Landau equation
	\begin{equation*}
	\partial _{t}f+v\cdot \partial _{x}f=\partial _{v}\left( \frac{1 }{2 }D_{f}(v) %
	\partial _{v}f+\Lambda_{f}\left( v\right) f\right) 
	\end{equation*}
	where the precise form of the diffusion coefficient $D_{f}\left( v\right)$ and the friction term $\Lambda_{f}\left(v\right)$ will be given in Theorem \ref{thm:Grazing} with $f=g$. 
	The relation between the microscopic time scale $\tilde{t}
	$ and the macroscopic time scale $t$ is $t=\eps ^{2}L_{\eps }%
	\tilde{t}.$
\end{itemize} 

\begin{remark}
	A few rigorous results deriving kinetic equations of Landau or Balescu-Lenard form for interacting particle systems have been recently obtained. In these results the particles interact by means of weak potentials with a range comparable or larger than the average particle distance. See for instance \cite{BPS, Du, DuS} for some partial results for the full particle system and the time of validity of the kinetic equation is shorter than the macroscopic time scale. In \cite{VW1, W2} the kinetic equation is derived for macroscopic times of order one but the particle system is approximated by a truncated BBGKY hierarchy.  
\end{remark}



\subsection{Main results}

We will now provide the precise form of the diffusion coefficient for a tagged particle in a Rayleigh gas and the diffusion coefficient for an interacting particle system. To this end, we consider approximate equations for the fluctuations of the particle density. The justification for this approximation will be given in Section~\ref{ss:JustR}. Both the approximation and the resulting diffusion coefficient depend on the scaling and choice of potential, therefore we consider separately potentials with finite range, Coulomb interaction, and grazing collision limits.

In the following, for a function or measure $h(x,v)$ on the phase space, we denote the
associated spatial density by $\rho[h]$ by: 
\begin{equation}
\rho(x)= \rho [ h]\left( x\right) =\int_{\mathbb{R}^{3}}h\left( x,w\right) dw.
\label{eq:Spdensity}
\end{equation}

\subsubsection{Particle interactions with finite range much larger than the
	particle distance } \label{sec:finRanTh}

Let $\theta>0$, we consider the following system that, as we will justify in Section \ref{ss:JustR}, is an approximation of the original particle systems for times scales much shorter than the mean free flight times. More precisely, we argue that on an intermediate scale $\tilde{t}=\frac{\tau }{L_{\eps }}.$ and  $\xi =\frac{X}{L_{\eps }}$ between the microscopic scale and the kinetic scale, the empirical measure can be approximated by a Gaussian process. A careful linearization then leads to a system of the form
\begin{equation}
\frac{d\xi }{d\tilde{t}}=V_0\ \ ,\ \ \frac{dV}{d\tilde{t}}=\theta F_g(\xi, \tilde{t};\omega)  \label{eq:MotionFiniteCombined},
\end{equation}
where
\begin{equation}
F_g(\xi, \tilde{t};\omega):=-\int_{\R^3}\nabla _{\xi }\Phi \left( \xi -\eta
\right) \tilde{\rho}\left( \eta ,\tilde{t}\right) d\eta .  \label{eq:ForceFiniteCombined}
\end{equation}
Here $\tilde{\rho}\left( \eta ,\tilde{t}\right)=\tilde{\rho}\left[\zeta\left( \cdot,\tilde{t}\right)\right](\eta)$ with $\zeta$ a Gaussian noise defined as follows.  
Let $N\left(\cdot,\cdot\right)$ be a Gaussian random field such that
\begin{equation}
\mathbb{E}\left[ N\left( y,w\right) \right] =0\ \ ,\ \ \mathbb{E}\left[
N\left( y_{a},w_{a}\right) N\left( y_{b},w_{b}\right) \right]
=g\left( w_{a}\right) \delta \left( y_{a}-y_{b}\right) \delta \left(
w_{a}-w_{b}\right).  \label{S7E1}
\end{equation}
An approximation of this form can (formally) be obtained both for particle systems of the form  \eqref{eq:RayNew} describing a Rayleigh Gas and interacting particle systems \eqref{eq:IntNew}.
The Gaussian noise $\zeta$ in the Rayleigh-case is given by
\begin{align}
(\partial _{\tilde{t}}+w\cdot \nabla_{y})\zeta \left( y,w,\tilde{t}\right)
-\theta \nabla _{y}\Phi \left( y-\xi \right) \cdot \nabla
_{w}g\left( w\right) & =0,\quad \zeta \left( y,w,0\right) =N\left(
y,w\right) , \label{eq:zetaFiniteRayleigh} 
\end{align}
whereas in the interacting case it is given by
\begin{equation}
\begin{aligned}
(\partial _{\tilde{t}}+w\cdot \nabla _{y})\zeta \left( y,w,\tilde{t}\right) -%
\left[ \theta \nabla
_{y}\Phi \left( y-\xi \right) +\sigma (\nabla _{y}\Phi \ast \tilde{\rho%
})\left( y,\tilde{t}\right) d\eta \right] \cdot \nabla _{w}g\left( w\right)
&=0,\\ \zeta \left( y,w,0\right) &=N\left(
y,w\right)  .
\end{aligned} \label{eq:zetaFiniteInteracting} 
\end{equation}

We obtain the following Theorem. 
\begin{theorem}\label{thm:FiniteRange} 
We consider the system \eqref{eq:MotionFiniteCombined}-\eqref{eq:ForceFiniteCombined}. Let the interaction
potential be a radial function $\Phi \left( s\right)\sim \frac{1}{s^{a}}$  with $a>1$ as $s\rightarrow \infty$ (cf.~\eqref{phiDec_slarge}, \eqref{phiDec_sInt}).  Under the following rescaling:
\begin{equation*}
t=\theta^{2}\tilde{t}\ \ ,\ \ x= \theta^{2}\xi\ \ ,\ \ v=V_0
\end{equation*}
and setting $F_g(\xi,\tilde{t})=\mathcal{F}_g(x,t)$ we then obtain for $t_1>0$, $t_2>0$
\begin{align*}
&\ \lim_{\theta\to 0}\mathbb{E}\left[ \mathcal{F}_g(vt, t;\omega)\right]=-\Lambda_{g}(v), \\  
&\ \lim_{\theta\to 0}\mathbb{E}\left[ (\mathcal{F}_{g}\left( v t_1,{t}_{1}\right)+\Lambda_{g}(v)) \otimes
(\mathcal{F}_{g}\left( v t_2, {t}_{2}\right)+\Lambda_{g}(v)) \right] 
=D_{g}\left( v\right) \delta \left( t_{1}-t_{2}\right) . 
\end{align*}
More precisely, 
\begin{itemize}
\item[(I)]{\textbf{Rayleigh Gas case}}\\ 
The diffusion coefficient $D_{g}$ and the friction term $\Lambda_g $ are given by:%
\begin{align}
D_{g}\left( v\right) &=\pi \int_{\R^3}\int_{\R^3}  \left( k\otimes k\right) |\hat{\Phi}(k)|^2 \delta(k\cdot (v-w)) g(w) \ud{w}\ud{k} ,  \label{eqthm:Rayleigh} \\
\Lambda_g(v)&= -\pi \int_{\R^{3}} \int_{\R^{3}} (k\otimes k) |\hat{\Phi}(k)|^2 \delta (k\cdot(v-w))\nabla g(w) \ud{w} \ud{k} 
\end{align}
\item[(II)]{\textbf{Interacting particle case}}\\
We assume that the function $G_{\sigma }\left( y,w,w_{0},\tilde{t}\right) $ defined as the solution of the problem \eqref{T2E2}-\eqref{T2E4} below is such that $\left\vert \sup_{y,w_0}\int_{\R^3}G_{\sigma }\left( y,w,w_{0},\tilde{t}\right)dw\right\vert \leq e^{-\kappa\tilde{t}}$, $\kappa>0$. Then, the diffusion coefficient $D_{g}$ and the friction term $\Lambda_g $ are given by:
\begin{align}
D_{g}\left(v\right) &=\pi \int_{\R^{3}}   \int_{\R^{3}}%
\frac{\left( k\otimes k\right) \left\vert \hat{\Phi}(k)
	\right\vert ^{2}\delta(k\cdot (v-w))}{\left\vert \Delta _{\sigma }\left( k,-i\left( k\cdot w\right)
	\right) \right\vert ^{2}} g\left( w\right)\ud{k}\ud{w}  \ , \label{eqthm:nonlin1}\\
\Lambda_g(v)&= -\pi \int_{\R^3}\int_{\R^3}
 \frac{ (k \otimes k) \left\vert \hat{\Phi}\left( k\right)
 	\right\vert ^{2}\delta(k\cdot (v-w))}{|\Delta_\sigma(k,-i\left( k\cdot w\right)|^2}\nabla g(w) \ud{k} \ud{w} . \label{eqthm:nonlin2}
\end{align}
\end{itemize}
\end{theorem}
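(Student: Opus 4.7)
The plan is to solve \eqref{eq:zetaFiniteRayleigh} (resp.\ \eqref{eq:zetaFiniteInteracting}) by Fourier transform in the spatial variable $y$ and then read off $\mathcal{F}_g$ from the convolution $-\nabla\Phi\ast\tilde\rho$. Since $\xi(\tilde t)=V_0\tilde t$ to leading order in $\theta$, the Fourier transform $\hat\zeta(k,w,\tilde t)$ satisfies a first-order ODE in $\tilde t$ with $ik\cdot w$ as zeroth-order coefficient, which is solved explicitly by characteristics. This decomposes $\hat\zeta$ into (a) a "free" part $e^{-ik\cdot w\tilde t}\hat N(k,w)$ carrying all the randomness, and (b) a deterministic response driven by $ik\hat\Phi(k)e^{-ik\cdot V_0 s}\cdot\nabla_w g(w)$. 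Integrating in $w$ gives $\hat{\tilde\rho}(k,\tilde t)$, and multiplication by $-ik\hat\Phi(k)$ followed by evaluation at $\xi=V_0\tilde t$ yields $F_g(V_0\tilde t,\tilde t)$.

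For the mean $\mathbb{E}[\mathcal{F}_g(vt,t)]$, only the deterministic part survives. After the rescaling $\tilde t=t/\theta^2$, the time integral $\int_0^{\tilde t}e^{ik\cdot(V_0-w)s}\,ds$ converges in the distributional sense to $\pi\delta(k\cdot(V_0-w))+i\,\mathrm{P.V.}\,(k\cdot(V_0-w))^{-1}$ as $\theta\to 0$; the principal-value part is odd in $k$ (using that $|\hat\Phi|^2$ is even) and drops out, yielding exactly the claimed formula for $-\Lambda_g$. For the covariance, the free part contributes $e^{-ik\cdot w\tilde t}\hat N(k,w)$, whose two-point function $(2\pi)^3 g(w_1)\delta(w_1-w_2)\delta(k_1+k_2)$ turns the two-point function of $F_g$ into a single integral involving $e^{ik\cdot(v-w)(\tilde t_1-\tilde t_2)}(k\otimes k)|\hat\Phi(k)|^2 g(w)$. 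Under the rescaling the prefactor $\theta^{-2}$ combines with the rapid oscillation in $(t_1-t_2)/\theta^2$ to produce, in the sense of distributions in $(t_1,t_2)$, the factor $2\pi\delta(k\cdot(v-w))\delta(t_1-t_2)$ (up to the chosen Fourier normalization), reproducing the expression for $D_g(v)$.

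The interacting case (II) proceeds along the same lines, but now the source term in \eqref{eq:zetaFiniteInteracting} contains the self-consistent feedback $\sigma(\nabla_y\Phi\ast\tilde\rho)\cdot\nabla_w g$. In Fourier this turns the ODE for $\hat\zeta$ into a closed Volterra equation for $\hat{\tilde\rho}(k,\tilde t)$, which after Laplace transform in $\tilde t$ (variable $z$) becomes algebraic with symbol $\Delta_\sigma(k,z)=1-\sigma\hat\Phi(k)\int\frac{k\cdot\nabla_w g(w)}{z+ik\cdot w}\,dw$. Inverting and passing $z\to -i(k\cdot w)+0^+$ gives a free propagator dressed by $1/\Delta_\sigma(k,-i(k\cdot w))$; when the resulting response is squared in the covariance computation, one obtains exactly the factor $|\Delta_\sigma(k,-i(k\cdot w))|^{-2}$ appearing in \eqref{eqthm:nonlin1}-\eqref{eqthm:nonlin2}. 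The hypothesis on $G_\sigma$ (exponential decay of $\int G_\sigma\,dw$) is invoked here to guarantee that the Volterra resolvent decays on the kinetic time scale so that the inversion of $\Delta_\sigma$ is legitimate and the boundary/initial effects vanish in the $\theta\to 0$ limit.

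The main obstacle is the interacting case: one must make the manipulation of the Volterra equation rigorous (Laplace inversion along a contour avoiding the zeros of $\Delta_\sigma$) and control the long-time tails uniformly in $k$ in order to identify the limit of $\int_0^{\tilde t}$ with the Plemelj distribution and to reduce the correlation of the random part to a genuine $\delta(t_1-t_2)$. The decay condition on $G_\sigma$ is what makes this extraction of the stationary limit possible; once that is in hand, the reduction to the $\delta(k\cdot(v-w))$ structure is identical to the Rayleigh computation, with $|\hat\Phi|^2$ merely replaced by $|\hat\Phi/\Delta_\sigma|^2$.
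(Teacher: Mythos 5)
Your proposal is correct and takes essentially the same approach as the paper: in both, $\zeta$ is split into a free-transport noise term and a deterministic driven term, the mean of $\mathcal{F}_g$ is identified via the distributional limit of the time integral (the $\pi\delta+\mathrm{i\,P.V.}$ structure, with the odd part cancelling), the covariance of the noise term produces $D_g(v)\delta(t_1-t_2)$, and in the interacting case the Fourier--Laplace transform closes a Volterra/ODE system whose resolvent is governed by the dielectric function $\Delta_\sigma$, yielding the extra $|\Delta_\sigma|^{-2}$ factor. The only cosmetic difference is that you work in Fourier space from the outset in the Rayleigh case, whereas the paper first writes $\zeta_1,\zeta_2$ and the resulting $\Lambda_g$, $\mathbb{E}[F_g^{(1)}\otimes F_g^{(1)}]$ in physical space (via the explicit correlation $\frac{1}{s^3}g(\frac{y_1-y_2}{s})$) and passes to the $\delta(k\cdot(v-w))$ form by Plancherel at the end.
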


\begin{remark}
The exponential decay of the function $G_{\sigma}$ in Theorem \ref{thm:FiniteRange} is closely related to the stability of a spatially homogeneous Vlasov medium with underlying distribution of particle velocities $g(w)$. The physical meaning and the relevance of the stability of the Vlasov medium, as well as of the dielectric function will be discussed in Appendix \ref{app:B}.
\end{remark}

\begin{remark}
	We will see in the proof that the particle density around the tagged particle stabilizes 
	to a stationary distribution (in a coordinate system moving with the tagged
	particle) in the microscopic time scale, which is much shorter than the
	macroscopic time scale. This was noticed first by Bogoliubov. 
\end{remark}

\begin{remark}
	The earliest study of systems with long range interactions can be found in
	\cite{Bo}. The approach introduced there is based on the BBGKY hierarchy,
	arguing that the truncated correlation function $g_2$ and truncated
	correlations of higher order stabilize on a much shorter timescale than the
	one-particle distribution function $f_1$. In our approach this corresponds
	to the stabilization of the function $\Lambda_g$, as well as of the covariance of the noise $D_g$ to
	a stationary process on the short time scale introduced by Bogoliubov.
\end{remark}

\bigskip

\subsubsection{Coulombian interactions \label{RaylCoulomb}}

We now consider the case of interaction potentials decreasing as the Coulomb potential at large
distances. We recall that in this case, in order to be able to define the
random force field we need to impose suitable electroneutrality conditions.
To be concise, we restrict our analysis to the case in which we have only
two types of charges having opposite signs, namely we assume that they have charges $+1$
and $-1$ respectively, but it would be possible to
adapt the arguments to more complicated charge distributions. We will
consider the following types of interaction potentials:%
\begin{equation}
\Phi _{\eps }\left( X\right) =\Phi \left( \frac{\left\vert
X\right\vert }{\eps }\right) \ \ \text{or \ }\Phi _{\eps
}\left( X\right) =\eps \Phi \left( \left\vert X\right\vert \right),
\label{S8E5}
\end{equation}%
where $\Phi $ is a smooth function which behaves for large values as $\Phi
\left( s\right) \sim \frac{1}{s}$ as $s\rightarrow \infty$ (cf.~\eqref{phiDec_sOne}), as well as the
corresponding asymptotic formulas for the derivatives of $\Phi .$ The first
type of potential in (\ref{S8E5}) includes for instance the Landau part
associated to Coulomb potentials (cf. (\ref{S1E5})).

Following the same reasoning of the previous Section we are left to consider the following approximating system of equations:
\begin{equation}\label{eq:appCoul}
\frac{d\xi }{d\tilde{t}}=V_0\ \ ,\ \ \frac{dV}{d\tilde{t}}=\left(
L_{\eps }\right) ^{\frac{3}{2}}F_g(\xi,\tilde{t};\omega), 
\end{equation}
where 
\begin{equation}
F_g(\xi,\tilde{t};\omega):=-\int_{\R^3}\int_{\R^3}\nabla _{\xi }%
\tilde{\phi}_{\eps }\left( \xi -\eta \right) \left[ \zeta^+
_{\eps }\left( \eta ,w,\tilde{t}\right) -\zeta^- _{\eps
}\left( \eta ,w,\tilde{t}\right) \right] d\eta dw.   \label{T1E3bis}
\end{equation}
Here
\begin{equation}\label{eq:tildephiep}
\tilde{\phi}_{\eps}\left( \xi\right) = \Phi _{\eps }\left( L_\ep \xi\right) \quad \text{with}\ \ L_\ep \gg 1
\end{equation} 
where $\Phi _{\eps }$ is as in \eqref{S8E5} with $\Phi \left( s\right) \sim \frac{1}{s}$ as $s\rightarrow \infty$, i.e. \eqref{phiDec_sOne} holds for some $A>0$. 
\smallskip

In the so-called Rayleigh gas case, the evolution of $\zeta_\ep$ is given by
\begin{align}  \label{T1E4a}
\partial _{\tilde{t}}\zeta^\pm _{\eps }\left( y,w,\tilde{t}\right)
+w\cdot \nabla _{y}\zeta^\pm _{\eps }\left( y,w,\tilde{t}\right)
\mp  L_\eps^\frac32 \nabla _{y}\tilde{\phi}%
_{\eps }\left( y-\xi\right) \cdot \nabla _{w}g^\pm\left(
w\right) & =0 ,
\end{align}
whereas in the interacting case the evolution is given by
\begin{align}
	& \partial _{\tilde{t}}\zeta _{\eps }^{\pm }\left( y,w,\tilde{t}%
	\right) +w\cdot \nabla _{y}\zeta _{\eps }^{\pm }\left( y,w,\tilde{t}%
	\right)\nonumber \\
	& - L_{\eps }^{\frac{3}{2}}\left[ L_\eps^2 \int_{\R^3}\nabla \Phi \left( L_{\eps
	}\left\vert y-\eta \right\vert \right) (\tilde{\rho}_{\eps }^{\pm }-%
	\tilde{\rho}_{\eps }^{\mp })(\eta ,\tilde{t})d\eta \pm \eps
	L_{\eps }\nabla \Phi \left( L_{\eps }\left\vert y-\xi
	\right\vert \right) \right] \cdot \nabla g^\pm\left( w\right) =0 \label{A1E5}
\end{align}
with random initial data $\zeta^\pm _{\eps }(y,w,0)=N^\pm(y,w)$
satisfying:%
\begin{align}
\mathbb{E}\left[ N^\pm\left( y,w\right) \right] = 0, \quad ,& \mathbb{E}%
\left[ N^+ _{\eps }\left( y_{a},w_{a}\right) N^-\left(
y_{b},w_{b}\right) \right] =0  \label{T1E1} \\
\mathbb{E}\left[ N^\pm\left( y_{a},w_{a}\right) N^\pm _{\eps }\left(
y_{b},w_{b}\right) \right] & =g^\pm\left( w_{a}\right) \delta \left(
y_{a}-y_{b}\right) \delta \left( w_{a}-w_{b}\right) .  \label{T1E2}
\end{align}
For simplicity, in what follows, we will assume $g^+=g^{-}=g$. 

\begin{theorem}\label{thm:Coulomb}
We consider the system \eqref{eq:appCoul}-\eqref{T1E3bis}. Let the interaction
potential $\tilde{\phi}_\ep$ be defined as in \eqref{eq:tildephiep}. We consider the following rescaling:
\begin{equation*}
t=\theta_\eps^2 \tilde{t}\ \ ,\ \ x= \theta_\eps^2 \xi\ \ ,\ \ v=V_0
\end{equation*}
where
\begin{equation}
\theta_\ep^2 =\frac{L_\ep}{T_{\ep}}\quad \text{with}\quad T_{\eps }= \frac{1}{2\eps ^{2}\log \left( \frac{1}{%
\eps }\right) }\ \ \text{as\ \ }\eps \rightarrow 0.
\label{TCoulRay}
\end{equation}
Setting $F_g(\xi,\tilde{t})=\mathcal{F}_g(x,t)$ we then obtain 
\begin{align*}
&\ \lim_{\theta\to 0}\mathbb{E}\left[ \mathcal{F}_g(vt, t;\omega)\right]=-\Lambda_{g}(v), \\  
&\ \lim_{\theta\to 0}\mathbb{E}\left[ (\mathcal{F}_{g}\left( v t_1,{t}_{1}\right)+\Lambda_{g}(v)) \otimes
(\mathcal{F}_{g}\left( v t_2, {t}_{2}\right)+\Lambda_{g}(v)) \right] 
=D_{g}\left( v\right) \delta \left( t_{1}-t_{2}\right) .
\end{align*}
More precisely, 
\begin{itemize}
\item[(I)]{\textbf{Rayleigh Gas case}}\\ 
Let $L_\ep=\ep^{-\beta}$ with $\beta\in(0,2)$. Then, the diffusion coefficient $D_{g}$ and the friction term $\Lambda_g $ are given by: 
\begin{align}
D_{g}\left( v\right) &=\int_{0}^{\infty }\frac{ds}{s^{3}}\int_{\mathbb{R}%
^{3}}d\eta _{1}\int_{\R^3}d\eta _{2}\nabla _{\eta }\Phi \left(
\eta _{1}\right) \otimes \nabla _{\eta }\Phi \left( \eta _{2}\right) g\left(
v+\frac{\eta _{1}-\eta _{2}}{s}\right) \ ,  \label{Coulomb:RayleighRandom}\\
\Lambda_{g}(v) &= -\int_{\left( \mathbb{R}^{3}\right)
^{2}}dwdy \frac{1}{\left\vert y\right\vert ^{2}\left\vert y+\left(
w-v\right) \right\vert ^{2}}\left( \nabla _{w}g\left( w\right) \cdot \frac{%
y+\left( w-v\right) }{\left\vert y+\left( w-v\right) \right\vert }\right) 
\frac{y}{\left\vert y\right\vert }, \label{Coulomb:RayleighFriction}
\end{align}
\item[(II)]{\textbf{Interacting particle case}}\\ 
Let $L_\ep=\ep^{- \frac 1 2 }$, namely $L_\ep$ of the order of the Debye length scale.  
We assume the Penrose stability condition of the medium, i.e. that the
function $\Delta _{\eps }\left( k,\cdot \right)$ defined as in \eqref{def:dielectric} below is analytic in $\left\{ \func{Re}\left( z\right) >0\right\} $ for each $k\in \R^3.$ 
Then, the diffusion coefficient $D_{g}$ and the friction term $\Lambda_g$ are given by: 
\begin{align}
	 D_{g}\left( v\right)= & \frac{\pi A^2}{2} \int_{\mathbb{R}%
		^{3}}dw_{0} \frac{g(w_{0})}{|v-w_{0}|}\left( I-\frac{(v-w_{0})\otimes
		(v-w_{0})}{|v-w_{0}|^{2}}\right), \label{Coulomb:InterRandom}\\
	\Lambda_{g}(v) =& -\frac{\pi A^2}{2} \int_{\mathbb{R}%
	^{3}}dw_{0} \frac{\nabla g(w_{0})}{|v-w_{0}|}\left( I-\frac{(v-w_{0})\otimes
	(v-w_{0})}{|v-w_{0}|^{2}}\right), \label{Coulomb:InterFriction}
\end{align}
 where $A$ is as in \eqref{phiDec_sOne}. 

\end{itemize}
\end{theorem}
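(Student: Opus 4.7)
The proof follows the same overall scheme as Theorem \ref{thm:FiniteRange}: solve the linear Vlasov-type equations \eqref{T1E4a} and \eqref{A1E5} for the Gaussian density $\zeta^\pm_{\eps}$, substitute into the force field \eqref{T1E3bis}, and compute the mean (friction) and covariance (diffusion) in the limit $\eps\to 0$. The new ingredient compared to Theorem \ref{thm:FiniteRange} is the Coulombian logarithm, which must be extracted from a logarithmically divergent $k$-integral and balanced against the prefactor $T_{\eps}^{-1}=2\eps^2\log(1/\eps)$ hidden in the rescaling $\theta_{\eps}^2=L_{\eps}/T_{\eps}$.

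For the Rayleigh case (I), the equations for $\zeta^\pm_{\eps}$ decouple and can be solved by the method of characteristics, yielding $\zeta^\pm=\zeta^\pm_{\mathrm{free}}+\zeta^\pm_{\mathrm{resp}}$, where $\zeta^\pm_{\mathrm{free}}(y,w,\tilde{t})=N^\pm(y-w\tilde{t},w)$ carries the Gaussian noise and $\zeta^\pm_{\mathrm{resp}}$ is the Duhamel integral of the source $\mp L_{\eps}^{3/2}\nabla_y\tilde{\phi}_{\eps}(y-\xi)\cdot\nabla g$. Substitution into $F_g$ splits it into $F_g^{\mathrm{noise}}+F_g^{\mathrm{fric}}$. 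The friction contribution reduces, after the change of variables $y=L_{\eps}^{-1}\tilde{y}$ and approximating $\xi(s)\approx V_0 s$, to an integral governed by the kernel $K(V;s)$ of Section~\ref{CharTimeRayl}; the asymptotics \eqref{Gammadef} of $K$ together with the $(1/s)$ decay produces the logarithmic singularity which, combined with the $1/\log(1/\eps)$ factor in $T_{\eps}$, converges to the finite expression \eqref{Coulomb:RayleighFriction}. The covariance is evaluated using \eqref{T1E2} on $\zeta^\pm_{\mathrm{free}}$; the same $\log$-extraction produces \eqref{Coulomb:RayleighRandom}. The $\delta(t_1-t_2)$ behaviour follows from the fact that correlations decay on the microscopic timescale $L_{\eps}$, which is much shorter than the macroscopic scale $t=\theta_{\eps}^2\tilde{t}$.

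For the interacting case (II), the self-consistent field in \eqref{A1E5} couples the two species through $\tilde{\rho}^+-\tilde{\rho}^-$. I would apply the Fourier transform in $y$ to obtain an equation for $\hat{\zeta}^\pm(k,w,\tilde{t})$ whose source comes from the tagged particle; then apply the Laplace transform in $\tilde{t}$ and solve the resulting algebraic system, which introduces the dielectric function $\Delta_{\eps}(k,z)$. The Penrose stability assumption guarantees that $1/\Delta_{\eps}(k,\cdot)$ is analytic and bounded in $\{\operatorname{Re} z>0\}$, allowing inversion of the Laplace transform and exponential decay of the resolvent in $\tilde{t}$. Computing the two-point correlator of $F_g$ then produces a Balescu--Lenard-type integrand $|\hat{\tilde{\phi}}_{\eps}(k)|^2\,|\Delta_{\eps}(k,-ik\cdot w)|^{-2}\,\delta(k\cdot(v-w))$. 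The asymptotic analysis splits $k$-space into three regions: $|k|\ll L_{\eps}^{-1}$ (long-wavelength, where screening tames the Coulomb singularity), $|k|\gg L_{\eps}^{-1}$ (short-wavelength, where $\Delta_{\eps}\to 1$), and the intermediate Coulomb window $L_{\eps}^{-1}\lesssim |k|\lesssim \eps^{-1}$ in which the integrand behaves like $A^2/|k|^2$ and yields the dominant factor $\log(L_{\eps}/\eps)\sim \log(1/\eps)$. After division by $T_{\eps}^{-1}$, only the transverse angular part survives, giving the projection $(I-\hat{u}\otimes\hat{u})/|v-w_0|$ with $u=v-w_0$, hence the Landau form \eqref{Coulomb:InterRandom}, \eqref{Coulomb:InterFriction}.

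The principal technical difficulty is the uniform-in-$\eps$ control of the Balescu--Lenard integrand in case (II): since $|\hat{\tilde{\phi}}_{\eps}|$ is singular at $k=0$ and $|\Delta_{\eps}|$ may be small in parts of $k$-space, one must show that the logarithmic contribution arises \emph{only} from the intermediate Coulomb window, with both the screened and the collisional endpoints producing corrections of lower order. A related subtlety is the reduction of the Balescu--Lenard kernel to the Landau kernel in this limit: at leading logarithmic order $|\Delta_{\eps}|^{-2}$ effectively drops out, and what remains is the bare Coulomb kernel projected onto $k\perp(v-w)$. A secondary point, common to both (I) and (II), is the justification of the approximation $\xi(s)\approx V_0 s$ on the microscopic time scale $L_{\eps}$, which requires estimating the smallness of velocity deflections of the tagged particle over times of order $L_{\eps}$ compared with the kinetic scale $T_{\eps}$; this smallness is precisely what guarantees $T_L\ll T_{BG}$ in the Landau regime.
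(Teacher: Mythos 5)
Your proposal follows essentially the same route as the paper's proof of the interacting case: reformulate the linear problem in Fourier–Laplace variables, identify the dielectric function $\Delta_\eps(k,z)$, invoke Penrose stability to invert, obtain a Balescu--Lenard-type correlator $\propto |\hat\Phi|^2|\Delta_\eps|^{-2}\delta(k\cdot(v-w))$, and extract the Coulomb logarithm from the $k$-integral balanced against the $\log(1/\eps)$ hidden in $T_\eps$. For the Rayleigh case the paper declines to write out the proof (referring to Section~6.1), and your reconstruction via characteristics and the kernel $K(V;s)\sim\Gamma/s$ of Section~3.2 is the natural one.

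Two points are worth flagging. First, the paper works with the single quantity $\lambda_\eps=\zeta^+_\eps-\zeta^-_\eps$, which by electroneutrality closes the self-consistent term and reduces the two-species system to a scalar equation; you keep $\zeta^\pm$ separate and invert the $2\times2$ system, which works but is an unnecessary detour. Second and more substantively, your technical discussion of the contour deformation omits the issue the paper singles out as delicate: for $|k|\lesssim L_\eps^{-1}$ the function $\Delta_\eps(k,\cdot)$ has a pair of zeros with $\func{Re}(\zeta)<0$ but $|\func{Re}(\zeta)|$ exponentially small as $\eps\to 0$ (the Langmuir modes). One cannot simply push $\gamma$ a uniform distance into $\{\func{Re}\,\zeta<0\}$ over the whole range of $k$; one must compute the residue contribution from these slowly damped poles separately and show that it is $O(1)$, not $O(\log)$, so the logarithm comes only from the region $|k|\gtrsim L_\eps^{-1}$. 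Saying that "$|\Delta_\eps|$ may be small" and "the screened endpoint produces lower-order corrections" is a statement of the desired conclusion rather than the argument; without the residue analysis of the Langmuir poles the proof is incomplete at exactly the step the paper identifies as nontrivial. A minor imprecision: your "Coulomb window" $L_\eps^{-1}\lesssim|k|\lesssim\eps^{-1}$ should have its upper endpoint at the reciprocal of the small-distance cutoff of $\Phi$ (which for $\Phi_\eps(X)=\eps\Phi(|X|)$ is $|k|\sim 1$, not $\eps^{-1}$); the log is of the ratio of those two scales.
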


\begin{remark}
We notice that the function $\Delta _{\eps }\left( k,\cdot \right)$ has additional analyticity properties due to the fact that \eqref{def:dielectric} is obtained by means of a Coulombian potential. These properties will be discussed in detail in the proof (cf.~Section \ref{RaylCoulomb}).
\end{remark}

\begin{remark}
We observe that a difference between the case of Coulomb potentials and the case of 
potentials considered in Subsection \ref{sec:finRanTh} is that in this second
case, there is a well defined friction coefficient acting over a tagged
particle moving at speed $V=V_0$ (cf.~\eqref{eqthm:nonlin1}) and a well defined stationary random force field 
acting on the tagged particle (cf. (\ref{eq:ForceFiniteCombined}), (\ref{S7E1})). Both the
friction term and the stationary random force field stabilize to their asymptotic behaviour in large
microscopic times. In the case of Coulombian interactions such stabilization
of the friction and the random force field in macroscopic times does not take place. On
the contrary, they increase logarithmically. It is well known that there is
not a large difference in practice between a magnitude converging to a value
or increasing logarithmically. Nevertheless, the main consequence of this
logarithmic behaviour will be the onset in the macroscopic time scale of the
Coulombian logarithm as it might be expected in a system with Coulombian
interactions.

Finally, it is interesting to remark that there is a difference between the
interacting particle systems and the case of Rayleigh gases
with Coulombian interactions. In
this second case, some of the logarithmic factors are due to the fact that
all the dyadic scales between the cutoff length for the Coulombian potential
and the macroscopic scale yield contributions of a similar size in the
computation of the friction term and the noise term. (See the discussion in
Subsection \ref{ThresholdCoulomb} below about the contributions due to dyadics for
Coulombian potentials). In the case of interacting particle systems the
"dyadic" scales contributing to the friction and noise term are those
between the particle size and the Debye length. Due to this, some of the
numerical factors appearing in the formula of the macroscopic scale are
different for Rayleigh gases and interacting particle systems. \end{remark}


\subsubsection{Grazing Collisions} \label{ss:RayGrazing}

 
We now consider the case which is usually termed with the name of grazing collisions. This corresponds 
to the case of
particles which interact weakly with a tagged particle, with an interaction
range smaller or similar to the average particle distance. 
We study the Rayleigh gas dynamics (cf.~(\ref{eq:RayNew}))  as well as the interacting particles dynamics (cf.~\eqref{eq:IntNew}). 
We will assume that there is only one type of
scatterers. We assume that the interaction potential $\Phi _{\eps }$ is given as in \eqref{WeakCoupling}, i.e. $\Phi _{\eps }\left( X\right) =\eps \Phi \left( \frac{%
\left\vert X\right\vert }{\ell _{\eps }}\right)$ with $\ell _{\eps }\rightarrow \sigma,$ for some $\sigma \geq 0$. 
Here $\Phi =\Phi \left( s\right) $ is a smooth function which decreases
sufficiently fast as $s\rightarrow \infty ,$ together with 
the asymptotic formulas for its derivatives.  We can assume for
instance that $\Phi \left( s\right) \sim \frac{1}{s^{\alpha }}$ as $%
s\rightarrow \infty $ with $\alpha >1,$ or that $\Phi \left( s\right) $
decreases exponentially. 

Following the same strategy of the previous sections we consider the following approximating system for both Rayleigh Gases and interacting particle systems: 
\begin{equation}
\frac{d\xi }{d\tilde{t}}=V_0\ \ ,\ \ \frac{dV}{d\tilde{t}}=-\left(
L_{\eps }\right) ^{3} F_g(\xi,\tilde{t};\omega) .  \label{T4E7}
\end{equation}
where 
\begin{equation}
F_g(\xi,\tilde{t};\omega):=\int_{\R^3}\int_{\R^3}\nabla
_{\xi }\tilde{\phi}_{\eps }\left( \xi -\eta \right) \zeta
_{\eps }\left( \eta ,w,\tilde{t}\right) dwd\eta \label{T4E7bis}.
\end{equation}
Let 
\begin{equation}\label{phitildeepgraz}
\Phi_{\eps}\left( X\right) =\tilde{\phi}_{\eps}\left( \frac{X}{%
	L_{\eps}}\right) =\tilde{\phi}_{\eps}\left( \xi\right) \ ,
\end{equation}
and 
\begin{equation}
(\partial_{\tilde{t}}+w\cdot\nabla_y)\zeta_{\eps}\left( y,w,\tilde{t}%
\right) -\left( L_{\eps}\right) ^{\frac{3}{2}}\nabla_{y}\tilde{\phi}%
_{\eps }\left( y-\xi\right) \cdot\nabla_{w}g\left( w\right) =0, \,\,
\zeta_{\eps}\left( y,w,0\right) =N_{\eps}\left( y,w\right)
\label{T4E4}
\end{equation}
where the (random) initial data is characterized by: 
\begin{align}
\mathbb{E}\left[ N_{\eps}\left( y,w\right) \right] &=0  
\nonumber \\
\mathbb{E}\left[ N_{\eps}\left( y_{a},w_{a}\right) N_{\eps
}\left( y_{b},w_{b}\right) \right] &=\frac{g\left( w_{a}\right) }{\left(
L_{\eps}\right) ^{3}}\delta\left( y_{a}-y_{b}\right) \delta\left(
w_{a}-w_{b}\right).  \label{T4E6}
\end{align}

\begin{theorem}\label{thm:Grazing}
Let the interaction
potential $\tilde{\phi}_\eps $ be as in \eqref{phitildeepgraz} and $L_\ep=\ep^{-\beta}$ with $\beta\in(0,2)$. We consider the following rescaling:
\begin{equation*}
t=\big(\ep\ell_{\ep}\big)^2 L_{\ep}\tilde{t}\ \ ,\ \ x=\big(\ep\ell_{\ep}\big)^2 L_{\ep} \xi\ \ ,\ \ v=V_0.
\end{equation*}
Setting $F_g(\xi,\tilde{t})=\mathcal{F}_g(x,t)$ we then obtain the following formulas for the diffusion coefficient $D_{g}$ and the friction term $\Lambda_g $:
\begin{align}
	D_g(v)=& \int_{0}^{\infty }ds\int_{\R^3}dy\int_{\R^3}dw\nabla
	_{y}\Phi \left( \left\vert y+(w_0-w)s\right\vert \right) \otimes \nabla _{y}\Phi
	\left( \left\vert y\right\vert \right) g\left( w_0\right) ,
\end{align}
\begin{equation}
\Lambda_{g}\left( v\right) =-\int_{\mathbb{R}^{3}}dy\int_{\mathbb{R}%
^{3}}dw\nabla _{y}\Phi \left( \left\vert y\right\vert \right) \nabla
_{w}g\left( w\right) \cdot \int_{-\infty }^{0}\nabla _{y}\Phi \left(
\left\vert y+\left( v-w\right) s\right\vert \right) ds . \label{T4E8a}
\end{equation}%


\end{theorem}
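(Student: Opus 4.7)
\noindent The plan is to follow the same strategy used for Theorems \ref{thm:FiniteRange} and \ref{thm:Coulomb}, exploiting the linearity of the transport equation \eqref{T4E4} in $\zeta_\eps$ to decouple the deterministic drift (producing $\Lambda_g$) from the Gaussian noise (producing $D_g$). First I would solve \eqref{T4E4} along the free-flow characteristics, obtaining
\begin{equation*}
\zeta_\eps(y,w,\tilde t) = N_\eps(y-w\tilde t, w) + L_\eps^{3/2}\int_0^{\tilde t}\nabla_y\tilde\phi_\eps\bigl(y - w(\tilde t - s) - \xi(s)\bigr)\cdot\nabla_w g(w)\,ds,
\end{equation*}
and insert the result in \eqref{T4E7bis} to write $F_g = F_g^{\rm n} + F_g^{\rm d}$. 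Since $\mathbb{E}[N_\eps]=0$ and $N_\eps$ is Gaussian, one has $\mathbb{E}[F_g] = F_g^{\rm d}$, $F_g^{\rm n}$ is a centred Gaussian process, and the cross-covariance of the two pieces vanishes, so the covariance of $F_g$ reduces to that of $F_g^{\rm n}$. This reduces the proof to analyzing $F_g^{\rm d}$ and the autocovariance of $F_g^{\rm n}$ separately.

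\textbf{Friction.} On the macroscopic time scale the velocity $V(\tilde t)$ differs from $V_0$ by $o(1)$, so I would replace $\xi(s)$ by $V_0 s$. After the translation $\eta \mapsto \eta + V_0\tilde t$ and the substitution $s \mapsto \tilde t - s$, the drift reduces to
\begin{equation*}
F_g^{\rm d}(V_0\tilde t,\tilde t) = L_\eps^{3/2}\int_{\R^3}dw\int_{\R^3}dy\,\nabla\tilde\phi_\eps(-y)\int_0^{\tilde t}\nabla\tilde\phi_\eps\bigl(y + (V_0-w)s\bigr)\cdot\nabla_w g(w)\,ds.
\end{equation*}
Using $\tilde\phi_\eps(\xi) = \eps\Phi(L_\eps|\xi|/\ell_\eps)$, I would rescale to the intrinsic scale of the potential via $y = (\ell_\eps/L_\eps) y'$ and $s = (\ell_\eps/L_\eps) s'$, collect the prefactor $L_\eps^3$ from \eqref{T4E7}, and divide by the time Jacobian $(\eps\ell_\eps)^2 L_\eps$ coming from the rescaling $t = (\eps\ell_\eps)^2 L_\eps\tilde t$. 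Since $\tilde t/(\ell_\eps/L_\eps)\to\infty$, the upper limit of the $s'$ integral extends to $+\infty$. A change of variables $y'\mapsto -y'$, $s'\mapsto -s'$ together with the identity $\nabla\Phi(-z) = -\nabla\Phi(z)$ brings the expression into the form \eqref{T4E8a}.

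\textbf{Diffusion.} For the covariance of the noise part, I would use \eqref{T4E6} and evaluate the delta functions in $w$ and $y$ to obtain, after shifting $\eta\mapsto\eta+V_0\tilde t_1$,
\begin{equation*}
\mathbb{E}\!\left[F_g^{\rm n}(V_0\tilde t_1,\tilde t_1)\otimes F_g^{\rm n}(V_0\tilde t_2,\tilde t_2)\right] = \frac{1}{L_\eps^3}\int_{\R^3}\!\!\int_{\R^3}\nabla\tilde\phi_\eps(-\eta)\otimes\nabla\tilde\phi_\eps\bigl(-\eta+(w-V_0)(\tilde t_1-\tilde t_2)\bigr)g(w)\,dw\,d\eta.
\end{equation*}
The same rescaling $\eta = (\ell_\eps/L_\eps)\eta'$ and $\Delta\tilde t = (\ell_\eps/L_\eps)\sigma$ reveals that this correlation kernel is supported on a window of width $O(\ell_\eps/L_\eps)$ in $\tilde t$, which corresponds to an $O(\eps^2\ell_\eps^3)\to 0$ window in the macroscopic time $t$. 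Integrating against a test function in $t_2$ and collecting the prefactor $L_\eps^3$ from \eqref{T4E7} together with the Jacobian $(\eps\ell_\eps)^2 L_\eps$ from the time rescaling produces, in the limit $\eps\to 0$, a Dirac $\delta(t_1-t_2)$ with coefficient equal to the stated expression for $D_g(v)$.

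\textbf{Main obstacle.} The most delicate point will be to justify rigorously the linearization $\xi(s)\approx V_0 s$ uniformly on the time interval on which the kernels appearing in $F_g^{\rm d}$ and in the covariance of $F_g^{\rm n}$ are non-negligible, and to control the boundary contributions coming from $s$ close to $0$ or $\tilde t$ (so that the upper limit can be replaced by $+\infty$ and the stationary regime is reached). A secondary step will be to upgrade the identification of the covariance structure to the white-noise convergence of $\mathcal{F}_g$, which follows from the Gaussianity of $F_g^{\rm n}$ once the covariance convergence and the vanishing of the cross-covariance with $F_g^{\rm d}$ are in hand, in close analogy with the corresponding steps carried out in the proofs of Theorems \ref{thm:FiniteRange} and \ref{thm:Coulomb}.
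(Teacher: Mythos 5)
Your proof follows essentially the same route as the paper's: solve \eqref{T4E4} along the free characteristics, substitute into the force \eqref{T4E7bis}, split into a deterministic drift (friction) and a centred Gaussian piece (noise), rescale to the macroscopic variables, and identify the limit of the drift with $\Lambda_g$ and the limit of the rescaled autocovariance with $D_g\,\delta(t_1-t_2)$; the decomposition, the change of variables $y=(\ell_\eps/L_\eps)y'$, $s=(\ell_\eps/L_\eps)s'$, and the white-noise identification via Gaussianity all match the argument in Subsection~\ref{GrazingRayl}. One small remark: the "delicate point" you list of justifying $\xi(s)\approx V_0 s$ is not actually part of this theorem, since the approximating system \eqref{T4E7} already takes $\frac{d\xi}{d\tilde t}=V_0$, so $\xi(s)=V_0 s$ holds exactly --- the linearization is justified separately in Section~\ref{ss:JustR}; your genuine remaining concerns (boundary contributions near $s=0,\tilde t$ and upgrading covariance convergence to white-noise convergence) are real but are also left implicit in the paper.
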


\subsubsection*{The threshold between short and long range potentials in the
decay $\frac{1}{\left\vert x\right\vert }.$}\label{ThresholdCoulomb} 

Something that was seen in \cite{NSV} is that in the case of Lorentz gases
and interaction potentials with the form $\frac{\eps }{\left\vert
x\right\vert ^{s}}$ the threshold which separates between short range and
long range potentials corresponds to $s=1,$ i.e. to potentials decreasing
like Coulombian potentials. Actually, the same separation between long range
and short range potentials holds in the case of Rayleigh gases.

The rationale behind this difference between short and long range potentials
can be easily understood at the physical level (cf.~\cite{BT} for an
explanation of the onset of the Coulombian logarithm in the case of
Coulombian potentials). Suppose that we consider a tagged particle moving
along a rectilinear path which will be denoted as $\ell $ during a length $%
L. $ We consider the deflections of the tagged particle induced by a set of
scatterers distributed according to the Poisson distribution with an average
distance $1.$ More precisely, we distinguish the deflections produced by the
scatterers at dyadic distances, i.e. at a distance $2^{n}$ and $2^{n+1}$
from $\ell $ for each $n\in \mathbb{Z}$, where we assume that $2^{n}$ is
smaller than the range of the potential. We will denote the distances
between $2^{n}$ and $2^{n+1}$ for each $n$ as a "dyadic". The deflection
experienced by the tagged particle is a random variable with zero average,
due to the fact that the scatterers are symmetrically distributed with
respect to the line $\ell .$ On the other hand if we estimate the variance
of the deflections produced by the particles at distances between $2^{n}$
and $2^{n+1}$ it can be readily seen that the magnitude of this variance
decreases exponentially with $n$ if $s>1$ and increases exponentially with $%
n $ (as long as $2^{n}\lesssim L$) for $s<1$. The contribution to the
deflections of the scatterers at distances larger than $L$ is negligible if $%
s>\frac{1}{2}$ (cf. \cite{NSV} for a detailed proof of this in the case of
static scatterers). Notice that for these ``far-away" scatterers the change
of angle subtended by the tagged particle is very small, i.e. we can say
that no collision is taking place. In the case of $s=1$ the magnitude of the
deflections is the same for all the values of $n$ as long as $2^{n}\lesssim
L $. Actually, this is the reason for the onset of the Coulombian logarithm,
namely the fact that we need to add the deflections produced by the
different dyadic cylinders and the number of these cylinders is of order $%
\log \left( L\right) $, if we assume that the potential $\frac{1}{\left\vert
x\right\vert }$ is cut at distances of order $\left\vert x\right\vert \simeq
1.$

It turns out that a similar picture takes place in the case of moving
scatterers, i.e. for Rayleigh gases. The onset of the logarithmic correction
for potentials behaving like Coulomb for large distances has been seen
above. Seemingly nonkinetic models, with long range correlations due to the
long range of the potentials arise for $s\in \left( \frac{1}{2},1\right) $
as it happens in the Lorentz gases considered in \cite{NSV}, in spite of the
fact that for Rayleigh gases the scatterers move. However, we will not
continue the study of this case in this paper. (Nevertheless, the case $s\in
\left( \frac{1}{2},1\right) $ is discussed in \cite{NVW} Section 3.3, 
in the case of interacting particle systems).

\section{Justification of the approximation by a tagged particle in an evolving Gaussian field}\label{ss:JustR}

The goal of this section is to provide a formal derivation of the Langevin equation \eqref{eq:TagFric} used to formulate Theorems \ref{thm:FiniteRange}, \ref{thm:Coulomb}, \ref{thm:Grazing}, starting from the Newton equations for the Rayleigh Gas (cf.~\eqref{eq:RayNew}) as well as from the Newton equations for the interacting particle system (cf.~\eqref{eq:IntNew}), at least during small macroscopic times, in suitable scaling limits.

\subsection{The case of interactions with finite range much larger than the particle distance} \label{ssec:Jfinrange}
\smallskip

The goal of this section is to justify the system of equations \eqref{eq:MotionFiniteCombined}-\eqref{eq:zetaFiniteRayleigh} starting from the Newton equations for the Rayleigh Gas (cf.~\eqref{eq:RayNew}). We also justify the system of equations \eqref{eq:MotionFiniteCombined}-\eqref{S7E1}, \eqref{eq:zetaFiniteInteracting} starting from the Newton equations for the interacting particle system (cf.~\eqref{eq:IntNew}). \bigskip

\noindent\textbf{Rayleigh Gas case}

\noindent Suppose that the position and velocity of a tagged particle $\left(
X,V\right) $ evolve according to (\ref{eq:RayNew}) where the interaction
potential $\Phi _{\eps }\left( x\right) $ is as in (\ref{S4E6}), (\ref%
{S4E7}). We will assume that $\Phi \left( s\right) $ decreases fast enough
as $s\rightarrow \infty ,$ say exponentially, although in space dimension
three a decay like $\frac{1}{s^{a}}$ with $a>1$ would be enough. As a first
step we approximate the distribution of scatterers as the sum of a constant
density plus some Gaussian fluctuations in some suitable topology. To this
end we introduce a new variable $y_{k}=\frac{Y_{k}}{L_{\eps }}$ which
will be useful to describe the system on a scale where this Gaussian
approximation is valid, but that is smaller than the mean free path. In
order to keep the particle velocities of order one we introduce a new time
scale by means of $\tilde{t}=\frac{\tau }{L_{\eps }}.$ We write also $%
\xi =\frac{X}{L_{\eps }}.$ Then (\ref{eq:RayNew}) becomes:%
\begin{equation}  \label{S6E1}
\begin{aligned} \frac{d\xi }{d\tilde{t}}& =V\ \ ,\ \
\frac{dV}{d\tilde{t}}=-\eps \sum_{j\in S}\nabla _{\xi }\Phi \left( \xi -y_{j}\right) \\ \frac{dy_{k}}{d\tilde{t}}& =W_{k}\ \ ,\ \
\frac{dW_{k}}{d\tilde{t}}=-\eps \nabla _{y}\Phi \left( y_{k}-\xi \right) \ \ ,\ \ k\in S \ . \end{aligned}
\end{equation}

The goal is to approximate the dynamics of the scatterers by means of a
continuous density. To this end we introduce the following particle density
in the phase space:%
\begin{equation}  \label{eq:feps}
f_{\eps }\left( y,w,\tilde{t}\right) =\frac{1}{\left( L_{\eps
}\right) ^{3}}\sum_{k}\delta \left( y-y_{k}\right) \delta \left(
w-W_{k}\right).
\end{equation}

We can then rewrite the first two equations of (\ref{S6E1}) as:%
\begin{equation}
\frac{d\xi }{d\tilde{t}}=V\ \ ,\ \ \frac{dV}{d\tilde{t}}=-\eps \left(
L_{\eps }\right) ^{3}\int_{\R^3}\nabla _{\xi }\Phi \left(
\xi -\eta \right) \rho _{\eps }\left( \eta ,\tilde{t}\right) d\eta
\label{S6E2a}
\end{equation}%
where $\rho _{\eps }(y,\tilde{t})=\rho \lbrack f_{\eps }(\cdot
,\tilde{t})](y)$ is the spatial density. On the other hand, the second set of equations of (\ref{S6E1}) implies
that:%
\begin{equation}
\partial _{\tilde{t}}f_{\eps }\left( y,w,\tilde{t}\right) +w\cdot
\nabla _{y}f_{\eps }\left( y,w,\tilde{t}\right) -\eps \nabla
_{y}\Phi \left( y-\xi \right) \cdot \nabla _{w}f_{\eps }\left( y,w,%
\tilde{t}\right) =0 \ .  \label{S6E3}
\end{equation}

We have then reformulated (\ref{S6E1}) as (\ref{S6E2a}), (\ref{S6E3}).

We can now take formally the limit $\eps \rightarrow 0.$ To this end,
notice that $f_{\eps }\left( y,w,0\right) $ is of order one and it
converges in the weak topology to $g\left( w\right) .$ In order to obtain
the evolution for different rescalings of $L_{\eps }$ with $%
\eps $ we compute the asymptotic behaviour in law of $f_{\eps
}\left( y,w,0\right) $ as $\eps \rightarrow 0.$ The following
Gaussian approximation will be used repeatedly in the
following: 
\begin{equation}
\mathbb{E}\left[ \left( f_{\eps }\left( y_{a},w_{a},0\right) -g\left(
w_{a}\right) \right) \left( f_{\eps }\left( y_{b},w_{b},0\right)
-g\left( w_{b}\right) \right) \right] =\frac{g\left( w_{a}\right) }{\left(
L_{\eps }\right) ^{3}}\delta \left( y_{a}-y_{b}\right) \delta \left(
w_{a}-w_{b}\right) .  \label{S6E4}
\end{equation}

This approximation will be justified in Appendix A where it will be seen how
to derive it from the empirical densities associated to particle
distributions given by the Poisson measure.

Assuming (\ref{S6E4}), it is natural to look for solutions of (\ref{S6E3})
with the form:%
\begin{equation}
f_{\eps }\left( y,w,\tilde{t}\right) =g\left( w\right) +\frac{1}{%
\left( L_{\eps }\right) ^{\frac{3}{2}}}\zeta _{\eps }\left(
y,w,\tilde{t}\right).  \label{S6E4a}
\end{equation}

Then, using the fact that the contribution to the integral $\int_{\mathbb{R}%
^{3}}\nabla _{\xi }\Phi \left( \xi -\eta \right) \rho _{\eps }\left(
\eta ,\tilde{t}\right) d\eta $ due to the term $g\left( w\right) $ vanishes,
we obtain that $\zeta _{\eps }\left( y,w,\tilde{t}\right) $ solves
the following problem:

\begin{align}
\frac{d\xi }{d\tilde{t}}&=V\ \ ,\ \ \frac{dV}{d\tilde{t}}=-\eps
\left( L_{\eps }\right) ^{\frac{3}{2}}\int_{\R^3}\nabla
_{\xi }\Phi \left( \xi -\eta \right) \tilde{\rho}_{\eps }\left( \eta ,%
\tilde{t}\right) d\eta,  \label{S6E5} \\
\partial _{\tilde{t}}\zeta _{\eps }\left( y,w,\tilde{t}\right)
+&w\cdot \nabla _{y}\zeta _{\eps }\left( y,w,\tilde{t}\right)
-\eps \left( L_{\eps }\right) ^{\frac{3}{2}}\nabla _{y}\Phi
\left( y-\xi \right) \cdot \nabla _{w}\left( g\left( w\right) +\frac{\zeta
_{\eps }\left( y,w,\tilde{t}\right)}{\left( L_{\eps }\right) ^{%
\frac{3}{2}}} \right) =0 ,  \label{S6E6}
\end{align}
where $\tilde{\rho}_{\eps }\left( y,\tilde{t}\right) = \rho[%
\zeta_{\eps}(\cdot,\tilde{t})](y)$ is the associated spatial density.

Notice that we assume that the potential $\Phi $ decreases sufficiently fast
to guarantee that the integral $\int_{\R^3}\nabla _{\xi }\Phi
\left( \xi -\eta \right) \tilde{\rho}_{\eps }\left( \eta ,\tilde{t}%
\right) d\eta $ is well defined for the random field 
$\zeta
_{\eps }\left( y,w,\tilde{t}\right) $ given by (\ref{S6E4}), (\ref%
{S6E4a}). In the case of potentials $\Phi $ decreasing as some power laws it
is possible to give a meaning to this integral by means of a limit procedure.

Then, using the fact that $L_{\eps }\rightarrow \infty $ we obtain
the following limit problem formally. We write 
\begin{equation}
\theta _{\eps }=\eps \left( L_{\eps }\right) ^{\frac{3}{%
2}}.  \label{thetaRay}
\end{equation}

Given that the range of the interaction potentials is of order $\left\vert
y\right\vert \approx 1$ we need to have $\theta _{\eps }\rightarrow 0$
as $\eps \rightarrow 0$ in order to obtain a kinetic limit. On the
other hand, since $L_{\eps }\rightarrow \infty ,$ making the Gaussian
approximation $\zeta _{\eps }\rightharpoonup \zeta ,$ we can
approximate the problem (\ref{S6E5}), (\ref{S6E6}) as: 
\begin{equation}
\frac{d\xi }{d\tilde{t}}=V_0\ \ ,\ \ \frac{dV}{d\tilde{t}}=-\theta
_{\eps }\int_{\R^3}\nabla _{\xi }\Phi \left( \xi -\eta
\right) \tilde{\rho}\left( \eta ,\tilde{t}\right) d\eta  \label{S6E7}
\end{equation}%
where $\tilde{\rho}\left( y,\tilde{t}\right) =\rho \lbrack \zeta (\cdot ,%
\tilde{t})](y)$ and 
\begin{align}
(\partial _{\tilde{t}}+w\cdot \nabla) _{y}\zeta \left( y,w,\tilde{t}\right)
-\theta _{\eps }\nabla _{y}\Phi \left( y-\xi \right) \cdot \nabla
_{w}g\left( w\right) & =0,\quad \zeta \left( y,w,0\right) =N\left(
y,w\right) .  \label{S6E8}
\end{align}%
Using (\ref{S6E4}) we obtain:%
\begin{equation}
\mathbb{E}\left[ N\left( y,w\right) \right] =0\ \ ,\ \ \mathbb{E}\left[
\left( N\left( y_{a},w_{a}\right) \right) N\left( y_{b},w_{b}\right) \right]
=g\left( w_{a}\right) \delta \left( y_{a}-y_{b}\right) \delta \left(
w_{a}-w_{b}\right).  \label{S7E1bis}
\end{equation}
This approximation is valid for infinitesimally small macroscopic times, where the approximation $V(t)\approx V_0$ is valid. This justifies the set of equations \eqref{eq:MotionFiniteCombined}-\eqref{eq:zetaFiniteRayleigh}. 
\bigskip

\noindent\textbf{Interacting particle system}

\noindent We now justify the system of equations \eqref{eq:MotionFiniteCombined}-\eqref{S7E1}, \eqref{eq:zetaFiniteInteracting} starting from the Newton equations for the interacting particle system (cf.~\eqref{eq:IntNew}), at least during small macroscopic times in suitable scaling limits.

We begin considering the case of interaction potentials with the form (\ref%
{S4E6}), (\ref{S4E7}). We consider the dynamics of a distinguished particle $%
\left( X,V\right) $ in an interacting particle system. Denoting as $\left(
Y_{k},W_{k}\right) ,\ k\in S$ the position and velocity of the remaining
particles of the system, using the change of variables (\ref{S8E6a}) as well
as (\ref{eq:IntNew}) we obtain:%
\begin{align}
\frac{d\xi }{d\tilde{t}}& =V\ \ ,\ \ \frac{dV}{d\tilde{t}}=-\eps
\sum_{j\in S}\nabla _{\xi }\Phi \left( \xi -y_{j}\right)  \label{S9E2} \\
\frac{dy_{k}}{d\tilde{t}}& =W_{k}\ \ ,\ \ \frac{dW_{k}}{d\tilde{t}}%
=-\eps \nabla _{y}\Phi \left( y_{k}-\xi \right) -\eps
\sum_{j\in S}\nabla _{y}\Phi \left( y_{k}-y_{j}\right) \ \ ,\ \ k\in S  \ .
\notag
\end{align}

In order to approximate this system by an evolution equation of the form (%
\ref{eq:TagFric}), first recall the empirical measure $f_{\eps}$ defined in %
\eqref{eq:feps}. Then, the first equation of (\ref{S9E2}) can be rewritten
as:

\begin{equation}
\frac{d\xi}{d\tilde{t}}=V\ \ ,\ \ \frac{dV}{d\tilde{t}}=-\eps\left(
L_{\eps}\right) ^{3}\int_{\R^3}\nabla_{\xi}\Phi\left(
\xi-\eta\right) \rho_{\eps}\left( \eta,\tilde{t}\right) d\eta
\label{S9E3}
\end{equation}
with $\rho_{\eps}\left( y,\tilde{t}\right) =\rho[f_{\eps}(%
\cdot,\tilde{t})](y)$ (cf. \eqref{eq:Spdensity}). 
The second set of equations in (\ref{S9E2}) becomes:%
\begin{equation}
\frac{dy_{k}}{d\tilde{t}}=W_{k}\ \ ,\ \ \frac{dW_{k}}{d\tilde{t}}%
=-\eps\nabla_{y}\Phi\left( y-\xi\right) -\eps\left(
L_{\eps}\right)^3(\nabla_{y}\Phi *\rho_{\eps})(y_k,\tilde{t})
\ ,\ \ k\in S \ .  \label{S9E4}
\end{equation}

Using (\ref{S9E4}) we can derive, arguing as in Subsection \ref%
{RaylCompSuppPot}, the following evolution equation for the particle density 
$f_{\eps}:$ 
\begin{equation}
\partial_{\tilde{t}}f_{\eps}\left( y,w,\tilde{t}\right) +w\cdot
\nabla_{y}f_{\eps}\left( y,w,\tilde{t}\right) - {\eps} \left[
\nabla_{y}\Phi\left( y-\xi\right) +\left( L_{\eps }\right) ^{3}
(\nabla_{y}\Phi * \rho_{\eps})( y,\tilde{t})\right] \cdot\nabla
_{w}f_{\eps}\left( y,w,\tilde{t}\right) =0 \ . \label{S9E5}
\end{equation}

Using the change of variables (\ref{S6E4a}), we can rewrite (\ref{S9E5}) as:%
\begin{align*}
(\partial _{\tilde{t}}& +w\cdot \nabla _{y})\zeta _{{\eps }}\left(
y,w,\tilde{t}\right) \\
-& \left[ \eps \left( L_{\eps }\right) ^{\frac{3}{2}}\nabla
_{y}\Phi \left( y-\xi \right) +\eps \left( L_{\eps }\right)
^{3}(\nabla _{y}\Phi \ast \tilde{\rho}_{\eps })(y,\tilde{t})\right]
\cdot \nabla _{w}\left( g\left( w\right) +\frac{1}{\left( L_{\eps
	}\right) ^{\frac{3}{2}}}\zeta _{\eps }\left( y,w,\tilde{t}\right)
\right) =0
\end{align*}%
where $\tilde{\rho}\left( y,\tilde{t}\right) =\rho \lbrack \zeta (\cdot ,%
\tilde{t})](y)$. Neglecting the term $\frac{1}{\left( L_{\eps
	}\right) ^{\frac{3}{2}}}\zeta _{\eps }$ which might be expected to be
small compared with $g$ we obtain the following approximation for the
fluctuations of the density of scatterers:%
\begin{equation}
\left( \partial _{\tilde{t}}+w\cdot \nabla _{y}\right) \zeta _{\eps
}\left( y,w,\tilde{t}\right) -\left[ \eps \left( L_{\eps
}\right) ^{\frac{3}{2}}\nabla _{y}\Phi \left( y-\xi \right) +\eps
\left( L_{\eps }\right) ^{3}(\nabla _{y}\Phi \ast \tilde{\rho}%
_{\eps })\left( y,\tilde{t}\right) \right] \cdot \nabla _{w}g\left(
w\right) =0.  \label{S9E5a}
\end{equation}

The term $\tilde{\rho}_{\eps }$ has the same order of magnitude as $%
\zeta _{\eps }.$ Then, the terms $w\cdot \nabla _{y}\zeta
_{\eps }\left( y,w,\tilde{t}\right) $ and $\eps \left(
L_{\eps }\right) ^{3}\left[ \int_{\R^3}\nabla _{y}\Phi
\left( y-\eta \right) \tilde{\rho}_{\eps }\left( \eta ,\tilde{t}%
\right) d\eta \right] \cdot \nabla _{w}g\left( w\right) $ yield a comparable
contribution if $\eps \left( L_{\eps }\right) ^{3}$ is of
order one. To consider the general case, we will assume that $\eps
\left( L_{\eps }\right) ^{3}\rightarrow \sigma $ as $\eps
\rightarrow 0,$ where $\sigma $ can be zero, a positive number or infinity.
(In this last case the dependence of $\sigma $ on $\eps $ must be
preserved, but it will not be explicitly written for the sake of
simplicity). We can then expect that $\zeta _{\eps }\sim \zeta $ as $%
\eps \rightarrow 0$ where $\zeta $ solves the equation \eqref{eq:zetaFiniteInteracting}.

\color{black}

\bigskip

\subsection{The case of Coulombian interactions}\label{ssec:JCoulomb}
The goal of this section is to justify the system of equations \eqref{eq:appCoul}-\eqref{T1E4a} starting from the Newton equations for the Rayleigh Gas (cf.~\eqref{eq:RayNew}). We also justify the system of equations \eqref{eq:appCoul}-\eqref{T1E3bis}, \eqref{A1E5} starting from the Newton equations for the interacting particle system (cf.~\eqref{eq:IntNew}). \bigskip

\textbf{Rayleigh Gas case}\\
\noindent
We now consider the evolution of a tagged particle $\left( X,V\right) $
evolving according to the Rayleigh gas dynamics (cf. (\ref{eq:RayNew})) with
interaction potentials decreasing as the Coulomb potential at large
distances. We recall that in this case, in order to be able to define the
random force field we need to impose suitable electroneutrality conditions.
To be concise, we restrict our analysis to the case in which we have only
two types of charges, having opposite signs, but it would be possible to
adapt the arguments to more complicated charge distributions. We will
consider the following types of interaction potentials:%
\begin{equation}
\Phi _{\eps }\left( X\right) =\Phi \left( \frac{\left\vert
X\right\vert }{\eps }\right) \ \ \text{or \ }\phi _{\eps
}\left( X\right) =\eps \Phi \left( \left\vert X\right\vert \right),
\label{S8E5bis}
\end{equation}%
where $\Phi $ is a smooth function which behaves for large values as $\Phi
\left( s\right) \sim \frac{1}{s}$ as $s\rightarrow \infty ,$ as well as the
corresponding asymptotic formulas for the derivatives of $\Phi .$ The first
type of potential in (\ref{S8E5bis}) includes for instance the Landau part
associated to Coulomb potentials (cf. (\ref{S1E5})).

We will assume that there are two types of scatterers, having charges $+1$
and $-1$ respectively. We will denote them as $\left\{ Y^\pm_{k}\right\}
_{k\in S}$ respectively. Then, the evolution equation for the tagged
particle is described by the following set of equations: 
\begin{align}
\frac{dX}{d\tau} & =V\ \ ,\ \ \frac{dV}{d\tau}=-\sum_{j\in
S}\nabla_{X}\Phi_{\eps}\left( X-Y^+_{j}\right) +\sum_{j\in\tilde{S}%
}\nabla_{X}\Phi_{\eps}\left( X-Y^-_{j}\right)  \notag \\
\frac{dY^\pm_{k}}{d\tau} & =W^\pm_{k}\ \ ,\ \ \frac{dW^\pm_{k}}{d\tau}%
=\mp\nabla_{Y}\Phi_{\eps}\left( Y^\pm_{k}-X\right) \ \ ,\ \ k\in S.
\label{S8E6}
\end{align}

Arguing as in Subsection \ref{RaylCompSuppPot}, we can approximate these
equations replacing the particle distributions by 
random fields on the
phase space. We have two different types of charges and therefore we need to
introduce two different densities. We define new variables by means of:%
\begin{equation}
y^\pm_{k}=\frac{Y^\pm_{k}}{L_{\eps}}\ \ ,\ \ \xi=\frac{X}{%
L_{\eps}}\ \ ,\ \ \tilde {t}=\frac{\tau}{L_{\eps}} \ .
\label{S8E6a}
\end{equation}
We will then denote as $%
f^\pm_{\eps }\left( y,w,\tilde{t}\right)$ the empirical densities defined as in \eqref{eq:feps}. 

In this case there is no canonical choice of the characteristic length $%
L_{\eps }.$ The first type of potentials in (\ref{S8E5bis}) has a
characteristic length $\eps $ and the second one has a characteristic
length of order one. However, on these length scales we cannot approximate
the distributions by continuous densities. In the case of interacting
particle systems, which will be considered right after, 
there will be a natural choice of length scale $L_{\eps },$ namely the so-called
Debye screening length. However, in the case of the Rayleigh gas under
consideration here such a natural choice does not exists since the power law
does not have any characteristic length. We will then assume a length scale $%
L_{\eps }$ such that: 
\begin{equation*}
L_{\eps }=\ep^{-\beta}, \ \ \beta\in (0,2).
\end{equation*} 
We define also a rescaled potential $\tilde{\phi}$ by means of:%
\begin{equation}
\Phi_{\eps}\left( X\right) =\tilde{\phi}_{\eps}\left( \frac{X}{%
L_{\eps}}\right) =\tilde{\phi}_{\eps}\left( \xi\right).
\label{S8E5a}
\end{equation}
Then, the equations for the tagged particle in (\ref{S8E6}) can be rewritten
as:%
\begin{equation}
\frac{d\xi }{d\tilde{t}}=V\ \ ,\ \ \frac{dV}{d\tilde{t}}=\left(
L_{\eps }\right) ^{3}\int_{\R^3}\nabla _{\xi }\tilde{\phi}%
_{\eps }\left( \xi -\eta \right) \left[ \rho^-_{\eps }\left(
\eta ,\tilde{t}\right) -\rho^+ _{\eps }\left( \eta ,\tilde{t}\right) %
\right] d\eta ,  \label{S8E7a}
\end{equation}%
where $\rho^\pm _{\eps }\left( y,\tilde{t}\right) = \rho[%
f_{\eps}^\pm(\cdot,\tilde{t})]$. On the other hand the equations for
the scatterers in (\ref{S8E6}) imply: 
\begin{equation}
\partial_{\tilde{t}}f^\pm_{\eps}\left( y,w,\tilde{t}\right) +w\cdot
\nabla_{y}f^\pm_{\eps}\left( y,w,\tilde{t}\right) \mp\nabla_{y}\tilde{%
\phi }_{\eps}\left( y-\xi\right)
\cdot\nabla_{w}f^\pm_{\eps}\left( y,w,\tilde{t}\right) =0 .
\label{S8E7}
\end{equation}
Suppose that the distribution of velocities of the scatterers $\left\{
Y^\pm_{k}\right\} _{k\in S}$ are given by functions $g^\pm \left( w\right) $%
. Then arguing as in Subsection \ref{ssec:Jfinrange} we can approximate the
initial data for (\ref{S8E7}) by means of Gaussian stochastic processes with
average $g^+\left( w\right) $ and $g^-\left( w\right) $ respectively which
satisfy:%
\begin{align}
\mathbb{E}\left[ \left( f^\pm_{\eps}\left( y,w,0\right) -g^\pm\left(
w\right) \right) \left( f^\pm_{\eps}\left( y^{\prime },w^{\prime
},0\right) -g^\pm\left( w^{\prime }\right) \right) \right] &=\frac{%
g^\pm\left( w\right) }{\left( L_{\eps}\right) ^{3}}\delta\left(
y-y^{\prime }\right) \delta\left( w-w^{\prime }\right)  \label{S8E9} \\
\mathbb{E}\left[ \left( f^+_{\eps}\left( y,w,0\right) -g^{+}\left(
w\right) \right) \left( f^-_{\eps}\left( y^{\prime },w^{\prime
},0\right) -g^{-}\left( w^{\prime }\right) \right) \right] &=0 .
\label{S8E9b}
\end{align}
The last equation ensures that the distributions of scatterers are mutually
independent.

It is then natural, arguing as in Subsection \ref{ssec:Jfinrange}, to look
for solutions with the form:%
\begin{equation}
f^\pm_{\eps }\left( y,w,\tilde{t}\right) =g^\pm\left( w\right) +\frac{%
1}{\left( L_{\eps }\right) ^{\frac{3}{2}}}\zeta^\pm _{\eps
}\left( y,w,\tilde{t}\right).  \label{S8E9c}
\end{equation}

Then, neglecting higher order terms, we obtain the the particle fluctuations
satisfy approximately the following problems:%
\begin{align}
\partial _{\tilde{t}}\zeta^\pm _{\eps }\left( y,w,\tilde{t}\right)
+w\cdot \nabla _{y}\zeta^\pm _{\eps }\left( y,w,\tilde{t}\right)
\mp\left( L_{\eps }\right) ^{\frac{3}{2}}\nabla _{y}\tilde{\phi}%
_{\eps }\left( y-\xi \right) \cdot \nabla _{w}g^\pm \left( w\right) &
=0 \label{S9E1a}
\end{align}%
with random initial data $\zeta^\pm _{\eps }(y,w,0)=N^\pm(y,w)$
satisfying:%
\begin{align*}
\mathbb{E}\left[ N^\pm\left( y,w\right) \right] = 0, \quad ,& \mathbb{E}%
\left[ N^+ _{\eps }\left( y_{a},w_{a}\right) N^-\left(
y_{b},w_{b}\right) \right] =0 
\\
\mathbb{E}\left[ N^\pm\left( y_{a},w_{a}\right) N^\pm _{\eps }\left(
y_{b},w_{b}\right) \right] & =g^\pm\left( w_{a}\right) \delta \left(
y_{a}-y_{b}\right) \delta \left( w_{a}-w_{b}\right) . 
\end{align*}

The electroneutrality condition requires that:%
\begin{equation*}
\int_{\R^3}g^+\left( w\right) dw=\int_{\R^3} g^- \left(
w\right) dw.
\end{equation*}

Then (\ref{S8E7a}) can be rewritten as:%
\begin{equation}
\frac{d\xi }{d\tilde{t}}=V\ \ ,\ \ \frac{dV}{d\tilde{t}}=-\left(
L_{\eps }\right) ^{\frac{3}{2}}\int_{\R^3}\nabla _{\xi }%
\tilde{\phi}_{\eps }\left( \xi -\eta \right) \left[ \zeta^+
_{\eps }\left( \eta ,w,\tilde{t}\right) -\zeta^- _{\eps
}\left( \eta ,w,\tilde{t}\right) \right] d\eta .  \label{T1E3}
\end{equation}
Summarizing, we have reduced the original problem to (\ref{S9E1a})-(\ref%
{T1E3}). It is natural, in order to derive a kinetic limit, to consider the
previous model as $\eps \rightarrow 0.$ Moreover, we will assume that 
$\tilde{t}$ is sufficiently small so that we can assume that $V(t)\approx V_0$, i.e.~it is
constant. This concludes the justification. 
\bigskip

\noindent\textbf{Interacting particle system}\\
\noindent 
We now consider interacting particle systems with the form (\ref{eq:IntNew}) in
which the interaction potentials are as in (\ref{S8E5}) (where $\Phi \left(
s\right) \sim \frac{1}{s}$ as $s\rightarrow \infty $). This question has
been extensively studied in the physical literature due to his relevance in
astrophysics and the in the theory of plasmas (cf. for instance \cite{RMcDJ, Trub}). In this case we cannot assume as in Subsection \ref{ssec:Jfinrange} that the interaction potential has a large, but finite
range. On the contrary, in this case, the range of the potential is
infinity. However, it turns out that, assuming some stability conditions for
the Vlasov medium similar to the ones discussed above, there exists a
characteristic length, namely the so-called Debye length which yields the
effective interaction length between the particles of the system. This
length is characterized by the fact that the potential energy of a particle is  comparable to its kinetic energy. 
A precise definition will be given in (\ref%
{DebDef}).

We now study the evolution of a system of particles with interactions that behave at large distances as Coulombian potentials. We will
assume in most of the following that the potentials are smooth and in
particular, that the deflection experienced by two colliding particles which
interact by means of this potential is small. In the case of point charges,
it is possible to cut the potential as it was made in \cite{NSV} in order to
separate the Boltzmann collisions (due to close binary encounters between
particles) and weak deflections due to the effect of many random collisions.
Given that the Boltzmann collisions take place in a larger time scale than
the small deflections (due to the presence of the Coulombian logarithm) we
will ignore that part of the potential in the following.

Assume that there are two types of scatterers with opposite charges, namely $%
\left\{ Y_{k}\right\} _{k\in S}$ and $\left\{ \tilde{Y}_{k}\right\} _{k\in 
	\tilde{S}}$ having opposite charges. The main difference between the case
considered in Subsection \ref{ssec:Jfinrange} and the case considered here, besides the fact that we need to include in the dynamics at
least two types of particles in order to have electroneutrality, is the fact
that due to the power law behaviour of the potential, there is not any
intrinsic length scale that we can call the range of the potential.

More precisely, we will assume that the interaction potential between the
particles has the form in (\ref{S8E5}). We can assume in particular that $%
\Phi \left( \xi \right) \simeq \frac{1}{\left\vert \xi \right\vert }$ for
large values, introducing a cutoff near the origin in order to avoid large
deflections as indicated above. We rescale the variables as in (\ref{S8E6a}%
), where $L_{\eps }$ will be chosen now as the so-called Debye length
which will be chosen shortly. We will just assume for the moment that $%
L_{\eps }\gg 1,$ i.e.~that it is larger than the average particle
distance. Then, the evolution of the rescaled system of particles (in which
we distinguish a tagged particle) is given by:\ 
\begin{align}
\frac{d\xi }{d\tilde{t}}& =V\ \ ,\ \ \frac{dV}{d\tilde{t}}=-\sum_{j\in
	S}\nabla _{\xi }\tilde{\phi}_{\eps }\left( \xi -y_{j}^{+}\right)
+\sum_{j\in \tilde{S}}\nabla _{\xi }\tilde{\phi}_{\eps }\left( \xi
-y_{j}^{-}\right) ,\quad \frac{dy_{k}^{\pm }}{d\tilde{t}}=W_{k}^{\pm }
\label{T5E7} \\
\frac{dW_{k}^{\pm }}{d\tilde{t}}& =\mp \nabla _{y}\tilde{\phi}_{\eps
}\left( y_{k}^{\pm }-\xi \right) -\sum_{j\in S,\ j\neq k}\left( \nabla _{y}%
\tilde{\phi}_{\eps }\left( y_{k}^{\pm }-y_{j}^{\pm }\right) -\nabla
_{y}\tilde{\phi}_{\eps }\left( y_{k}^{\pm }-y_{j}^{\mp }\right)
\right) \ \ ,\ \ k\in S.  \notag
\end{align}

Notice that we assume that all the particles have the same mass. It would be
possible to consider more general situations.

We approximate the system of equations (\ref{T5E7}) by a Vlasov equation. To this end we introduce particle densities as it was made in (\ref{S8E7a}). Then,
arguing as above we obtain the following evolution
equation for the tagged particle 
\begin{equation}
\frac{d\xi }{d\tilde{t}}=V\ \ ,\ \ \frac{dV}{d\tilde{t}}=-\left(
L_{\eps }\right) ^{3}\int_{\R^3}\nabla _{\xi }\tilde{\phi}%
_{\eps }\left( \xi -\eta \right) \left[ \rho _{\eps
}^{+}\left( \eta ,\tilde{t}\right) -\rho _{\eps }^{-}\left( \eta ,%
\tilde{t}\right) \right] d\eta ,  \label{A2E7}
\end{equation}%
and for the scatterers distributions:%
\begin{align} \label{eq:f1}
&\partial _{\tilde{t}}f_{\eps }^{\pm
}\left( y,w,\tilde{t}\right) +w\cdot \nabla _{y}f_{\eps }^{\pm
}\left( y,w,\tilde{t}\right) \nonumber\\ &-\left[ \left( L_{\eps }\right)
^{3}\int_{\R^3}\nabla _{y}\tilde{\phi}_{\eps }\left( y-\eta
\right) (\rho _{\eps }^{\pm }-\rho _{\eps }^{\mp })\left( \eta
,\tilde{t}\right) d\eta \pm \nabla _{y}\tilde{\phi}_{\eps }\left(
y-\xi \right) \right] \cdot \nabla _{w}f_{\eps }\left(
y,w,\tilde{t}\right) =0 
\end{align}
where $\rho^\pm _{\eps } = \rho[f^\pm_{\eps}]$.

We can now define the Debye screening length $L_{\eps }.$ This length
will be chosen in order to make the terms associated to particle transport
(i.e. $w\cdot \nabla _{y}f_{\eps },\ w\cdot \nabla _{y}\tilde{f}%
_{\eps }$) and the terms describing the self-interactions between
scatterers (i.e. $\left[ \nabla _{y}\tilde{\phi}_{\eps }\ast \rho
_{\eps }\right] \cdot \nabla _{w}f_{\eps }$ and similar ones)
comparable. To find the correct scale, we use that $\tilde{\phi}%
_{\eps }\left( \xi \right) =\Phi _{\eps }\left( L_{\eps
}\xi \right) \sim \frac{\eps }{L_{\eps }}\frac{1}{\left\vert
	\xi \right\vert }$ if $\left\vert \xi \right\vert $ is of order one we
obtain the following approximation: 
\begin{equation}  \label{eq:sec5approx}
\frac{d\xi }{d\tilde{t}}=V\ \ ,\ \ \frac{dV}{d\tilde{t}}=-\eps \left(
L_{\eps }\right) ^{2}\int_{\R^3}\nabla _{\xi }\left( \frac{1%
}{\left\vert \xi -\eta \right\vert }\right) \left[ \rho^+_{\eps
}\left( \eta ,\tilde{t}\right) -\rho^- _{\eps }\left( \eta ,\tilde{t}%
\right) \right] d\eta \ ,
\end{equation}

\begin{eqnarray*}
	&&\partial _{\tilde{t}}f_{\eps }^{\pm }\left( y,w,\tilde{t}\right)
	+w\cdot \nabla _{y}f_{\eps }^{\pm }\left( y,w,\tilde{t}\right) \\
	&&-\left[ \eps \left( L_{\eps }\right) ^{2}\int_{\mathbb{R}%
		^{3}}\nabla _{y}\left( \frac{1}{\left\vert y-\eta \right\vert }\right) (\rho
	_{\eps }^{\pm }-\rho _{\eps }^{\mp })\left( \eta ,\tilde{t}%
	\right) d\eta \pm \frac{\eps }{L_{\eps }}\nabla _{y}\left( 
	\frac{1}{\left\vert y-\xi(\tilde{t}) \right\vert }\right) \right] \cdot
	\nabla _{w}f_{\eps }^{\pm }\left( y,w,\tilde{t}\right) =0.
\end{eqnarray*}

Then, choosing 
\begin{equation}
\eps \left( L_{\eps }\right) ^{2}=1  \label{DebDef}
\end{equation}
the self consistent force term is of the same order of magnitude as the
convective term.

With this choice of scaling, the system \eqref{A2E7}- \eqref{eq:f1} reads: 
\begin{equation}
\begin{aligned} 
&\frac{d\xi }{d\tilde{t}}=V\ \ ,\ \
\frac{dV}{d\tilde{t}}=-\left( L_{\eps }\right) ^{2}\int_{\R^3}\nabla _{X}\Phi \left( L_{\eps }\left\vert \xi -\eta \right\vert \right) \left[ \rho^+_{\eps }\left( \eta ,\tilde{t}\right) -\rho^- _{\eps }\left( \eta ,\tilde{t}\right) \right] d\eta \\ & 0=\partial _{\tilde{t}}f_{\eps }^{\pm }\left( y,w,\tilde{t}\right) +w\cdot \nabla _{y}f_{\eps }^{\pm }\left( y,w,\tilde{t}\right) \\ -&\left( L_{\eps }\right) ^{2}\left[ \int_{\R^3}\nabla _{Y}\Phi \left( L_{\eps }\left\vert y-\eta \right\vert \right) (\rho _{\eps }^{\pm }-\rho _{\eps }^{\mp })\left( \eta ,\tilde{t}\right) d\eta \pm \frac{\eps }{L_{\eps }}\nabla _{Y}\Phi \left( L_{\eps }\left\vert y-\xi \right\vert \right) \right] \cdot \nabla _{w}f_{\eps }^{\pm }\left( y,w,\tilde{t}\right) .
\end{aligned}
\label{eq:system1}
\end{equation}%
where we use the fact that $\Phi =\Phi \left( X\right) $ and we denote as $%
\nabla _{X}\Phi $ the usual gradient with respect to the variable $X.$

The system \eqref{eq:system1} must be solved with the initial conditions (%
\ref{S8E9})-(\ref{S8E9b}), where the average $\int_{\R^3}\left[
f^+_{\eps }\left( y,w,0\right) -f^-_{\eps }\left( y,w,0\right) %
\right] dw$ vanishes. This justifies the system  \eqref{eq:appCoul}-\eqref{T1E3bis}, \eqref{A1E5} as claimed. 

\color{black}

\bigskip

\subsection{The case of grazing collisions}

  The goal of this section is to justify the system of equations \eqref{T4E7}-\eqref{T4E7bis}, \eqref{T4E4} starting from the Newton equations for the Rayleigh Gas (cf.~\eqref{eq:RayNew}) in the grazing collisions regime. We will assume that there is only one type of
scatterers. More precisely  
\begin{align} 
\frac{dX}{d\tau }& =V\ \ , \ \ 
\frac{dV}{d\tau }=-\sum_{j\in S}\nabla_{X}\Phi _{\eps } \left( X-Y_{j} \right) \label{T3E8} \\
\frac{dY_{k}}{d\tau }& =W_{k}\ \ ,\ \ \frac{dW_{k}}{d\tau }=-\nabla _{Y}\phi
_{\eps }\left( Y_{k}-X\right) \ \ ,\ \ k\in S 
\end{align}
with the interaction potential  $\Phi _{\eps }\left( X\right) =\eps \Phi \left( \frac{%
\left\vert X\right\vert }{\ell _{\eps }}\right) $ . 
Here $\Phi =\Phi \left( s\right) $ is a smooth function which decreases
sufficiently fast as $s\rightarrow \infty ,$ together with 
the asymptotic formulas for its derivatives.  We can assume for
instance that $\Phi \left( s\right) \sim \frac{1}{s^{\alpha }}$ as $%
s\rightarrow \infty $ with $\alpha >1,$ or that $\Phi \left( s\right) $
decreases exponentially. We assume that $\ell _{\eps }\lesssim 1.$ In
particular we might assume that $\ell _{\eps }\rightarrow 0.$ We then
argue as in Subsections \ref{ssec:Jfinrange}, \ref{ssec:JCoulomb} and
introduce a new set of variables by means of:%
\begin{equation*}
y_{k}=\frac{Y_{k}}{L_{\eps }}\ \ ,\ \ \xi =\frac{X}{L_{\eps }}%
\ \ ,\ \ \tilde{t}=\frac{\tau }{L_{\eps }} \ .
\end{equation*}
The length $L_{\eps}$ is only an auxiliary length $L_{\eps}
\gg 1$ that enables us to approximate the background distribution by a
Gaussian. Notice that in this case it is not convenient to take $%
L_{\eps}=\ell_{\eps}$ because, since $\ell_{\eps}%
\lesssim1$, we would not obtain an approximation of the distribution of
particles by means of a random field. Consider again the empirical
density defined by \eqref{eq:feps}. We also define a rescaled potential by
means of 
\begin{equation*}
\Phi_{\eps}\left( X\right) =\tilde{\phi}_{\eps}\left( \frac{X}{%
L_{\eps}}\right) =\tilde{\phi}_{\eps}\left( \xi\right) \ .
\end{equation*}
Using the change of variables $\tilde{t}=\frac{\tau }{L_{\eps }},\ \xi =%
\frac{X}{L_{\eps }}$ we can rewrite (\ref{T3E8}) as:%
\begin{equation*}
\frac{d\xi }{d\tilde{t}}=V,\quad \frac{dV}{d\tilde{t}}=-\sum_{j\in S}\nabla
_{\xi }\tilde{\phi}_{\eps }\left( \xi -y_{j}\right) =-\left(
L_{\eps }\right) ^{3}\int_{\R^3}\nabla _{\xi }\tilde{\phi}
_{\eps }\left( \xi -\eta \right) \rho _{\eps }\left( \eta ,
\tilde{t}\right) d\eta,
\end{equation*}
where $\rho _{\eps }\left( y,\tilde{t}\right) =\rho \lbrack
f_{\eps }(\cdot ,\tilde{t})](y)$ as defined in \eqref{eq:Spdensity}.
On the other hand, using the equations for the scatterers in (\ref{T3E8}) we
obtain: 
\begin{equation}
\partial_{\tilde{t}}f_{\eps}\left( y,w,\tilde{t}\right) +w\cdot
\nabla_{y}f_{\eps}\left( y,w,\tilde{t}\right) -\nabla_{y}\tilde{\phi }%
_{\eps}\left( y-\xi\right) \cdot\nabla_{w}f_{\eps}\left( y,w,%
\tilde{t}\right) =0 \ . \label{T4E2}
\end{equation}

Arguing as in the previous Subsections it is
natural to require that the initial value for (\ref{T4E2}) satisfies
\begin{equation}
\mathbb{E}\left[ \left( f_{\eps }\left( y_{a},w_{a},0\right) -g\left(
w_{a}\right) \right) \left( f_{\eps }\left( y_{b},w_{b},0\right)
-g\left( w_{b}\right) \right) \right] =\frac{g\left( w_{a}\right) }{\left(
L_{\eps }\right) ^{3}}\delta \left( y_{a}-y_{b}\right) \delta \left(
w_{a}-w_{b}\right)  \label{T4E3}
\end{equation}%
where $g\left( w\right) $ is the distribution of velocities of the
scatterers $\left\{ Y_{k}\right\} _{k\in S}.$ We then look for a solution (%
\ref{T4E2}), (\ref{T4E3}) with the form:%
\begin{equation*}
f_{\eps }\left( y,w,\tilde{t}\right) =g\left( w\right) +\frac{1}{%
\left( L_{\eps }\right) ^{\frac{3}{2}}}\zeta _{\eps }\left(
y,w,\tilde{t}\right) \ .
\end{equation*}
Then, approximating (\ref{T4E2}) by the leading order terms we obtain \eqref{T4E4}, namely
\begin{equation*}
(\partial_{\tilde{t}}+w\cdot\nabla_y)\zeta_{\eps}\left( y,w,\tilde{t}%
\right) -\left( L_{\eps}\right) ^{\frac{3}{2}}\nabla_{y}\tilde{\phi}%
_{\eps }\left( y-\xi\right) \cdot\nabla_{w}g\left( w\right) =0, \ \  \ 
\zeta_{\eps}\left( y,w,0\right) =N_{\eps}\left( y,w\right)
\end{equation*}
where the (random) initial data is characterized by (cf.~\eqref{T4E6})
\begin{align*}
\mathbb{E}\left[ N_{\eps}\left( y,w\right) \right] &=0  
\nonumber \\
\mathbb{E}\left[ N_{\eps}\left( y_{a},w_{a}\right) N_{\eps
}\left( y_{b},w_{b}\right) \right] &=\frac{g\left( w_{a}\right) }{\left(
L_{\eps}\right) ^{3}}\delta\left( y_{a}-y_{b}\right) \delta\left(
w_{a}-w_{b}\right).  
\end{align*}
On the other hand, the evolution of the tagged particle (cf. (\ref{T3E8}))
can be approximated as: 
\begin{equation*}
\frac{d\xi }{d\tilde{t}}=V_0\ \ ,\ \ \frac{dV}{d\tilde{t}}=-\left(
L_{\eps }\right) ^{3}\int_{\R^3}\int_{\R^3}\nabla
_{\xi }\tilde{\phi}_{\eps }\left( \xi -\eta \right) \zeta
_{\eps }\left( \eta ,w,\tilde{t}\right) dwd\eta,  
\end{equation*}
where we used that $V(t)\approx V_0$ for short macroscopic times. 

We further notice that in the  case of grazing collisions the justification of the approximating dynamics starting from the interacting particle system works analogously.


\section{Proofs} \label{Sec:Pfs}

\subsection{Interactions with finite range much larger than the particle distance\label{RaylCompSuppPot}}
We now prove Theorem \ref{thm:FiniteRange}. We will first prove the result for the Rayleigh Gas systems and later for interacting particle systems. \medskip

\begin{proofof}[Proof of Theorem~\ref{thm:FiniteRange}]
	\textit{Part I: Rayleigh Gas} \medskip

	Due to the linearity of (\ref{S7E1})-(\ref{eq:zetaFiniteRayleigh}) we can write 
	$$\zeta\left( y,w,\tilde{t}\right) = \zeta_{1}\left( y,w,\tilde{t}\right) +\zeta
	_{2}\left( y,w,\tilde{t}\right) ,$$ where:%
	\begin{equation}
	\partial_{\tilde{t}}\zeta_{1}\left( y,w,\tilde{t}\right) +w\cdot\nabla
	_{y}\zeta_{1}\left( y,w,\tilde{t}\right) =0\ \ ,\ \ \zeta_{1}\left(
	y,w,0\right) =N\left( y,w\right)  \label{S7E2}
	\end{equation}%
	\begin{equation}
	\partial_{\tilde{t}}\zeta_{2}\left( y,w,\tilde{t}\right) +w\cdot\nabla
	_{y}\zeta_{2}\left( y,w,\tilde{t}\right)
	-\theta\nabla_{y}\Phi\left( y-\xi\right)
	\cdot\nabla_{w}g\left( w\right) =0\ \ ,\ \ \zeta _{2}\left( y,w,0\right) =0.
	\label{S7E3}
	\end{equation}
	Equation (\ref{S7E2}) can be solved explicitly using characteristics:%
	\begin{equation}
	\zeta_{1}\left( y,w,\tilde{t}\right) =N\left( y-w\tilde{t},w\right).
	\label{S7E4}
	\end{equation}
	
	The stochastic process in (\ref{S7E4}) has non
	trivial correlations in time. On the other hand, it is easily seen using (%
	\ref{S7E1}) that for any $\tilde{t}\in \mathbb{R}$ the stochastic process $%
	\zeta _{1}\left( y,w,\tilde{t}\right) $ is the same in law as $N\left(
	y,w\right) .$
	
	Thanks to \eqref{eq:MotionFiniteCombined} we have that $V_0$
	is a constant and $\xi \left( \tilde{t}\right) =\xi \left( 0\right) +V_0
	\tilde{t}$. Assuming that $\xi \left( 0\right) =0$
	without loss of generality, we can write (\ref{S7E3}) as:%
	\begin{equation*}
	\partial _{\tilde{t}}\zeta _{2}\left( y,w,\tilde{t}\right) +w\cdot \nabla
	_{y}\zeta _{2}\left( y,w,\tilde{t}\right) -\theta\nabla
	_{y}\Phi \left( y-V_0)\tilde{t}\right) \cdot \nabla _{w}g\left( w\right) =0\ \
	,\ \ \zeta _{2}\left( y,w,0\right) =0,
	\end{equation*}
	whose solution is given by: 
	\begin{equation}
	\zeta_{2}\left( y,w,\tilde{t}\right) =\theta\nabla_{w}g\left(
	w\right) \cdot\int_{0}^{\tilde{t}}\nabla_{y}\Phi\left( y-w\left( \tilde {t}%
	-s\right) -V_0s\right) ds .  \label{S7E5}
	\end{equation}
	
	Using the decomposition into $\zeta_{1}, \zeta_2$ (cf.~\eqref{S7E2}-\eqref{S7E3}) 
	we can decompose the force acting on the tagged particle in \eqref{eq:MotionFiniteCombined}
	two pieces. The one associated to $\zeta_{1}$ is a time-dependent random
	force field which is not affected by the tagged particle. The term
	associated to $\zeta_{2}$ yields a deterministic term which depends only on
	the velocity of the tagged particle. Let $\tilde{\rho}_{1,2 }\left( y,\tilde{%
		t}\right) = \rho[\zeta_{1,2}(\cdot,\tilde{t})](y)$, then $\tilde{\rho}%
	_{1}\left( y,\tilde{t}\right) $ is a Gaussian random variable which can be
	characterized by means of the following set of expectations and correlations
	(cf.~Appendix \ref{appA}):%
	\begin{equation*}
	\mathbb{E}\left[ \tilde{\rho}_{1}\left( y,\tilde{t}\right) \right] =0
	\end{equation*}%
	\begin{equation}
	\mathbb{E}\left[ \tilde{\rho}_{1}\left( y_{1},\tilde{t}_{1}\right) \tilde{%
		\rho}_{1}\left( y_{2},\tilde{t}_{2}\right) \right] =\frac{1}{\left( \tilde{t}%
		_{1}-\tilde{t}_{2}\right) ^{3}}g\left( \frac{y_{1}-y_{2}}{\tilde{t}_{1}-%
		\tilde{t}_{2}}\right) \ \ ,\ \ \ \tilde{t}_{1}>\tilde{t}_{2} \ . \label{S7E7}
	\end{equation}
	
	
	We can then rewrite \eqref{eq:MotionFiniteCombined} as 
	\begin{align}
		\theta F_g(\xi, \tilde{t};\omega)&= -\theta \int_{\mathbb{R}%
		^{3}}\nabla _{\xi }\Phi \left( \xi -\eta \right) \tilde{\rho}_{1}\left( \eta
	,\tilde{t}\right) d\eta -\theta \int_{\R^3}\nabla
	_{\xi }\Phi \left( \eta \right) \tilde{\rho}_{2}\left( \eta +V_0\tilde{t},%
	\tilde{t}\right) d\eta \nonumber \\
		&= \theta F_{g}^{(1)}\left( \xi ,\tilde{t}\right) + \theta ^2 H_{g}\left(\tilde{t};V_0\right) . \label{S8E1a}
	\end{align}
	For $\tilde{t} \rightarrow \infty$, the friction term $H_g$ converges to:
	\begin{align}
	 \lim_{\tilde{t}\to \infty}\theta ^2 H_{g}\left(\tilde{t};V_0\right)   &=
	- \theta^2 \Lambda _{g}\left( V_0\right), \quad 
	\text{with}  \notag \\
	\Lambda _{g}\left( V_0\right) & =-\int_{\R^3}d\eta \int_{\mathbb{R}%
		^{3}}dw\nabla _{\eta }\Phi \left( \eta \right) \left[ \nabla _{w}g\left(
	w\right) \cdot \int_{-\infty }^{0}\nabla _{y}\Phi \left( \eta +\left(
	w-V_0\right) s\right) ds\right] \nonumber\\
	&=-\pi \int_{\R^{3}} \int_{\R^{3}} (k\otimes k) |\hat{\Phi}(k)|^2 \delta (k\cdot(v-w))\nabla g(w) \ud{w} \ud{k} .  \label{S8E1}
	\end{align}	
	The random force field $F_{g}^{(1)}\left( \xi \right) $ is a Gaussian field with
	zero average. The correlations between the values of the field at different
	points can be computed using (\ref{S7E7}) and (\ref{S8E1a}). We have: 
	\begin{equation*}
	\mathbb{E}\left[ F_{g}^{(1)}\left( \xi _{1},\tilde{t}_{1}\right) \otimes
	F_{g}^{(1)}\left( \xi _{2},,\tilde{t}_{2}\right) \right] =\int_{\mathbb{R}%
		^{3}}d\eta _{1}\int_{\R^3}d\eta _{2} \frac{\nabla \Phi \left( \eta
		_{1}\right) \otimes \nabla \Phi \left( \eta _{2}\right) }{\left( \tilde{t}%
		_{1}-\tilde{t}_{2}\right) ^{3}}g\left( \frac{\eta _{1}+\xi _{1}-\eta
		_{2}-\xi _{2}}{\tilde{t}_{1}-\tilde{t}_{2}}\right).
	\end{equation*}
	For $\mathcal{F}_g(x,t):=F_g(\xi,\tilde{t})$ this yields for $v=V_0$:
	\begin{align} \label{eq:Rayclaim1}
		 \lim_{\theta\to 0}\mathbb{E}\left[ (\mathcal{F}_{g}\left( v t_1,{t}_{1}\right)+\Lambda_{g}(v)) \otimes
		(\mathcal{F}_{g}\left( v t_2, {t}_{2}\right)+\Lambda_{g}(v)) \right] = 	D_{g}(v) \delta(t_1-t_2), 
		\\ 
		D_{g}\left( v\right) =\int_{0}^{\infty }\frac{ds}{s^{3}}\int_{\mathbb{R}%
		^{3}}d\eta _{1}\int_{\R^3}d\eta _{2}\nabla _{\eta }\Phi \left(
	\eta _{1}\right) \otimes \nabla _{\eta }\Phi \left( \eta _{2}\right) g\left(
	v+\frac{\eta _{1}-\eta _{2}}{s}\right) \nonumber \\
	=\pi \int_{\R^3}\int_{\R^3} dk_1 dk_2 (k \otimes k) |\hat{\Phi}(k)|^2 \delta(k\cdot (v-w)) g(w) \ud{w}, \label{eq:Rayclaim2}
	\end{align} 
	where in the last line we have made use of Plancherel's identity.
	This concludes the proof of Theorem~\ref{thm:FiniteRange} for the Rayleigh gas case.
\bigskip

\noindent \textit{Part II: Interacting particle systems} \medskip

We can then decompose $\zeta$ in (\ref{S7E1})-(\ref{eq:zetaFiniteRayleigh}) as follows: 
\begin{equation}
\zeta \left( y,w,\tilde{t}\right) =\zeta _{1}\left( y,w,\tilde{t}\right)
+\zeta _{2}\left( y,w,\tilde{t}\right) .  \label{S9E7}
\end{equation}%
Here $\zeta _{1}$ solves
\begin{align}
(\partial _{\tilde{t}}+w\cdot \nabla _{y}\zeta _{1})\left( y,w,\tilde{t}%
\right) -\sigma \nabla _{w}g\left( w\right) \cdot (\nabla _{y}\Phi \ast 
\tilde{\rho}_{1})\left( \eta ,\tilde{t}\right) d\eta & =0  \label{T1E5} \\
\zeta _{1}\left( y,w,0\right) & =N\left( y,w\right)  \label{T1E6}
\end{align}%
with $\tilde{\rho}_{1}\left( \eta ,\tilde{t}\right) =\rho \lbrack \zeta
_{1}(\cdot ,\tilde{t})](y)$. On the other hand, $\zeta _{2}$
solves: 
\begin{align}
(\partial _{\tilde{t}}+w\cdot \nabla _{y})\zeta _{2}\left( y,w,\tilde{t}%
\right) -\sigma \nabla _{w}g\left( w\right) \cdot (\nabla _{y}\Phi \ast 
\tilde{\rho}_{2})(y,\tilde{t})& =\theta \nabla _{y}\Phi \left( y-V_0\tilde{t}\right) \cdot \nabla
_{w}g\left( w\right)  \label{T1E8} \\
\zeta _{2}\left( y,w,0\right) & =0  \label{T1E9}
\end{align}%
with $\tilde{\rho}_{2}\left( \eta ,\tilde{t}\right) =\rho \lbrack \zeta
_{2}(\cdot ,\tilde{t})](y)$.

The set of equations (\ref{S9E7})-(\ref{T1E9}) yields the fluctuations of
the scatterer density in the phase space. The contribution $\zeta _{1}$
contains the ``noisy" part of the fluctuations. The contribution $\zeta
_{2} $ yields the perturbation to the scatterers density induced by the
presence of the distinguished particle $\left( X,V\right)$. It is crucial
to notice that if $\sigma $ is of order one, the resulting densities $\zeta
_{1},$ $\zeta _{2}$ would be different from those obtained for Rayleigh
gases (cf.~(\ref{S7E4}), (\ref{S7E5})). The terms proportional to $\sigma $
give the contribution due to the interactions of the scatterers with
themselves. The problems (\ref{T1E5})-(\ref{T1E6}) and (\ref{T1E8})-(\ref%
{T1E9}) are linear and can be solved using Fourier and Laplace transforms
(cf.~\cite{La, LL2}).
In order to solve the problems (\ref{T1E5})-(\ref{T1E6}) and (\ref{T1E8})-(%
\ref{T1E9}) we define a fundamental solution associated to the operator on
the left hand side of (\ref{T1E5}), (\ref{T1E8}). We define the function $%
G_{\sigma }\left( y,w,w_{0},\tilde{t}\right) $ as the solution of the
problem:%
\begin{align}
(\partial _{\tilde{t}}+w\cdot \nabla _{y}G_{\sigma })\left( y,w,w_{0},\tilde{%
	t}\right) -\sigma \nabla _{w}g\left( w\right) \cdot \int_{\mathbb{R}%
	^{3}}\nabla _{y}\Phi \left( y-\eta \right) \Xi _{\sigma }\left( \eta ,w_{0},%
\tilde{t}\right) d\eta & =0  \label{T2E2} \\
G_{\sigma }\left( y,w,w_{0},0\right)  =\delta \left( y\right) &\delta \left(
w-w_{0}\right),  \label{T2E4}
\end{align}%
where $\Xi _{\sigma }\left( y,w_{0},\tilde{t}\right) =\rho \lbrack G_{\sigma
}\left( \cdot ,w_{0},\tilde{t}\right) ](y)$. 
We can then write $\zeta_{1},\ \zeta_{2}$ in terms of $G_{\sigma}$ as:%
\begin{align}
\zeta_{1}\left( y,w,\tilde{t}\right) &=\int_{\R^3}d\eta \int_{%
	\R^3}dw_{0}G_{\sigma}\left( y-\eta,w,w_{0},\tilde{t}\right)
N\left( \eta,w_{0}\right)  \label{T2E5} \\
\zeta_{2}\left( y,w,\tilde{t}\right) &=\theta
	\int_{0}^{\tilde{t}}\int_{\R^3}\int_{\mathbb{R}%
	^{3}}G_{\sigma}\left( y-\eta,w,w_{0},\tilde{t}-s\right) \left[
\nabla_{y}\Phi\left( \eta-V_0 s\right) \cdot\nabla_{w}g\left( w_{0}\right) %
\right] d\eta dw_{0}ds \ . \label{T2E6}
\end{align}
Using now (\ref{S9E7}), (\ref{T2E5}), (\ref{T2E6}) we can rewrite $F_{g}$ in \eqref{eq:MotionFiniteCombined} as 
\begin{equation}
\theta {F}_{g}\left( \xi ,\tilde{t}\right) = \theta
{F}^{(1)}_{g}\left( \xi ,\tilde{t}%
\right) +\theta^{2} H_{g}\left( 
\tilde{t};V_0\right)  \label{eq:med1}
\end{equation}%
where:%
\begin{align*}
{F}^{(1)}_{g}\left( \xi ,\tilde{t}\right) & =\int_{(\R^3)^3}\nabla
_{y}\Phi \left( y-\xi \right) \Xi _{\sigma }\left( y-\eta ,w_{0},\tilde{t}\right) N\left( \eta
,w_{0}\right) \ud{y} \ud{w_0} \ud{\eta} \\
H_{g}\left( \tilde{t};V_0\right) & =\int_{0}^{\tilde{t}}ds\int_{(\R^3)^3}\nabla \Phi
\left( y-V_0\tilde{t}\right)  \Xi _{\sigma }\left( y-\eta ,w_{0},\tilde{t}-s\right) \left[
\nabla\Phi \left( \eta -V_0s\right) \cdot \nabla _{w}g\left(
w_{0}\right) \right] d\eta dw_{0}dy,
\end{align*}
which describe the friction and the random force field  terms. 

We are interested in the dynamics given by \eqref{eq:MotionFiniteCombined} with $F_g$ as in \eqref{eq:med1}, in times
of the order of the mean free time between collisions. In those times we
have $\tilde{t}\gg 1.$ Therefore it is natural to study the asymptotic
behaviour of ${F}^{(1)}_{g}\left( \xi ,\tilde{t}\right) $ and the friction
coefficient $H_{g}\left( \tilde{t};V_0\right) $ as $\tilde{t}\rightarrow
\infty .$ We define $ \lim_{\tilde{t}\to \infty} H_{g}\left(\tilde{t};V_0\right)   =	- \Lambda _{g}\left( V_0\right)$. Then 
\begin{equation}
-\Lambda_g\left(V_0\right) =\int_{(\R^3)^3}\nabla\Phi \left( y\right) \int_{0}^{\infty }ds \Xi_{\sigma }\left( y-\eta
,w_{0},s\right) \left[ \nabla\Phi \left( \eta +V_0s\right) \cdot
\nabla _{w}g\left( w_{0}\right) \right] d\eta dw_{0}dy.  \label{T2E8}
\end{equation}
Notice that since $G_{\sigma }$  decays sufficiently fast in $\tilde{t}$, the friction term is well defined by means of (\ref{T2E8}%
). 
%
%

We compute now the statistical properties of the noise term. Notice that $F^{(1)}_{g,\infty}\left( \xi ,\tilde{t}\right) $ is a
Gaussian noise with zero average. Therefore, in order to characterize $%
F^{(1)}_{g,\infty}\left( \xi ,\tilde{t}\right) $ we just need to compute the covariance
function which is given by:%
\begin{align*}
\mathbb{E}& \left[ F^{(1)}_{g,\infty}\left( \xi _{1},\tilde{t}_{1}\right) F^{(1)}_{g,\infty}\left( \xi
_{2},\tilde{t}_{2}\right) \right] =\lim_{T\rightarrow \infty }\int_{\mathbb{R%
	}^{3}}\int_{\R^3}\int_{\R^3}\int_{\R^3}\nabla
_{y}\Phi \left( y_{1}\right) \nabla _{y}\Phi \left( y_{1}\right)
dy_{1}dy_{2}dw_{1}dw_{2}\cdot \\
& \cdot \int_{\R^3}d\eta _{1}\int_{\R^3}dw_{0,1}\int_{%
	\R^3}d\eta _{2}\int_{\R^3}dw_{0,2}G_{\sigma }\left(
y_{1}-\eta _{1},w_{1},w_{0,1},\tilde{t}_{1}+T\right) G_{\sigma }\left(
y_{2}-\eta _{2},w_{2},w_{0,2},\tilde{t}_{2}+T\right) \cdot \\
& \cdot \mathbb{E}\left[ N\left( \eta _{1}+\xi _{1},w_{0,1}\right) N\left(
\eta _{2}+\xi _{2},w_{0,2}\right) \right] .
\end{align*}%
Using (\ref{S7E1}) and evaluating the integrals using the dirac masses, we
obtain%
\begin{equation}
\begin{aligned} &\mathbb{E}\left[ F^{(1)}_{g,\infty}\left( \xi _{1},\tilde{t}_{1}\right)
\otimes F^{(1)}_{g,\infty}\left( \xi _{2},\tilde{t}_{2}\right) \right]
=\lim_{T\rightarrow \infty }\int_{(\R^3)^2}\nabla _{y}\Phi \left(
y_{1}\right) \otimes \nabla _{y}\Phi \left( y_{2}\right) dy_{1}dy_{2}\cdot
\\ &\cdot \int_{(\mathbb{R}^3)^2}d\eta_1 dw_{0,1}\Xi_\sigma \left(
y_{1}-\eta _{1},w_0,\tilde{t}_{1}+T\right) \Xi_\sigma \left( y_{2}-\eta
_{1}+\xi _{2}-\xi _{1},w_0,\tilde{t} _{2}+T\right) g\left( w_0\right) ,
\end{aligned}  \label{eq:FFCorr}
\end{equation}%
where $\Xi _{\sigma}$ is the velocity marginal of $G_{\sigma }$ defined by 
\begin{equation}\label{def:Xisigma}
\Xi _{\sigma }(y,w_{0},t)=\int_{{\mathbb{R}}^{3}}G_{\sigma }(y,w,w_{0},t)\;%
\mathrm{d}{w}.
\end{equation}

%


The simplest way of solving (\ref{T2E2})-(\ref{T2E4}) is applying Fourier in
the $x$ variable and Laplace in the time variable $\tilde{t},$ as it was
made in \cite{La, LL2}. We define the following Fourier-Laplace transform:%
\begin{equation*}
\tilde{F}\left( k,v,z\right) =\frac{1}{\left( 2\pi \right) ^{\frac{3}{2}}}%
\int_{0}^{\infty }dz\int_{\R^3}dxe^{-ix\cdot k}e^{-z\tilde{t}%
}F\left( x,v,\tau \right)\  .
\end{equation*}

As it will be proved in Lemma \ref{Lemma:FouLap} below, the fundamental solution $G_\sigma$ and its marginal $\Xi _{\sigma }(y,w_{0},t)$ can be explicitly represented as
\begin{align}
	\tilde{G}_{\sigma }\left( k,w,w_{0},z\right) &=\frac{\sigma \left( \nabla
		_{w}g\left( w\right) \cdot ik\right) \hat{\Phi}\left( k\right) }{\left( z+i
		k\cdot w \right) \left( z+i k\cdot w_{0} \right) \Delta _{\sigma }\left(
		k,z\right) } +\frac{\delta \left( w-w_{0}\right) }{\left( 2\pi \right) ^{%
			\frac{3}{2}}\left( z+i\left( k\cdot w_{0}\right) \right) } \label{eq:GSigma}\\
		\tilde{\Xi}_{\sigma}\left( k,w_{0},z\right) &=\frac{1}{\left( 2\pi\right) ^{%
				\frac{3}{2}}\left( z+i\left( k\cdot w_{0}\right) \right) \Delta_{\sigma
			}\left( k,z\right) }  \label{eq:Xirepresentation}\\
		\Delta_{\sigma}\left( k,z\right) &=1- \sigma(2\pi)^\frac32 \hat{\Phi}\left(
		k\right) \int_{\R^3}\frac{\left( ik\cdot\nabla_{w}g\left( w\right)
			\right) }{z+i\left( k\cdot w\right) }dw .  \label{T3E5}
	\end{align}

In order to avoid the exponential growth of the disturbances we need the
following stability condition:%
\begin{equation}
\Delta _{\sigma }\left( k,z\right) \neq 0\text{ for }\func{Re}\left(
z\right) \geq 0\text{ and any }k\in \R^3.  \label{PenrStab}
\end{equation}

We remark that in case of a smooth decaying potential $\phi $, this
criterion can be satisfied choosing for example $0<\sigma \ll 1$ small.

Notice that in the formulas of $\tilde{F}_{g}\left( \xi ,\tilde{t}\right) $
and $H_{g}\left( \tilde{t};V\right) $ we only make use of $\Xi_{\sigma}$.

Inverting the Fourier-Laplace transform in \eqref{eq:Xirepresentation}, we can rewrite $\Xi _{\sigma}$ as  
\begin{equation}  \label{eq:Gsigma}
\Xi_{\sigma}\left( x,w_{0},\tilde{t}\right) =\frac{1}{2\pi i}\int_{\mathbb{R}%
	^{3}}\frac{e^{ik\cdot x}dk}{\left( 2\pi \right) ^{3}}\int_{\gamma }\frac{e^{z%
		\tilde{t}}dz}{\left( z+i\left( k\cdot w_{0}\right) \right) }\frac{1}{\Delta
	_{\sigma }\left( k,z\right) }.
\end{equation}

We can consider possible analyticity properties of the function $\Delta
_{\sigma }\left( k,z\right) $ \eqref{T3E5} in the variable $z,$ in order to
obtain a possible decay of the integral above in time.

We rewrite the function in terms of the Radon transform $H(s,\theta)$ of $g$, which for $\theta\in S^2$ and $s\in {\mathbb{R}}^3$ is defined as 
\begin{equation*}
H\left( s;\theta \right) =\int_{\left\{ \theta \cdot w=s\right\} }g\left(
w\right) dS\left( w\right).
\end{equation*}
With this definition, the integral appearing in \eqref{T3E5} can be
rewritten as 
\begin{equation*}
\int_{\R^3}\frac{\left( i\theta \cdot \nabla _{w}g\left( w\right)
	\right) }{\zeta +i\left( \theta \cdot w\right) }dw=i\int_{-\infty }^{\infty }%
\frac{ds}{\zeta +is}\frac{\partial H\left( s;\theta \right) }{\partial s}.
\end{equation*}

We will assume that $H\left( s;\theta \right) $ is analytic in a strip $%
\left\{ \left\vert \func{Im}\left( s\right) \right\vert <\delta _{0}\right\} 
$ for all the values of $\theta \in S^{2}.$ Then $\Delta _{\sigma }\left(
k,z\right) $ would be analytic in a strip $\left\{ \left\vert \func{Im}%
\left( \zeta \right) \right\vert <\delta _{0}\right\} .$ Initially we assume
that $\func{Re}\left( \zeta \right) >0.$ We can move the line of integration
from $\mathbb{R}$ to $\mathbb{R}-\frac{\delta _{0}}{2}i.$ It then follows
that 
\begin{equation}  \label{Psidef}
\Psi \left( \zeta ;\theta \right) := \int_{\R^3}\frac{\left( {i}\theta
	\cdot \nabla _{w}g\left( w\right) \right) }{\zeta +i\left( \theta \cdot
	w\right) }dw={i}\int_{\mathbb{R}-\frac{\delta _{0}}{2}i}\frac{ds}{\zeta +is}%
\frac{\partial H\left( s;\theta \right) }{\partial s}
\end{equation}
is analytic for $\func{Re}\left( \zeta \right) >-\frac{\delta _{0}}{2}.$

\bigskip

Now suppose that for (a possibly smaller value) $\delta _{0}>0$ we can
further ensure that $\Delta _{\sigma }\left( k,z\right) $ does not vanish
for $\func{Re}\left( \zeta \right) >-\frac{\delta _{0}}{2}.$ Then the
function 
\begin{equation*}
\Delta _{\sigma }\left( k,z\right) =1-\sigma {(2\pi)^{\frac{3}{2}}}\hat{\Phi}\left( k\right) \Psi
\left( \frac{z}{\left\vert k\right\vert };\frac{k}{\left\vert k\right\vert }%
\right)
\end{equation*} 
is analytic in $z$ in a region of the form $\left\{ \func{Re}\left( z\right)
\geq -\delta _{1}\left\vert k\right\vert \right\} .$

With this we return to the friction term $\Lambda_g\left(V_0\right)$ given by \eqref{T2E8}.
Inserting the Fourier-Laplace representation of $\Xi_\sigma$ we obtain 
\begin{equation*}
-\Lambda_g\left(V_0\right)  =\int_{(\mathbb{R}^3)^4}\int_{0}^{\infty}ds%
\left[ \frac{1}{2\pi i}\frac{e^{ik\cdot \left( y-\eta \right) }}{\left( 2\pi
	\right) ^{3}}\int_{\gamma }dz \frac{e^{zs} \nabla _{y}\Phi \left( y\right) %
	\left[ \nabla _{\eta }\Phi \left( \eta +V_0s\right) \cdot \nabla _{w}g\left(
	w_{0}\right) \right]}{\left( z+i\left( k\cdot w_{0}\right) \right) \Delta
	_{\sigma }\left( k,z\right) }\right] d\eta dw_{0}dy \ .
\end{equation*}
The integrals in $y,\eta$ are explicit and yield by definition of $\Psi$ \eqref{Psidef}:
\begin{eqnarray*}
	-\Lambda_g\left(V_0\right)  = \int_{0}^{\infty}e^{ik\cdot V_0s}ds\int_{%
		\R^3}k\left\vert \hat{\Phi}\left(
	k\right) \right\vert ^{2}dk\frac{1}{2\pi i}\int_{\gamma }\frac{e^{zs}dz}{%
		\Delta _{\sigma }\left( k,z\right) }\left[ \Psi \left( \frac{z}{\left\vert
		k\right\vert };\theta \right) \right] \ .
\end{eqnarray*}

The analyticity properties of $\Delta _{\sigma }\left( k,z\right) $ and $%
\Psi \left( \frac{z}{\left\vert k\right\vert };\theta \right) $ allow to
deform $\gamma $ to a contour contained in $\func{Re}\left( z\right) <0.$ We
then obtain exponential decay of the integral as $s\rightarrow \infty .$
This gives 
\begin{equation}
-\Lambda_g\left(V_0\right) =-\int_{\R^3}%
(k \otimes k) \left\vert \hat{\Phi}\left( k\right)
\right\vert ^{2} \frac{ \delta(k(V_0-w)) }{|\Delta_\sigma(k,-ikV_0)|^2}\nabla g(w) \ud{w}\ud{k} \label{FricBLFastDec}.
\end{equation}
This is the friction coefficient associated to Balescu-Lenard. 

We now use the Fourier-Laplace representation of $\Xi _{\sigma }$ (cf.~\eqref{eq:Gsigma}) to compute the time correlation of the forces \eqref{eq:FFCorr}. Specifically:
\begin{align*}
\mathbb{E}\left[ F_{g}\left( \xi _{1},\tilde{t}_{1}\right)\otimes F_{g}\left( \xi
_{2},\tilde{t}_{2}\right) \right] =& \lim_{T\rightarrow \infty }\int_{%
	\R^3}\int_{\R^3}\nabla _{y}\Phi \left( y_{1}\right)
\nabla _{y}\Phi \left( y_{2}\right) dy_{1}dy_{2}\int_{\R^3}d\eta
_{1}\int_{\R^3}g\left( w_{0}\right) dw_{0}\cdot \\
\cdot & \frac{1}{2\pi i}\int_{\R^3}\frac{e^{ik_{1}\cdot \left(
		y_{1}-\eta _{1}\right) }dk_{1}}{\left( 2\pi \right) ^{3}}\int_{\gamma }\frac{%
	e^{z_{1}\left( \tilde{t}_{1}+T\right) }dz_{1}}{\left( z_{1}+i\left(
	k_{1}\cdot w_{0}\right) \right) }\frac{1}{\Delta _{\sigma }\left(
	k_{1},z_{1}\right) }\cdot \\
& \cdot \frac{1}{2\pi i}\int_{\R^3}\frac{e^{ik_{2}\cdot \left(
		y_{2}-\eta _{1}+\xi _{2}-\xi _{1}\right) }dk_{2}}{\left( 2\pi \right) ^{3}}%
\int_{\gamma }\frac{e^{z_{2}\left( \tilde{t}_{2}+T\right) }dz_{2}}{\left(
	z_{2}+i\left( k_{2}\cdot w_{0}\right) \right) }\frac{1}{\Delta _{\sigma
	}\left( k_{2},z_{2}\right) }.
\end{align*}
In Fourier variables we obtain
\begin{align}
\mathbb{E}& \left[ F_{g}\left( \xi _{1},\tilde{t}_{1}\right) \otimes
F_{g}\left( \xi _{2},\tilde{t}_{2}\right) \right] =\lim_{T\rightarrow \infty
}\int_{\R^3}g(w_{0})dw_{0}\int_{\R^3}k\otimes
k\left\vert \hat{\Phi}\left( k\right) \right\vert ^{2}e^{-ik\cdot \left( \xi
	_{2}-\xi _{1}\right) }dk\cdot \nonumber \\
& \frac{1}{2\pi i}\int_{\gamma }\frac{e^{z_{1}\left( \tilde{t}_{1}+T\right)
	}dz_{1}}{\left( z_{1}+i\left( k\cdot w_{0}\right) \right) }\frac{1}{\Delta
	_{\sigma }\left( k,z_{1}\right) }\frac{1}{2\pi i}\int_{\gamma }\frac{%
	e^{z_{2}\left( \tilde{t}_{2}+T\right) }dz_{2}}{\left( z_{2}-i\left( k\cdot
	w_{0}\right) \right) }\frac{1}{\Delta _{\sigma }\left( -k,z_{2}\right) } \ .
\end{align}

We will assume that $\Delta _{\sigma }\left( k,z_{1}\right) $ is analytic in
the regions indicated above. We can then compute the integrals along the
circuits $\gamma $ using residues. The contribution to the integral in the
region $\func{Re}\left( z_{j}\right) <0$ converges exponentially to zero as $%
T\rightarrow \infty .$ The exponent depends on $\left\vert k\right\vert $
but due to the integrability in $k$ we can take the limit of those terms.
Therefore, we are left only with the contributions due to residues at $%
z_{1}=-i\left( k\cdot w_{0}\right) ,\ z_{2}=i\left( k\cdot w_{0}\right) .$
We then obtain, using the identity $\Delta _{\sigma }(-k,z^{\ast })=(\Delta _{\sigma
}(k,z))^{\ast }$: 
\begin{equation}
\mathbb{E}\left[ F_{g}\left( \xi _{1},\tilde{t}_{1}\right) \otimes
F_{g}\left( \xi _{2},\tilde{t}_{2}\right) \right] =\int_{\mathbb{R}%
	^{3}}g\left( w_{0}\right) dw_{0}\int_{\R^3}\left[ k\otimes k\right]
\left\vert \hat{\Phi}\left( k\right) \right\vert ^{2}\frac{e^{ik\cdot \left[
		w_{0}\left( \tilde{t}_{2}-\tilde{t}_{1}\right) -\left( \xi _{2}-\xi
		_{1}\right) \right] }}{\left\vert \Delta _{\sigma }\left( k,-i\left( k\cdot
	w_{0}\right) \right) \right\vert ^{2}}dk.  \label{NoiseBLFastDec}
\end{equation}

We can now argue as in the Rayleigh gas case (cf.~\eqref{eq:Rayclaim1}-\eqref{eq:Rayclaim2}) and hence \eqref{eqthm:nonlin1} follows. This concludes the proof of Theorem~\ref{thm:FiniteRange} for the interacting particle case.
\end{proofof}

We now state the Lemma which provides the representation formulas \eqref{eq:GSigma}, \eqref{eq:Xirepresentation}, \eqref{T3E5} we used in the above proof. 

\begin{lemma}\label{Lemma:FouLap}
	The fundamental solution $G_\sigma$ and its marginal $\Xi _{\sigma }(y,w_{0},t)$ can be explicitly represented in Fourier-Laplace variables as
	\begin{align}
	\tilde{G}_{\sigma }\left( k,w,w_{0},z\right) &=\frac{\sigma \left( \nabla
		_{w}g\left( w\right) \cdot ik\right) \hat{\Phi}\left( k\right) }{\left( z+i
		k\cdot w \right) \left( z+i k\cdot w_{0} \right) \Delta _{\sigma }\left(
		k,z\right) } +\frac{\delta \left( w-w_{0}\right) }{\left( 2\pi \right) ^{%
			\frac{3}{2}}\left( z+i\left( k\cdot w_{0}\right) \right) } \label{eq:GSigma_1}\\
		\tilde{\Xi}_{\sigma}\left( k,w_{0},z\right) &=\frac{1}{\left( 2\pi\right) ^{%
				\frac{3}{2}}\left( z+i\left( k\cdot w_{0}\right) \right) \Delta_{\sigma
			}\left( k,z\right) }  \label{eq:Xirepresentation_1}\\
		\Delta_{\sigma}\left( k,z\right) &=1- \sigma(2\pi)^\frac32 \hat{\Phi}\left(
		k\right) \int_{\R^3}\frac{\left( ik\cdot\nabla_{w}g\left( w\right)
			\right) }{z+i\left( k\cdot w\right) }dw .  \label{T3E5_1}
	\end{align}
\end{lemma}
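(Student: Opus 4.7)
\textbf{Proof plan for Lemma~\ref{Lemma:FouLap}.}
The strategy is to apply the Fourier-Laplace transform directly to the defining problem \eqref{T2E2}-\eqref{T2E4}, reducing the linear integro-differential equation to an algebraic equation which can be solved explicitly for $\tilde{G}_\sigma$ once $\tilde{\Xi}_\sigma$ is known, and then to close the system by integrating in $w$.

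First, I would apply the transform $\mathcal{F}_y \mathcal{L}_{\tilde t}$ (with the convention fixed just above the statement of the Lemma) term by term to \eqref{T2E2}. The operator $\partial_{\tilde t}+w\cdot\nabla_y$ produces the multiplier $(z+ik\cdot w)$, while the initial datum \eqref{T2E4} contributes $\frac{1}{(2\pi)^{3/2}}\delta(w-w_0)$ since the Fourier transform of $\delta(y)$ with our normalization is $(2\pi)^{-3/2}$. The convolution $\int \nabla_y\Phi(y-\eta)\Xi_\sigma(\eta,w_0,\tilde t)\,d\eta$ becomes $(2\pi)^{3/2}\,ik\,\hat\Phi(k)\,\tilde\Xi_\sigma(k,w_0,z)$. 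Collecting these, \eqref{T2E2} turns into the algebraic identity
\begin{equation*}
(z+ik\cdot w)\,\tilde G_\sigma(k,w,w_0,z)
= \frac{\delta(w-w_0)}{(2\pi)^{3/2}}
+ \sigma\,(2\pi)^{3/2}\,\hat\Phi(k)\,\bigl(ik\cdot\nabla_w g(w)\bigr)\,\tilde\Xi_\sigma(k,w_0,z).
\end{equation*}

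Dividing by $(z+ik\cdot w)$ and absorbing the $\delta(w-w_0)$ into $(z+ik\cdot w_0)$ in the denominator, I obtain a representation of $\tilde G_\sigma$ in which $\tilde\Xi_\sigma$ still appears on the right-hand side. To close the equation, I would integrate over $w$, recalling the definition $\tilde\Xi_\sigma(k,w_0,z)=\int \tilde G_\sigma(k,w,w_0,z)\,dw$, which yields
\begin{equation*}
\tilde\Xi_\sigma(k,w_0,z)
= \frac{1}{(2\pi)^{3/2}(z+ik\cdot w_0)}
+ \sigma\,(2\pi)^{3/2}\,\hat\Phi(k)\,\tilde\Xi_\sigma(k,w_0,z)\int_{\R^3}\frac{ik\cdot \nabla_w g(w)}{z+ik\cdot w}\,dw.
\end{equation*}
Solving this scalar linear equation for $\tilde\Xi_\sigma$ produces \eqref{eq:Xirepresentation_1} with $\Delta_\sigma(k,z)$ as in \eqref{T3E5_1}. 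Substituting the expression for $\tilde\Xi_\sigma$ back into the formula for $\tilde G_\sigma$ gives \eqref{eq:GSigma_1}.

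The computation is essentially a routine bookkeeping exercise; no serious obstacle arises. The only points that require care are: (i) keeping track of the factors of $(2\pi)^{3/2}$ coming from the particular Fourier normalization used in the paper, and (ii) ensuring that the integrals in $w$ make sense, which is guaranteed for $\mathrm{Re}(z)>0$ by the assumed smoothness and decay of $g$ (so that $\nabla_w g$ is integrable and $|z+ik\cdot w|\ge \mathrm{Re}(z)>0$). The analytic continuation of $\Delta_\sigma(k,\cdot)$ across $\{\mathrm{Re}(z)=0\}$, which is essential in the subsequent contour-deformation arguments, is then a separate issue already addressed in the main proof via the Radon-transform representation \eqref{Psidef}.
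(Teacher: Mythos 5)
Your proposal is correct and follows exactly the paper's own argument: transform the defining problem to get the algebraic relation $(z+ik\cdot w)\tilde G_\sigma - \sigma(2\pi)^{3/2}\hat\Phi(k)(ik\cdot\nabla_w g)\tilde\Xi_\sigma = (2\pi)^{-3/2}\delta(w-w_0)$, integrate in $w$ to close the scalar equation for $\tilde\Xi_\sigma$, then substitute back. The normalization factors and the use of the $\delta(w-w_0)$ to replace $w$ by $w_0$ in the denominator are handled correctly, so there is nothing to add.
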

\begin{proof}
Taking the transform of (\ref{T2E2})-(\ref{T2E4}) we obtain: 
\begin{align}
(z +i\left( k\cdot w\right) ) \tilde{G}_{\sigma }\left( k,w,w_{0},z\right)
-&\sigma(2\pi)^\frac32 \left( \nabla _{w}g\left( w\right) \cdot ik\right) 
\hat{\Phi}\left( k\right) \tilde{\Xi}_{\sigma }\left( k,w_{0},z\right) =%
\frac{\delta \left( w-w_{0}\right) }{\left( 2\pi \right) ^{\frac{3}{2}}} \ .
\label{T3E1}
\end{align}
We can rewrite (\ref{T3E1}) as:
\begin{equation}
\tilde{G}_{\sigma}\left( k,w,w_{0},z\right) =\frac{\sigma (2\pi)^\frac32
	\left( \nabla _{w}g\left( w\right) \cdot ik\right) \hat{\Phi}\left( k\right) 
}{ (z+i\left( k\cdot w\right)) }\tilde{\Xi}_{\sigma}\left( k,w_{0},z\right) +%
\frac{\delta\left( w-w_{0}\right) }{( 2\pi ) ^{\frac{3}{2}} (z+i\left(
	k\cdot w\right)) } \ .  \label{T3E2a}
\end{equation}
Integrating in $w$ we obtain:%
\begin{equation}
\tilde{\Xi}_{\sigma}\left( k,w_{0},z\right) =\left(\sigma (2\pi)^\frac32 
\tilde{\Xi}_{\sigma}\left( k,w_{0},z\right) \hat{\Phi}\left( k\right) \int_{%
	\R^3}\frac{ ik\cdot\nabla_{w}g\left( w\right) }{z+i k\cdot w }dw+%
\frac{1}{(2\pi)^\frac32 (z+i k w_0) }\right).  \label{T3E3}
\end{equation}
Then we can represent $\tilde{\Xi}_{\sigma}$ explicitly by \eqref{eq:Xirepresentation_1}. Using  (\ref{T3E2a}) we also obtain the representation \eqref{eq:GSigma_1}.
\end{proof}

\bigskip

\subsection{Coulombian interactions \label{RaylCoulomb}} 

In this section we prove Theorem \ref{thm:Coulomb}. To avoid repetition of similar arguments we will only prove the Theorem for the interacting particle case which is technically more involved. Concerning the Rayleigh Gas case the proof can be done following similar arguments to the ones used in Subsection~\ref{RaylCompSuppPot}. We just notice that contrary to the case studied in Subsection~\ref{RaylCompSuppPot}, the
	friction term $H_{g}$ and the noise $B_{g}$ do not stabilize on a short
	Bogoliubov time scale, but they exhibit a logarithmic divergence as ${\eps }\rightarrow 0$. This logarithmic
	divergence yields the logarithmic correction of the time scale
	characteristic of Coulombian interactions.\medskip

\begin{proofof}[Proof of Theorem \ref{thm:Coulomb}]

\noindent \textit{Part II: Interacting Particle systems} \smallskip

\noindent We consider \eqref{A1E5} with initial conditions (\ref{T1E1}), (\ref{T1E2}). We can formulate a
problem for the difference $\lambda _{\eps }\left( y,w,\tilde{t}%
\right) =\zeta _{\eps }^{+}\left( y,w,\tilde{t}\right) -\zeta
_{\eps }^{-}\left( y,w,\tilde{t}\right) .$ Due to the
electroneutrality condition we have: 
\begin{equation}
(\tilde{\rho}_{\eps }^{+}-\tilde{\rho}_{\eps }^{-})\left( \eta
,\tilde{t}\right) =\frac{1}{L_\eps^{\frac{3}{2}}}%
\int_{\R^3}\left[ (\zeta _{\eps }^{+}-\zeta _{\eps
}^{-})\left( \eta ,w,\tilde{t}\right) \right] dw=\frac{1}{L_\eps^\frac32 }\int_{\R^3}\lambda
_{\eps }\left( \eta ,v,\tilde{t}\right) dv.  \label{A1E7}
\end{equation}%
Then from (\ref{A1E5}) we obtain
\begin{align*}
& 0=\partial _{\tilde{t}}\lambda_\eps \left( y,w,\tilde{t}\right)
+w\cdot \nabla _{y}\lambda _{\eps }\left( y,w,\tilde{t}\right) - \\
-2& \left[  L_\eps^2 \int_{\R^3}d\eta
\int_{\R^3}\nabla \Phi \left( L_{\eps }\left\vert
y-\eta \right\vert \right) \lambda _{\eps }\left( \eta ,v,\tilde{t}%
\right) dv+\eps  L_\eps ^{\frac52}\nabla\Phi \left( L_{\eps }\left\vert y-\xi \right\vert \right) \right]
\cdot \nabla _{w}g\left( w\right) 
\end{align*}%
where $\xi=V_0\tilde{t}$. From (\ref{T1E1}), (\ref{T1E2}) we get
\begin{eqnarray}
\mathbb{E}\left[ \lambda _{\eps }\left( y,w,0\right) \right] &=&0 
\notag \\
\mathbb{E}\left[ \lambda _{\eps }\left( y_{a},w_{a},0\right) \lambda
_{\eps }\left( y_{b},w_{b},0\right) \right] &=&2g\left( w_{a}\right)
\delta \left( y_{a}-y_{b}\right) \delta \left( w_{a}-w_{b}\right) .
\label{NoiseInData}
\end{eqnarray}%
Using the linearity of the problem we can decompose $\lambda _{\eps }$
as:
\begin{equation}
\lambda _{\eps }=\lambda _{1}+\lambda _{2}  \label{A2E6}
\end{equation}%
where we do not write explicitly the dependence of $\lambda _{1},\lambda
_{2} $ in $\eps $ and $\lambda _{1},\lambda _{2}$ solve the following
problems:%
\begin{equation}
\partial _{\tilde{t}}\lambda _{1}\left( y,w,\tilde{t}\right) +w\cdot \nabla
_{y}\lambda _{1}\left( y,w,\tilde{t}\right) -2 L_\eps^2 \nabla g\left( w\right) \cdot \int_{\R^3}\nabla \Phi
\left( L_{\eps }\left\vert y-\eta \right\vert \right) d\eta \int_{%
	\R^3}\lambda _{1}\left( \eta ,v,\tilde{t}\right) dv=0  , \label{A1E8}
\end{equation}%
\begin{equation}
\mathbb{E}\left[ \lambda _{1}\left( y,w,0\right) \right] =0\ \ ,\ \ \mathbb{E%
}\left[ \lambda _{1}\left( y_{a},w_{a},0\right) \lambda _{1}\left(
y_{b},w_{b},0\right) \right] =2g\left( w_{a}\right) \delta \left(
y_{a}-y_{b}\right) \delta \left( w_{a}-w_{b}\right)  \label{A1E9}
\end{equation}%
and:%
\begin{align}
& \partial _{\tilde{t}}\lambda _{2}\left( y,w,\tilde{t}\right) +w\cdot
\nabla _{y}\lambda _{2}\left( y,w,\tilde{t}\right)  \notag \\
& =2\left[ \left( L_{\eps }\right) ^{2}\int_{\R^3}\nabla\Phi \left( L_{\eps }\left\vert y-\eta \right\vert \right) d\eta
\int_{\R^3}\lambda _{2}\left( \eta ,v,\tilde{t}\right)
dv+\eps \left( L_{\eps }\right) ^{\frac{5}{2}}\nabla \Phi
\left( L_{\eps }\left\vert y-V_0\tilde{t} \right\vert \right) \right] \cdot
\nabla _{w}g\left( w\right)  \label{A2E1}, \\
& \lambda _{2}\left( y,w,0\right) =0  \ .\label{A2E2}
\end{align}%
In order to obtain $\lambda _{1}$ and $\lambda _{2}$ we need to study the
fundamental solution for the linearized Vlasov system given by the solution
of:%
\begin{align}
& \partial _{\tilde{t}}G\left( y,w;w_{0},\tilde{t}\right) +w\cdot \nabla
_{y}G\left( y,w;w_{0},\tilde{t}\right)   \notag \\
& -2\left( L_{\eps }\right) ^{2}\nabla _{w}g\left( w\right) \cdot
\int_{\R^3}\nabla \Phi \left( L_{\eps }\left\vert
y-\eta \right\vert \right) d\eta \int_{\R^3}G\left( \eta ,v;w_{0},%
\tilde{t}\right) dv=0 , \label{A2E3} \\
& G\left( y,w;w_{0},0\right) =\delta \left( y\right) \delta \left(
w-w_{0}\right) \ . \label{A2E4}
\end{align}

We can solve (\ref{A2E3}), (\ref{A2E4}) applying Fourier in the variable $y$
and Laplace in $\tilde{t}.$ If we denote this transform as $\tilde{G}=\tilde{%
	G}\left( k,w;w_{0},z\right) .$ Then: 
\begin{align*}
(z+i\left( k\cdot w\right))\tilde{G}& \left( k,w;w_{0},z\right) -\frac{%
	2\left( 2\pi \right) ^{\frac{3}{2}}}{\left( L_{\eps }\right) ^{2}}%
\hat{\Phi}\left( \frac{k}{L_{\eps }}\right) \left( ik\cdot \nabla
_{w}g\left( w\right) \right) \tilde{\Xi} \left( k;w_{0},z\right) =\frac{%
	\delta \left( w-w_{0}\right) }{\left( 2\pi \right) ^{\frac{3}{2}}} ,
\end{align*}
where $\tilde{\Xi}\left( k;w_{0},z\right) =\rho[\tilde{G}(\cdot;w_{0},z)](k)$. Then we have 
\begin{equation}
\tilde{G}\left( k,w;w_{0},z\right) =\frac{\delta \left( w-w_{0}\right) }{%
	\left( 2\pi \right) ^{\frac{3}{2}}\left( z+i\left( k\cdot w_{0}\right)
	\right) }+\frac{2\left( 2\pi \right) ^{\frac{3}{2}}}{\left( L_{\eps
	}\right) ^{2}}\hat{\Phi}\left( \frac{k}{L_{\eps }}\right) \left(
ik\cdot \nabla _{w}g\left( w\right) \right) \frac{\tilde{\Xi} \left(
	k;w_{0},z\right) }{z+i\left( k\cdot w\right) } \ . \label{eq:GLaplace}
\end{equation}
Therefore, after integrating in $w$ we find an explicit representation of $%
\tilde{\Xi}$:

\begin{equation}
\tilde{\Xi}\left( k;w_{0},z\right) =\frac{1}{\left( 2\pi \right) ^{\frac{3}{2%
	}}\left( z+i\left( k\cdot w_{0}\right) \right) \Delta _{\eps }\left(
	k,z\right) }  \label{eq:Xi}
\end{equation}%
where:%
\begin{equation}
\Delta _{\eps }\left( k,z\right) =1-\frac{2\left( 2\pi \right) ^{%
		\frac{3}{2}}}{\left( L_{\eps }\right) ^{2}}\hat{\Phi}\left( \frac{k}{%
	L_{\eps }}\right) \int_{\R^3}\frac{\left( ik\cdot \nabla
	_{w}g\left( w\right) \right) }{z+i\left( k\cdot w\right) }dw \ .
\label{def:dielectric}
\end{equation}
Inserting \eqref{eq:Xi} back into \eqref{eq:GLaplace} we obtain the
representation: 
\begin{equation*}
\tilde{G}\left( k,w;w_{0},z\right) =\frac{\delta \left( w-w_{0}\right) }{%
	\left( 2\pi \right) ^{\frac{3}{2}}\left( z+i\left( k\cdot w_{0}\right)
	\right) }+\frac{2}{\left( L_{\eps }\right) ^{2}}\frac{\hat{\Phi}%
	\left( \frac{k}{L_{\eps }}\right) \left( ik\cdot \nabla _{w}g\left(
	w\right) \right) }{\left( z+i\left( k\cdot w\right) \right) \left( z+i\left(
	k\cdot w_{0}\right) \right) \Delta _{\eps }\left( k,z\right) }\ .
\end{equation*}

We will assume the usual stability condition of the medium, i.e. that the
function $\Delta _{\eps }\left( k,\cdot \right) $ is analytic in $%
\left\{ \func{Re}\left( z\right) >0\right\} $ for each $k\in \R^3.$

We can now obtain $\lambda _{\eps }$ using (\ref{A2E6}) as well as
the fundamental solution. We then obtain 
\begin{align*}
& \lambda _{\eps }\left( y,w,\tilde{t}\right) =\int_{\mathbb{R}%
	^{3}}d\eta \int_{\R^3}dw_{0}G\left( y-\eta ,w,w_{0},\tilde{t}%
\right) \lambda _{1}\left( \eta ,w_{0},0\right) + \\
& +2\eps \left( L_{\eps }\right) ^{\frac{5}{2}}\int_{0}^{%
	\tilde{t}}ds\int_{\R^3}d\eta \int_{\R^3}dw_{0}G\left(
y-\eta ,w,w_{0},\tilde{t}-s\right) \nabla \Phi \left( L_{\eps
}\left\vert \eta -V_0s\right\vert \right) \cdot \nabla _{w}g\left( w_{0}\right)
\end{align*}
and, using the first equation in \eqref{eq:appCoul},
\begin{equation}
\frac{1}{L_\eps^\frac32 } {F}_{g}\left( \xi ,\tilde{t}\right) =\frac{1}{L_\eps^\frac32 }F_g^{(1)}( \xi ,\tilde{t})+\frac{1}{L_\eps^3}H_g(\tilde{t};V_0). \label{A3E2}
\end{equation}%
Here $F_g^{(1)}$, $H_g$ are given by:
\begin{equation}
\begin{aligned} F_g^{(1)}( \xi ,\tilde{t})
&=-\int_{(\mathbb{R}^3)^3}L_\eps^2\nabla \Phi
\left( L_\eps \left\vert y-\xi \right\vert \right) \Xi\left( y-\eta
,w_{0},\tilde{t}\right) \lambda _{1}\left( \eta ,w_{0},0\right)d\eta dw_{0} dy \\ H_g( \tilde{t};V_0) &=-2
\int_{(\mathbb{R}^3)^3} L_\eps^2 \nabla \Phi
\left( L_{\eps }\left\vert y-V_0\tilde{t}\right\vert \right)
\int_{0}^{\tilde{t}}ds\ \Xi\left( y-\eta ,w_{0},\tilde{t}-s\right) d\eta dw_{0} dy \\ &\qquad\qquad L_\eps^2\nabla \Phi
\left( L_{\eps }\left\vert \eta -V_0s\right\vert \right) \cdot \nabla
_{w}g\left( w_{0}\right) . \end{aligned}  \label{eq:FHrepr}
\end{equation}
Notice that $F_g^{(1)}( \xi ,\tilde{t}) $ is a Gaussian
random force field. Using (\ref{A1E9}) we obtain:%
\begin{equation}
\mathbb{E}\left[ F_g^{(1)}\left( \xi ,\tilde{t}\right) \right] =0 \ .
\label{A2E9}
\end{equation}%
Similarly, the covariance has the form: 
\begin{align}
& \mathbb{E}\left[ F_g^{(1)}\left( \xi _{1},\tilde{t}_{1}\right) \otimes 
F_g^{(1)}\left( \xi _{2},\tilde{t}_{2}\right) \right] =2\int_{\R^3}dy_{1}\int_{\R^3}dy_{2}L_\eps^2 \nabla \Phi \left( L_{\eps }\left\vert y_{1}-\xi
_{1}\right\vert \right) \otimes \left( L_{\eps }\right) ^{2}\nabla\Phi \left( L_{\eps }\left\vert y_{2}-\xi _{2}\right\vert \right)
\cdot  \notag  \label{A3E1} \\
& \cdot \int_{\R^3}g\left( w_{0,1}\right) dw_{0,1}\int_{\mathbb{R}%
	^{3}}d\eta \int_{\R^3}dw_{1}\int_{\R^3}dw_{2}G\left(
y_{1}-\eta ,w_{1},w_{0,1},\tilde{t}_{1}\right) G\left( y_{2}-\eta
,w_{2},w_{0,1},\tilde{t}_{2}\right) .  \notag
\end{align}

We now compute the asymptotics of the friction term $H_{g}\left( 
\tilde{t};V\right) $ given in \eqref{eq:FHrepr} as $\tilde{t}\rightarrow
\infty $, i.e. for times for which the tagged particle moves at distances 
much larger than the Debye screening length. Using \eqref{eq:Xi} we obtain: 
\begin{align*}
	H_{g}\left( \tilde{t};V_0\right) & =-\frac{1}{\left( 2\pi \right)
	^{3}\pi i}\int_{(\R^3)^{3}}dyd\eta dw_{0}\left( L_{\eps
}\right) ^{2}\nabla \Phi \left( L_{\eps }\left( y-V\tilde{t}%
\right) \right) \int_{0}^{\tilde{t}}ds\int_{{\mathbb{R}}^{3}}dk\ e^{ik\cdot
	\left( y-\eta \right) } \\
& \cdot \int_{\gamma }dz\frac{e^{z\left( \tilde{t}-s\right) }}{\left(
	z+i\left( k\cdot w_{0}\right) \right) }\frac{1}{\Delta _{\eps }\left(
	k,z\right) }\left( L_{\eps }\right) ^{2}\nabla \Phi \left(
L_{\eps }\left( \eta -V_0s\right) \right) \cdot \nabla _{w}g\left(
w_{0}\right) \ .
\end{align*}

Then, rewriting the equation above in terms of the Fourier transform of $%
\Phi $, the friction term becomes 
\begin{equation*}
H_g\left( \tilde{t};V_0\right) =\frac{1}{\pi i}\int_{0}^{\tilde{t}%
}ds\int_{\R^3}dk\ k\frac{e^{ik\cdot V_0s}}{\left( L_{\eps
	}\right) ^{4}}\left\vert \hat{\Phi}\left( \frac{k}{L_{\eps }}\right)
\right\vert ^{2}\int_{\gamma }dz\frac{e^{zs}}{\Delta _{\eps }\left(
	k,z\right) }\int_{\R^3}dw_{0}\frac{\left[ k\cdot \nabla
	_{w}g\left( w_{0}\right) \right] }{\left( z+i\left( k\cdot w_{0}\right)
	\right) } \ .
\end{equation*}
We define, analogously as in the case considered in the previous
Subsections, 
\begin{equation}
\Psi \left( z,k\right) =\Psi \left( \frac{z}{\left\vert k\right\vert },\frac{%
	k}{\left\vert k\right\vert }\right) =\int_{\R^3}dw_{0}\frac{\left[
	k\cdot \nabla _{w}g\left( w_{0}\right) \right] }{\left( z+i\left( k\cdot
	w_{0}\right) \right) }  \label{T8E6}
\end{equation}%
whence, using \eqref{def:dielectric}, 
\begin{equation}
\Delta _{\eps }\left( k,z\right) =1-\frac{2\left( 2\pi \right) ^{%
		\frac{3}{2}}}{\left( L_{\eps }\right) ^{2}}\hat{\Phi}\left( \frac{k}{%
	L_{\eps }}\right) \Psi \left( \frac{z}{\left\vert k\right\vert },%
\frac{k}{\left\vert k\right\vert }\right) .  \label{T8E6a}
\end{equation}
Thus 
\begin{equation}
\tilde{H}_{g}\left( \tilde{t};V_0\right) =\frac{1}{\pi i} \int_{\R^3}%
\frac{dk \, k}{\left( L_{\eps }\right) ^{4}}\left\vert \hat{\Phi}\left( 
\frac{k}{L_{\eps }}\right) \right\vert ^{2}\int_{0}^{\tilde{t}}ds%
\left[ \int_{\gamma }\frac{e^{\left( ik\cdot V_0+z\right) s}dz}{\Delta
	_{\eps }\left( k,z\right) }\Psi \left( \frac{z}{\left\vert
	k\right\vert },\frac{k}{\left\vert k\right\vert }\right) \right].
\label{A3E8}
\end{equation}

We will assume that the function $g$ has analyticity properties analogous to
the ones assumed in Subsection~\ref{RaylCompSuppPot}. Moreover, we will
also assume that the Penrose stability condition holds, i.e. $\Delta
_{\eps }\left( k,z\right) \neq 0$ for $\func{Re}\left( z\right) \geq
0 $ and $k\in \R^3\diagdown \left\{ 0\right\} .$

Using the changes of variables $k=L_{\eps }p,\ z=L_{\eps
}\zeta ,\ s=\frac{\tau }{L_{\eps }}$, which allow to return to the
microscopic variables in Fourier, we can rewrite (\ref{A3E8}) as: 
\begin{equation}
\tilde{H}_{g}\left( \tilde{t};V_0\right) =\frac{1}{\pi i}\int_{\mathbb{R}%
	^{3}}dp\, p\left\vert \hat{\Phi}\left( p\right) \right\vert
^{2}\int_{0}^{L_{\eps }\tilde{t}}d\tau \left[ \int_{\gamma }\frac{%
	e^{\left( ip\cdot V_0+\zeta \right) \tau }d\zeta }{\Delta _{\eps
	}\left( L_{\eps }p,L_{\eps }\zeta \right) }\Psi \left( \frac{%
	\zeta }{\left\vert p\right\vert },\frac{p}{\left\vert p\right\vert }\right) %
\right] .  \label{T8E7}
\end{equation}

Using also (\ref{T8E6a}) we obtain: 
\begin{equation*}
\tilde{H}_{g}\left( \tilde{t};V_0\right) =\frac{1}{\pi i}\int_{0}^{L_{%
		\eps }\tilde{t}}d\tau \int_{\R^3} dp\, p e^{ip\cdot V_0\tau
}\left\vert \hat{\Phi}\left( p\right) \right\vert ^{2}\int_{\gamma }\frac{%
	\Psi \left( \frac{\zeta }{\left\vert p\right\vert },\frac{p}{\left\vert
		p\right\vert }\right) e^{\zeta \tau }d\zeta }{\left( 1-\frac{2\left( 2\pi
		\right) ^{\frac{3}{2}}}{\left( L_{\eps }\right) ^{2}}\hat{\Phi}\left(
	p\right) \Psi \left( \frac{\zeta }{\left\vert p\right\vert },\frac{p}{%
		\left\vert p\right\vert }\right) \right) }\ .
\end{equation*}
Then, due to the Penrose stability condition the function $\Psi$ is analytic and we can perform a contour
deformation for the integration in $\zeta $ to bring the contour to the
region $\left\{ \func{Re}\left( \zeta \right) <0\right\} $.  More precisely, if $|p|\geq L_\eps^{-1}$, we can move the contour to $\func{Re}(\zeta)<0$, $|\func{Re}(\zeta)|\sim 1$. Physically this is related to the fact that the Landau damping of disturbances with a wavelength smaller or equal than the Debye length scale $L_\eps$, takes place in microscopic scales of order $\tau \sim 1$. 

Hence, in the region $\{|p|\geq L_\eps^{-1}\}$ we can replace the integral $\int_0^{L_\eps \tilde{t}} $ by $\int_0^\infty$ with a negligible error. On the other hand, estimating the contribution due to the region $|p|\leq L_\eps^{-1}$ is more involved. The reason for this is the fact that the function $1-\frac{2\left( 2\pi \right) ^{\frac{3}{2}}}{\left( L_{\eps
	}\right) ^{2}}\hat{\Phi}\left( p\right) \Psi \left( \frac{\zeta }{\left\vert
	p\right\vert },\frac{p}{\left\vert p\right\vert }\right)$ has two roots with $\func{Re}(\zeta)<0$ but $|\func{Re}(\zeta)|$ exponentially small as $\eps\rightarrow 0$. 
Physically, these roots are related to the so-called Langmuir waves that are oscillatory solutions of the Vlasov-Poisson equation with wavelength much larger than the Debye length and very slow Landau damping (for a discussion see \cite{NVW,VW2}). The contribution to the integrals $\int_0^{L_\eps\tilde{t}} \int_\gamma d\zeta \int_{\R^3} dp [\ldots ]$ can be estimated computing first the integral $\int d\zeta$ using residuals. It is then possible to prove that this contribution is bounded as $\eps\rightarrow 0$, and we obtain
\begin{align*}
\tilde{H}_{g}\left( \tilde{t};V_0\right) &= H_g\left( \infty ;V_0\right)+ O(1), \quad \text{where } \\
H_g\left( \infty ;V_0\right) &=\frac{1}{\pi i}\int_{\R^3} dp\,  p \left\vert \hat{\Phi}\left( p\right) \right\vert ^{2}\cf_{\{|p|\geq L_\eps^{-1}\}}\int_{\gamma }%
\frac{1}{\left( \zeta +ip\cdot V_0\right) }\frac{\Psi \left( \frac{\zeta }{%
		\left\vert p\right\vert },\frac{p}{\left\vert p\right\vert }\right) d\zeta }{%
	\left( 1-\frac{2\left( 2\pi \right) ^{\frac{3}{2}}}{\left( L_{\eps
		}\right) ^{2}}\hat{\Phi}\left( p\right) \Psi \left( \frac{\zeta }{\left\vert
		p\right\vert },\frac{p}{\left\vert p\right\vert }\right) \right) }\ .
\end{align*}
We will now prove that $H_g\left( \infty ;V_0\right)\sim |\log \eps|$, so it is the dominant term.  
Using now residues to compute the integral along the contour $\gamma $ we
arrive at 
\begin{equation}
\tilde{H}_{g}\left( \infty ;V_0\right) =2\int_{\R^3} dp\, p \cf_{\{|p|\geq L_\eps^{-1}\}}\left\vert 
\hat{\Phi}\left( p\right) \right\vert ^{2}\frac{\Psi \left( -\frac{ip\cdot V_0}{\left\vert p\right\vert },\frac{p}{\left\vert p\right\vert }\right) }{%
	\left( 1-\frac{2\left( 2\pi \right) ^{\frac{3}{2}}}{\left( L_{\eps
		}\right) ^{2}}\hat{\Phi}\left( p\right) \Psi \left( -\frac{ip\cdot V_0}{%
		\left\vert p\right\vert },\frac{p}{\left\vert p\right\vert }\right) \right) } \ 
\label{T8E8}
\end{equation}
and hence, using \eqref{T8E6}, 
\begin{align}
	 \tilde{H}_{g}\left( \infty ;V\right) = -\int_{\R^3}\int_{\R^3}
	 (p \otimes p)  |\hat{\Phi}( p)|^2\frac{ \cf_{\{|p|\geq L_\eps^{-1}\}}\delta(p(v-w)) }{|1-\frac{2\left( 2\pi \right) ^{\frac{3}{2}}}{\left( L_{\eps
	 		}\right) ^{2}}\hat{\Phi}\left( p\right) \Psi \left( -\frac{ip\cdot V}{%
	 		\left\vert p\right\vert },\frac{p}{\left\vert p\right\vert }\right)|^2}\nabla g(w) \ud{w}\ud{p} .
\end{align}
This formula yields the asymptotic friction coefficient acting on a particle
which moves at speed $V_0$. We first notice that the integral in the right
hand side of \eqref{T8E8} is convergent. Indeed, if $|p|\rightarrow \infty $
we can assume that $|\hat{\Phi}\left( p\right) |$ decays sufficiently fast,
say exponentially, due to the cutoff for small distances we made for the
potential. On the other hand, since the potential $\Phi $ is behaving at
large distances as Coulombian potential $\hat{\Phi}\left( p\right) \sim 
\frac{c}{\left\vert p\right\vert ^{2}}$, $c>0$ as $\left\vert p\right\vert
\rightarrow 0$ then at a first glance the terms $p\left\vert \hat{\Phi}%
\left( p\right) ^{2}\right\vert $ would yield a logarithm divergence as $%
|p|\rightarrow 0$. Nevertheless, this divergence does not take place due to
the presence of the term $\frac{1}{L_{\eps }^{2}}\hat{\Phi}\left(
p\right) $ in the denominator of \eqref{T8E8}. This provides a suitable
cutoff of the singularities for $\left\vert p\right\vert $ of order $\frac{1%
}{L_{\eps }}.$

We remark that some care is needed in the deformation of the contour $\gamma 
$ appearing in the integral in (\ref{T8E7}). This is due to the fact that
the region of analyticity of the function $\left( \Delta _{\eps
}\left( k,z\right) \right) ^{-1}$ in the $z$ variable, becomes very small as 
$\left\vert k\right\vert \rightarrow 0.$ This is very closely related to the
so-called Langmuir waves, which have been discussed in \cite{NVW} and therefore, we refer there for more details about this issue.

It is relevant to remark that the relevant contributions in the integral (%
\ref{T8E8}) are those with $\left\vert p\right\vert \approx 1.$ Using
Plancherel's formula it follows that these contributions are those between
the region where we cut the potential (i.e. $\left\vert X\right\vert $ of
order one), until regions of the order of the Debye length (i.e. $\left\vert
X\right\vert \approx L_{\eps }=\frac{1}{\sqrt{\eps }}$).
Moreover, all the dyadic regions within this range of values yield
contributions of a similar size as it might be expected for Coulombian
potentials (cf. Subsection \ref{ThresholdCoulomb}). The asymptotic behaviour
of (\ref{T8E8}) as $\eps \rightarrow 0$ contains then, as it might be
expected, the Coulombian logarithm:%
\begin{equation*}
\tilde{H}_{g}\left( \infty ;V_0\right) \sim 2c^{2}\log \left( L_{\eps
}\right) \int_{\mathbb{R}%
^{3}}dw_{0} \frac{\nabla g(w_{0})}{|v-w_{0}|}\left( I-\frac{(v-w_{0})\otimes
(v-w_{0})}{|v-w_{0}|^{2}}\right) ,
\end{equation*}
where $c=\frac{A}{\sqrt{2\pi}}$, and $A$ is given by \eqref{phiDec_sOne}. 
Using \eqref{TCoulRay} we can compute the friction force on the macroscopic unit of time which reads 
\begin{equation}
H_{g}\left( \infty ;V_0\right) \sim \frac{\pi}{2} c^2   \Lambda_{g}\left( V_0\right)\ ,
\label{FrictAsy}
\end{equation}
as claimed.

In order to obtain the diffusion coefficient we first define the random variable:
\begin{equation*}
d_{g}\left( t\right) =\int_{0}^{t}F_{g}\left( V_0 s,s\right) ds\ .
\end{equation*}
which measures the particle
deflections for small $t.$ We have: 
\begin{equation}
\mathbb{E}\left[ d_{g}\left( t\right) \right] =0  \label{T8E4a}
\end{equation}%
and, denoting for simplicity $\tilde{T}_{\eps }=\frac{1}{\theta_{\eps }^{2}}$,
\begin{align*}
& \mathbb{E}\left[ d_{g}\left( t_{1}\right) \otimes d_{g}\left( t_{2}\right) %
\right] =2L_{{\eps }}\tilde{T}_{\eps }^2\int_{0}^{t_{1}}%
\int_{0}^{t_{2}}ds_{1}ds_{2}\int_{\R^3}d\eta \int_{\mathbb{R}%
	^{3}}dw_{0}g\left( w_{0}\right) \Xi \left( y_{1}-\eta ,w_{0},\tilde{T}%
_{\eps }s_{1}\right) \cdot \\
& \cdot \Xi \left( y_{2}-\eta ,w_{0},\tilde{T}_{\eps }s_{2}\right)
\int_{(\R^3)^{2}}dy_{1}dy_{2}\nabla \Phi \left(
L_{\eps }\left\vert y_{1}-\tilde{T}_{\eps }V_0s_{1}\right\vert
\right) \otimes \nabla \Phi \left( L_{\eps }\left\vert y_{2}-%
\tilde{T}_{\eps }V_0 s_{2}\right\vert \right) .
\end{align*}

We insert the Fourier-Laplace representation of $\Xi $ (cf.~\eqref{eq:Xi}) and 
obtain: 
\begin{align*}
& \mathbb{E}\left[ d_{g}\left( t_{1}\right) \otimes d_{g}\left( t_{2}\right) %
\right] =\frac{2L_{{\eps }}\tilde{T}_{{\eps }}^{2}}{(2\pi )^{3}%
}\int_{0}^{t_{1}}\int_{0}^{t_{2}}ds_{1} ds_{2}\int_{(\mathbb{R}%
	^{3})^{3}}dw_{0}dk \int_{\gamma }\;\mathrm{d}{z_{1}}\int_{\gamma }\;\mathrm{%
	d}{z_{2}}\frac{e^{z_{1}\tilde{T}_{{\eps }}s_{1}}}{z_{1}+ikw_{0}}\cdot
\\
& \cdot g(w_{0})\int_{(\R^3)^{2}}dy_{1} dy_{2}\frac{\nabla\Phi \left( L_{\eps }\left\vert y_{1}-\tilde{T}_{\eps
	}V_0 s_{1}\right\vert \right) \otimes \nabla \Phi \left( L_{\eps
	}\left\vert y_{2}-\tilde{T}_{\eps }V_0 s_{2}\right\vert \right) }{\Delta
	_{{\eps }}(k,z_{1})\Delta _{{\eps }}(-k,z_{2})}\frac{e^{z_{2}%
		\tilde{T}_{{\eps }}s_{2}}}{z_{2}-ikw_{0}}.
\end{align*}
We perform the contour integrals in $\gamma $ and we get 
\begin{align*}
& \mathbb{E}\left[ d_{g}\left( t_{1}\right) \otimes d_{g}\left( t_{2}\right) %
\right] =\frac{2L_{{\eps }}\tilde{T}_{{\eps }}^{2}}{(2\pi )^{3}%
}\int_{0}^{t_{1}}\int_{0}^{t_{2}}ds_{1} ds_{2} \int_{(\mathbb{R}%
	^{3})^{2}}dw_{0}dk \mathrm{d}{z_{2}}e^{-ik\cdot w_{0}\tilde{T}_{\eps
	}(s_{1}-s_{2})}  \cf_{\{|k|\geq L_\eps^{-1}\}}\cdot \\
& \cdot e^{ikL_{\eps }k(y_{1}-y_{2})}g(w_{0})\int_{(\mathbb{R}%
	^{3})^{2}}dy_{1} dy_{2}\frac{\nabla \Phi \left( L_{\eps
	}\left\vert y_{1}-\tilde{T}_{\eps }V_0 s_{1}\right\vert \right) \otimes
	\nabla \Phi \left( L_{\eps }\left\vert y_{2}-\tilde{T}%
	_{\eps }V_0 s_{2}\right\vert \right) }{\Delta _{\eps
	}(k,ikw_{0})\Delta _{\eps }(-k,-ikw_{0})}+o(1).
\end{align*}
Using the identity $\Delta _{\eps }(a,ib)=\Delta _{\eps
}^{\ast }(-a,-ib)$ for $b\in \mathbb{R}$ and changing to microscopic variables $x=L_\ep y$, $\tau= T_\ep t$ we obtain 
\begin{align*}
& \mathbb{E}\left[ d_{g}\left( t_{1}\right) \otimes d_{g}\left( t_{2}\right) %
\right] =\frac{2L_{{\eps }}\tilde{T}_{{\eps }}^{2}}{(2\pi)^{3}L_{\eps }^{6}T_{\eps }^{2}}\int_{0}^{T_{\eps
	}t_{1}}\int_{0}^{T_{\eps }t_{2}}d\tau _{1}d\tau _{2} \int_{(\mathbb{R%
	}^{3})^{2}}dw_{0}dk \mathrm{d}{z_{2}}e^{i/L_{\eps }k\cdot
	(V_0-w_{0})(\tau _{1}-\tau _{2})}\cdot \\
& \cdot e^{ik/L_{\eps }k(x_{1}-x_{2})}g(w_{0})\int_{(\mathbb{R}%
	^{3})^{2}}dx_{1}dx_{2} \frac{\nabla \Phi (x_{1})\otimes \nabla \Phi (x_{2})%
}{|\Delta _{\eps }(k,ikw_{0})|^{2}}  \cf_{\{|k|\geq L_\eps^{-1}\}}+o(1).
\end{align*}%
Performing the integral in $x_{1},x_{2}$ leads to the Fourier
representation: 
\begin{equation}
\begin{aligned}
&\mathbb{E}\left[ d_{g}\left( t_{1}\right) \otimes d_{g}\left( t_{2}\right) %
\right] \\
=&\frac{2}{L_{\eps }^{4}}\int_{0}^{T_{\eps
	}t_{1}}\int_{0}^{T_{\eps }t_{2}}d\tau _{1}d\tau _{2}\int_{(\mathbb{R%
	}^{3})^{2}}dw_{0} dk \frac{g(w_{0})(k\otimes k)|\hat{\phi}(k)|^{2}e^{ik\cdot
		(V_0-w_{0})(\tau _{1}-\tau _{2})}}{|\Delta _{\eps }(L_{\eps
	}k,iL_{\eps }kw_{0})|^{2}}\cf_{\{|k|\geq L_\eps^{-1}\}}+o(1).  \label{DiffDebCont}
\end{aligned}
\end{equation}
To determine the asymptotics of this expression for ${\eps }%
\rightarrow 0$ we observe that the integral in $k$ behaves like%
\begin{align}
&\int_{\R^3}dk\frac{(k\otimes k)|\hat{\phi}(k)|^{2}e^{ik\cdot
		(V_0-w_{0})(\tau _{1}-\tau _{2})}}{|\Delta _{\eps }(L_{\eps
	}k,iL_{\eps }kw_{0})|^{2}} \cf_{\{|k|\geq L_\eps^{-1}\}} \notag \\
&  \quad = \frac{c^{2}\frac{\pi^2}{2}}{|V_0-w_{0}||\tau _{1}-\tau _{2}|}%
\left( I-\frac{(V_0-w_{0})\otimes (V_0-w_{0})}{|V_0-w_{0}|^{2}} \right) \left( {\mathbbm1}%
_{|V_0-w_{0}||\tau _{1}-\tau _{2}|\leq L_{\eps }|\log^\frac12  (\eps)|}+o(1)\right),  \label{eq:kAsym}
\end{align}
Here the factor $|\log^\frac12 (\eps)|$ could be replaced by any slowly diverging factor.
Here we use that the dielectric function \eqref{def:dielectric} behaves as: 
\begin{equation*}
|\Delta _{\eps }(L_{\eps }k,iL_{\eps
}kw_{0})|^{-1}\approx 
\begin{cases}
1,\quad & \text{for $k\gtrsim 1/L_{\eps }$}\ , \\ 
\frac{k}{L_{\eps }},\quad & \text{for $k\lesssim 1/L_{\eps }$}\ .%
\end{cases}%
\end{equation*}
%
A simple way to see this is
to use the tensor structure of the integral and to compute this tensor in
the case in which $(V_0-w_0)$ is parallel to the $x_1$ coordinate axis.

Using the asymptotic formula \eqref{eq:kAsym} we finally obtain 
\begin{align*}
&\mathbb{E}\left[ d_{g}\left( t_{1}\right) \otimes d_{g}\left( t_{2}\right) %
\right] \\
& =\frac{ \pi^2 c^{2}}{L_{\eps }^{4}}\int_{0}^{T_{\eps
	}t_{1}}\int_{0}^{T_{\eps }t_{2}}d\tau _{1} d\tau _{2} \int_{\mathbb{R}%
	^{3}}dw_{0}\frac{{\mathbbm1}%
	_{|V_0-w_{0}||\tau _{1}-\tau _{2}|\leq L_{\eps } |\log^\frac12 (\eps )|}g(w_{0})}{|V_0-w_{0}||\tau _{1}-\tau _{2}|}\left( I-\frac{%
	(V_0-w_{0})\otimes (V_0-w_{0})}{|V_0-w_{0}|^{2}}\right)  \\
& = 2 \pi^2 c^{2}\eps ^{2}( \log (L_{\eps })+o(1))T_{\eps
} t_2 \wedge t_1\int_{\R^3} dw_{0} \frac{g(w_{0})}{%
	|V_0-w_{0}|}\left( I-\frac{(V_0-w_{0})\otimes (V_0-w_{0})}{|V_0-w_{0}|^{2}}\right) \\
& = \pi^2 c^{2}\eps ^{2}T_{\eps }(\log(\eps^{-1})+o(1)) t_2 \wedge t_1 \int_{\R^3}dw_{0}%
\frac{g(w_{0})}{|V_0-w_{0}|}\left( I-\frac{(V_0-w_{0})\otimes (V_0-w_{0})}{%
	|V_0-w_{0}|^{2}}\right)
\end{align*}%
where we use $L_\eps=\eps^{-\frac12 }$ and $t_1 \wedge t_2 = \min\{t_1,t_2\}$. Notice that $%
\int_{\R^3}dw_{0}\frac{g(w_{0})}{|V_0-w_{0}|}<\infty .$We then arrive at 
\begin{align}
\mathbb{E}\left[ d_{g}\left( t_{1}\right) \otimes d_{g}\left( t_{2}\right) %
\right] &\sim\pi^2 c^{2}t_2 \wedge t_1\int_{\mathbb{R}%
	^{3}}dw_{0} \frac{g(w_{0})}{|V_0-w_{0}|}\left( I-\frac{(V_0-w_{0})\otimes
	(V_0-w_{0})}{|V_0-w_{0}|^{2}}\right)  \notag \\
& = t_2 \wedge t_1 D_{g}\left( V_0\right).  \label{DiifBL}
\end{align}
This concludes the proof of Theorem \ref{thm:Coulomb}.
\end{proofof}
\medskip

We collect here some remarks that help to understand some steps of the above proof  or provide some physical intuition.
\begin{remark} 
It is possible to interpret the asymptotics of the inverse Fourier
transforms in \eqref{eq:kAsym} in terms of the correlations of a random
force field evaluated at two different points $V_0 \tau _{1}$ and $V_0\tau _{2}.$
The correlations between two points $x_{1},\ x_{2}$ decrease then as $\frac{1%
}{\left\vert x_{1}-x_{2}\right\vert }$ for distances smaller than the Debye
length. For distances larger than the Debye length $L_{\eps }$ the
decay of the correlations is much faster and therefore, the Coulombian
logarithm is due only to the range of distances between the particle size $%
\ep$ and the Debye length $L_{\ep }.$ This is different in
the case of Rayleigh gases with Coulombian interactions, where the range of
distances contributing to the kinetic regime goes from the particle size $%
\eps$ to the mean free path.
\end{remark}

\begin{remark}
	Notice that, differently from the case of Rayleigh gases with Coulombian
	interactions, after rescaling out the logarithmic term both the friction
	coefficient and the diffusion $D_{g}\left( V_0\right) $ yield a well defined
	limit as $\tilde{t}\rightarrow \infty ,$ where $\tilde{t}$ is the mesoscopic
	limit. In the case of Rayleigh gases both quantities diverge logarithmically
	as $\tilde{t}\rightarrow \infty .$
\end{remark}

\begin{remark}
	Notice that a heuristic explanation of (\ref{DiifBL}) is that the Vlasov
	evolution of the white noise which describes the fluctuations of the
	particle density yields a ``coloured noise" which decorrelates on distances
	of the order of the Debye length.
\end{remark}

\subsection{Grazing collisions\label{GrazingRayl} }

\begin{proofof}[Proof of Theorem~\ref{thm:Grazing}] 
The strategy to prove this result has many analogies with the arguments used in the proof of Theorem~\ref{thm:FiniteRange}. Due to this reason, for simplicity, we will provide the main steps of the proof for Rayleigh Gas systems. 

	We can solve the equations \eqref{T4E4}-\eqref{T4E6} as in Subsection \ref{RaylCompSuppPot}. We then
	obtain:
	\begin{equation*}
	\zeta _{\varepsilon }\left( y,w,\tilde{t}\right) =N_{\varepsilon }\left( y-w%
	\tilde{t},w\right) +\left( L_{\varepsilon }\right) ^{\frac{3}{2}}\nabla
	_{w}g\left( w\right) \cdot \int_{0}^{\tilde{t}}\nabla _{y}\tilde{\phi}%
	_{\varepsilon }\left( y-w\left( \tilde{t}-s\right) -V_0s\right) ds .
	\end{equation*}%
	 Then, 
	\begin{align*}
	\frac{dV}{d\tilde{t}}& =\left( L_{\varepsilon }\right) ^{3}\int_{\mathbb{R}%
		^{3}}\int_{\mathbb{R}^{3}}\nabla _{\eta }\tilde{\phi}_{\varepsilon }\left(
	V_0 \tilde{t}-\eta \right) N_{\varepsilon }\left( \eta -w\tilde{t},w\right) dwd\eta + \\
	& +\left( L_{\varepsilon }\right) ^{3}\int_{\mathbb{R}^{3}}\int_{\mathbb{R}%
		^{3}}\nabla _{\eta }\tilde{\phi}_{\varepsilon }\left( \eta \right) \nabla
	_{w}g\left( w\right) \cdot \int_{-\infty }^{0}\nabla _{\eta }\tilde{\phi}%
	_{\varepsilon }\left( \eta +\left( V_0-w\right) s\right) dsdwd\eta  \ .
	\end{align*}%

	We can now rewrite the equation above using a macroscopic time and spacial scale, keeping the velocity unscaled $v=V_0$. More precisely, we set 
	\begin{equation}
	t=\big(\ep\ell_{\ep}\big)^2 L_{\ep}\tilde{t} 
	, \qquad x=\big(\ep\ell_{\ep}\big)^2 L_{\ep}\xi .
	\end{equation} 
		Then, we obtain the following system of equations
	\begin{equation}
	\frac{dx}{dt}=v\ \ ,\ \ \frac{dv}{dt}=\frac{1}{\theta _{\varepsilon }\sqrt{%
			\ell _{\varepsilon }}}B_{g}\left( \frac{t}{\left( \theta _{\varepsilon
		}\right) ^{2}\ell _{\varepsilon }}\right) -\Lambda _{g}\left( v\right)  \label{A1E4}
	\end{equation}%
	where $\theta _{\varepsilon }=\varepsilon \ell _{\varepsilon }$, the friction term $\Lambda _{g}$ is given by 
	\begin{equation}
	\Lambda _{g}\left( V_0\right) =-\int_{\mathbb{R}^{3}}dY\int_{\mathbb{R}%
		^{3}}dw\nabla _{Y}\Phi \left( \left\vert Y\right\vert \right) \nabla
	_{w}g\left( w\right) \cdot \int_{-\infty }^{0}\nabla _{Y}\Phi \left(
	\left\vert Y+\left( V_0-w\right) s\right\vert \right) ds  \label{T4E8a}
	\end{equation}%
	and $B_{g}\left( \cdot \right)$ is a Gaussian stationary stochastic process
	defined in the time variable such that%
	\begin{align}
	& \mathbb{E}\left[ B_{g}\left( w,s\right) \right] =0\ \ \ ,\nonumber \\& \mathbb{E}\left[
	B_{g}\left(w, 0\right) B_{g}\left( w,s\right) \right] =\int_{\mathbb{R}%
		^{3}}dY\int_{\mathbb{R}^{3}}dw\nabla _{Y}\Phi \left( \left\vert
	Y+(w-V)s\right\vert \right) \otimes \nabla _{Y}\Phi \left( \left\vert
	Y\right\vert \right) g\left( w\right) \ .  \label{T4E8c}
	\end{align}
	We notice that the noise $B_g$ decorrelates in times of order $\ell
	_{\varepsilon }$ as it might be expected.

	Moreover, we set $\tilde{B}^{\ep}_{g}\left(v, t\right)= \frac{1}{\theta _{\varepsilon }\sqrt{%
			\ell _{\varepsilon }}}B_{g}\left( \frac{t}{\left( \theta _{\varepsilon
		}\right) ^{2}\ell _{\varepsilon }}\right) $. Hence 
	$\tilde{B}^{\ep}_{g}$ converges to $\tilde{B}^{\ep}_{g}$ in the limit ${\ep}\to 0$, which satisfies
	\begin{align}
	\mathbb{E}\left[ \tilde{B}_{g}\left( s\right) \right] =& 0\ \ \ ,\ \ \ 
	\mathbb{E}\left[ \tilde{B}_{g}\left( s_{1}\right) \tilde{B}_{g}\left(
	s_{2}\right) \right] =D_g(V_0)\delta \left( s_{1}-s_{2}\right)   \notag \\
	D_g(V_0)=& \int_{0}^{\infty }ds\int_{\mathbb{R}^{3}}dY\int_{\mathbb{R}^{3}}dw\nabla
	_{Y}\Phi \left( \left\vert Y+(w-v)s\right\vert \right) \otimes \nabla _{Y}\Phi
	\left( \left\vert Y\right\vert \right) g\left( w\right) .  \label{A1E4a}
	\end{align}%
	This concludes the proof of Theorem \ref{thm:Grazing}. 
\end{proofof}

\newpage

\appendix

\section{Approximation of particle distributions by means of gaussian
densities.}

\label{appA}

The approximation of Poisson point processes by means of Gaussian fields in
distances much larger than the average particle distance is a well known
consequence of the Central Limit Theorem. We summarize here the main
properties that we will use of the resulting Gaussian fields for particle
distributions distributed in the phase space $\left( x,v\right).$

For each particle configuration $\left\{ \left( x_{k},v_{k}\right) \right\}
_{k\in S}$ in $\R^3\times \R^3$ chosen according to the
rate $dxg\left( dv\right) $ we define $y_{k}=\frac{x_{k}}{L_{\eps }}%
,\ W_{k}=v_{k},\ y=\frac{x}{L_{\eps }},\ w=v,$ with $L_{\eps
}\gg 1$ and the empirical distribution
\begin{equation*}
f_{\eps }\left( y,w\right) =\frac{1}{\left( L_{\eps }\right)
^{3}}\sum_{k}\delta \left( y-y_{k}\right) \delta \left( w-W_{k}\right) \ \
,\ \ f_{\eps }\in \mathcal{M}_{+}\left( \R^3\times \mathbb{R%
}^{3}\right) \ .
\end{equation*} 
Then, for any compactly supported test function $\varphi \in C_{0}\left( 
\R^3\times \R^3\right) $ we have:%
\begin{equation*}
\int_{\R^3\times \R^3}f_{\eps }\left( y,w\right)
\varphi \left( y,w\right) dydw=\frac{1}{\left( L_{\eps }\right) ^{3}}%
\sum_{k}\varphi \left( y_{k},W_{k}\right)\ .
\end{equation*}
Therefore, taking the expectation with respect to the Poisson measure we obtain:%
\begin{equation*}
\mathbb{E}\left[ \int_{\R^3\times \R^3}f_{\eps
}\left( y,w\right) \varphi \left( y,w\right) dydw\right] =\frac{1}{\left(
L_{\eps }\right) ^{3}}\mathbb{E}\left[ \sum_{k}\varphi \left(
y_{k},W_{k}\right) \right] \ .
\end{equation*}
The right-hand side of this identity can be computed approximating $\varphi $
by piecewise constant functions and using the properties of the Poisson
distribution. Then:%
\begin{equation*}
\mathbb{E}\left[ \int_{\R^3\times \R^3}f_{\eps
}\left( y,w\right) \varphi \left( y,w\right) dydw\right] =\int_{\mathbb{R}%
^{3}\times \R^3}\varphi \left( y,w\right) g\left( w\right) dydw\ .
\end{equation*}

We estimate now the variance of $\int_{\R^3\times \mathbb{R}%
^{3}}f_{\eps }\varphi dydw.$ We have:%
\begin{align*}
& \int_{\R^3\times \R^3}\int_{\R^3\times 
\R^3}f_{\eps }\left( y_{a},w_{a},0\right) f_{\eps
}\left( y_{b},w_{b},0\right) \varphi \left( y_{a},w_{a}\right) \varphi
\left( y_{b},w_{b}\right) dy_{a}dw_{a}dy_{b}dw_{b} \\
& =\frac{1}{\left( L_{\eps }\right) ^{6}}\sum_{k}\sum_{\ell }\varphi
\left( y_{k},W_{k}\right) \varphi \left( y_{\ell },W_{\ell }\right)\ .
\end{align*}

In order to compute this integral we approximate $\varphi $ by means of
piecewise functions. Therefore we need to compute the following expectations
(where we denote as $\chi _{A}$ the characteristic function of the set $%
A\subset \R^3\times \R^3$):%
\begin{equation*}
\mathbb{E}\left[ \frac{1}{\left( L_{\eps }\right) ^{6}}%
\sum_{k}\sum_{\ell }\chi _{A}\left( y_{k},W_{k}\right) \chi _{A}\left(
y_{\ell },W_{\ell }\right) \right] \ \ ,\ \ \mathbb{E}\left[ \frac{1}{\left(
L_{\eps }\right) ^{6}}\sum_{k}\sum_{\ell }\chi _{A}\left(
y_{k},W_{k}\right) \chi _{B}\left( y_{\ell },W_{\ell }\right) \right]
\end{equation*}%
where $A$ and $B$ are disjoint sets. We have:%
\begin{align*}
& \mathbb{E}\left[ \sum_{k,\ell
}\chi _{A}\left( y_{k},W_{k}\right) \chi _{A}\left( y_{\ell },W_{\ell
}\right) \right]  =\mathbb{E}\left[\sum_{
\,k\neq \ell}\chi _{A}\left( y_{k},W_{k}\right) \chi _{A}\left( y_{\ell
},W_{\ell }\right) \right] +\mathbb{E}\left[ \sum_{k}\chi _{A}\left( y_{k},W_{k}\right) \chi _{A}\left(
y_{k},W_{k}\right) \right] \\
& =\left( \mathbb{E}\left[
\left( n\left( A\right) \right) ^{2}\right] -\mathbb{E}\left[ n\left(
A\right) \right] \right) +\mathbb{E}\left[ n\left( A\right)\right] =\mathbb{E}\left[ \left( n\left( A\right) \right) ^{2}\right]\ .
\end{align*}
On the other hand we have:
\begin{equation*}
\mathbb{E}\left[ \frac{1}{\left( L_{\eps }\right) ^{6}}%
\sum_{k}\sum_{\ell }\chi _{A}\left( y_{k},W_{k}\right) \chi _{B}\left(
y_{\ell },W_{\ell }\right) \right] =\mathbb{E}\left[ \frac{n\left( A\right)
n\left( B\right) }{\left( L_{\eps }\right) ^{6}}\right] =\frac{1}{%
\left( L_{\eps }\right) ^{6}}\mathbb{E}\left[ n\left( A\right) \right]
\mathbb{E}\left[ n\left( B\right) \right]\ .
\end{equation*}
We write  
$c_{A}=\left( L_{\eps}\right) ^{3}\int_{A}g\left( v\right) dydv=\left(
L_{\eps}\right) ^{3}J_{A}$
whence 
\begin{equation*}
\mathbb{E}\left[ \frac{1}{\left( L_{\eps}\right) ^{6}}%
\sum_{k}\sum_{\ell}\chi_{A}\left( y_{k},W_{k}\right) \chi_{B}\left( y_{\ell
},W_{\ell}\right) \right] =\frac{1}{\left( L_{\eps}\right) ^{6}}%
\mathbb{E}\left[ n\left( A\right) \right] \mathbb{E}\left[ n\left( B\right) %
\right] =J_{A}J_{B}\ .
\end{equation*}
We observe that 
\begin{equation*}
\mathbb{E}\left[ \left( n\left( A\right) \right) ^{2}\right] =\sum
_{N=0}^{\infty}N^{2}\frac{\left( c_{A}\right) ^{N}}{N!}e^{-c_{A}}=\left(
c_{A}\right) ^{2}+c_{A}=\left( L_{\eps}\right) ^{6}\left(
J_{A}\right) ^{2}+\left( L_{\eps}\right) ^{3}J_{A},
\end{equation*}
where we used that  $\sum_{N=0}^{\infty}N^{2}\frac{\left( x\right) ^{N}}{N!}=e^{x}\left(
x^{2}+x\right)$. Then: 
\begin{equation*}
\mathbb{E}\left[ \frac{1}{\left( L_{\eps }\right) ^{6}}%
\sum_{k}\sum_{\ell }\chi _{A}\left( y_{k},W_{k}\right) \chi _{A}\left(
y_{\ell },W_{\ell }\right) \right] =\frac{1}{\left( L_{\eps }\right)
^{6}}\mathbb{E}\left[ \left( n\left( A\right) \right) ^{2}\right] =\left(
J_{A}\right) ^{2}+\frac{1}{\left( L_{\eps }\right) ^{3}}J_{A}\ .
\end{equation*}

By decomposing $\varphi $ in linear combinations of characteristic functions
of disjoint sets we get: 
\begin{align*}
\frac{1}{\left( L_{\eps }\right) ^{6}}\mathbb{E}\left[
\sum_{k,l}\varphi \left( y_{k},W_{k}\right) \varphi \left( y_{\ell },W_{\ell
}\right) \right] & =\int_{(\mathbb{R}^3)^2}\varphi \left( y,w\right) g\left(
v\right) dydv+\frac{1}{\left( L_{\eps }\right) ^{3}}\int_{(\mathbb{R}%
^3)^2}\left( \varphi \left( y,w\right) \right) ^{2}g\left( v\right) dydv
\end{align*}%
whence:%
\begin{align*}
& \mathbb{E}\left[ \int_{\R^3\times \R^3}\int_{\mathbb{R}%
^{3}\times \R^3}f_{\eps }\left( y_{a},w_{a},0\right)
f_{\eps }\left( y_{b},w_{b},0\right) \varphi \left(
y_{a},w_{a}\right) \varphi \left( y_{b},w_{b}\right) dy_{a}dw_{a}dy_{b}dw_{b}%
\right] \\
& =\frac{1}{\left( L_{\eps }\right) ^{3}}\int_{\R^3\times 
\R^3}\left( \varphi \left( y,w\right) \right) ^{2}g\left( v\right)
dydv
\end{align*}%
or in a formal manner:%
\begin{equation*}
\mathbb{E}\left[ f_{\eps }\left( y_{a},w_{a},0\right) f_{\eps
}\left( y_{b},w_{b},0\right) \right] =\frac{g\left( w_{a}\right) }{\left(
L_{\eps }\right) ^{3}}\delta \left( y_{a}-y_{b}\right) \delta \left(
w_{a}-w_{b}\right)\ .
\end{equation*}
We now compute the correlations associated to the density $\tilde{\rho}_{1},$ namely
\begin{align*}
\tilde{\rho}_{1}\left( y,\tilde{t}\right) & =\int_{\R^3}\zeta
_{1}\left( y,w,\tilde{t}\right) dw 
\end{align*}
where $\zeta_{1}\left( y,w,\tilde{t}\right)  =N\left( y-w\tilde{t},w\right)$ with 
$\mathbb{E}\left[ \left( N\left( y_{a},w_{a}\right) \right) N\left(
y_{b},w_{b}\right) \right] =g\left( w_{a}\right) \delta\left(
y_{a}-y_{b}\right) \delta\left( w_{a}-w_{b}\right)\ .$
We have 
\begin{align*}
\mathbb{E}\left[ \tilde{\rho}_{1}\left( y_{1},\tilde{t}_{1}\right) \tilde{%
\rho}_{1}\left( y_{2},\tilde{t}_{2}\right) \right] & =\int_{\mathbb{R}%
^{3}}dw_{1}\int_{\R^3}dw_{2}\mathbb{E}\left[ \zeta_{1}\left(
y_{1},w_{1},\tilde{t}_{1}\right) \zeta_{1}\left( y_{2},w_{2},\tilde{t}%
_{2}\right) \right] \\
& =\int_{\R^3}dw_{1}\int_{\R^3}dw_{2}\mathbb{E}\left[
N\left( y_{1}-w_{1}\tilde{t}_{1},w_{1}\right) N\left( y_{2}-w_{2}\tilde {t}%
_{2},w_{2}\right) \right] \\
& =\int_{\R^3}dw_{1}g\left( w_{1}\right) \delta\left(
y_{1}-y_{2}-w_{1}\left( \tilde{t}_{1}-\tilde{t}_{2}\right) \right)\ .
\end{align*}
Suppose now that $\tilde{t}_{1}\geq \tilde{t}_{2}.$ Then:%
\begin{align*}
& \mathbb{E}\left[ \tilde{\rho}_{1}\left( y_{1},\tilde{t}_{1}\right) \tilde{%
\rho}_{1}\left( y_{2},\tilde{t}_{2}\right) \right] =\int_{\mathbb{R}%
^{3}}dw_{1}g\left( w_{1}\right) \delta \left( y_{1}-y_{2}-w_{1}\left( \tilde{%
t}_{1}-\tilde{t}_{2}\right) \right) \\
& =\frac{1}{\left( \tilde{t}_{1}-\tilde{t}_{2}\right) ^{3}}\int_{\mathbb{R}%
^{3}}dw_{1}g\left( w_{1}\right) \delta \left( \frac{y_{1}-y_{2}}{\tilde{t}%
_{1}-\tilde{t}_{2}}-w_{1}\right) =\frac{1}{\left( \tilde{t}_{1}-\tilde{t}%
_{2}\right) ^{3}}g\left( \frac{y_{1}-y_{2}}{\tilde{t}_{1}-\tilde{t}_{2}}%
\right)\ .
\end{align*}
Henceforth:%
\begin{equation*}
\mathbb{E}\left[ \tilde{\rho}_{1}\left( y_{1},\tilde{t}_{1}\right) \tilde{%
\rho}_{1}\left( y_{2},\tilde{t}_{2}\right) \right] =\frac{1}{\left( \tilde{t}%
_{1}-\tilde{t}_{2}\right) ^{3}}g\left( \frac{y_{1}-y_{2}}{\tilde{t}_{1}-%
\tilde{t}_{2}}\right) \ \ ,\ \ \ \tilde{t}_{1}>\tilde{t}_{2}.
\end{equation*}

\bigskip

\section{Stability properties of a Vlasov medium and the Dielectric function}\label{app:B} 
	
	In this appendix we provide a physical interpretation to both terms $H_{g}\left( 
		\tilde{t};V_0\right) $ and $F_{g}\left( \xi ,\tilde{t}\right) $ as follows.
		The term $H_{g}\left( \tilde{t};V_0\right) $ is due to the fact that the
		presence of the tagged particle induces a force in the surrounding
		distribution of scatterers. These scatterers rearrange their positions as a
		consequence of their mutual interactions and the forces induced by the
		tagged particle and this results in a reaction force acting on the tagged
		particle. On the other hand the fluctuations of the particle density yield a
		random force field. These fluctuations of the particle density are
		rearranged due to the effect of the mutual interactions between the
		particles. The resulting force field after these fluctuations reach a steady
		state $F_{g}\left( \xi ,\tilde{t}\right).$ Notice that:%
		\begin{equation}
		F_{g}\left( \xi ,\tilde{t}\right) =\lim_{T\rightarrow \infty }\left[ \int_{%
			\R^3}\int_{\R^3}\nabla _{y}\Phi \left( y\right) \int_{%
			\R^3}d\eta \int_{\R^3}dw_{0}G_{\sigma }\left( y-\eta
		,w,w_{0},\tilde{t}+T\right) N\left( \eta +\xi ,w_{0}\right) dydw\right].
		\label{T2E9}
		\end{equation}
		To obtain a well defined quantity by means of (\ref{T2E8}) and a
		noise with integrable time correlations by means of (\ref{T2E9}) we need to
		make some assumptions on $g\left( w\right) $ and the interaction potential $%
		\Phi $ in order to have suitable decay properties for $G_{\sigma }\left(
		y,w,w_{0},\tilde{t}\right) $ as $\tilde{t}\rightarrow \infty .$ The
		conditions that must be imposed on $g$ in order to obtain such a decay will
		be discussed below. We just remark here that this decay
		condition for $G_{\sigma }$ means, from the physical point of view, that the
		``medium" formed by the set of particles is stable under its own
		interactions. We will just assume in the following that the decay of $%
		G_{\sigma }$ in $\tilde{t}$ is sufficiently fast to ensure the convergence
		of all the integrals appearing in what follows converge.
	
	As indicated above, in order to obtain a well defined noise term $%
	F_{g}\left( \xi ,\tilde{t}\right) $ and friction coefficient $H_{g}\left( 
	\tilde{t};V_0\right) $ the function $G_{\sigma }$ that solves (\ref{T2E2})-(%
	\ref{T2E4}) has to decay sufficiently fast as $\tilde{t}\rightarrow \infty .$
	This decay is closely related to the stability properties of the system of
	particles described by equations with the form (\ref{eq:IntNew}) with interaction
	potentials as in (\ref{S4E6}), (\ref{S4E7}). For these potentials makes
	sense to approximate (\ref{eq:IntNew}) by means of the Vlasov equation. The
	stability of homogeneous distributions of particles with a distribution of
	velocities given by $g\left( w\right) $ which can be approximated using (\ref%
	{S9E5}) was first considered in \cite{La, LL2} where the linearized Vlasov
	equation around a homogeneous distribution of particles was considered. It
	was found in that paper the possibility of damping of perturbations in a
	homogeneous equation in spite of the fact that the Vlasov equation does not
	include the effect of time irreversible effects like particle collisions.
	
	A general condition on the distribution of velocities $g\left( w\right) $
	yielding stability of a homogeneous medium was given in \cite{Pe}. A
	rigorous proof of stability of the homogeneous state for a large class of
	velocity distributions under the nonlinear Vlasov equation with Coulombian
	potentials in the torus has been obtained in \cite{MV}. In this paper we
	restrict ourselves to the analysis of the linearized problem (\ref{T2E2})-(%
	\ref{T2E4}). We discuss conditions on the potentials $\Phi $ and the
	velocity distributions $g$ yielding a behaviour on $G_{\sigma }$ for large $%
	\tilde{t}$ which allows to define the friction coefficient $\Lambda_{g}$ and the random force field $F_{g}$ by means of (\ref{T2E8}) and (\ref%
	{eq:FFCorr}) respectively.
	
	From the physical point of view the decay of $G_{\sigma }$ for long times
	means that the homogeneous distribution of scatterers is stable under the
	combined effect of their dispersion of velocities and their long range
	mutual interactions. We notice in particular that the interaction between
	the scatterers does not yield an exponential growth of the density
	inhomogeneities in the phase space.

\paragraph{Dielectric function. \label{DielFunc}}
It is customary in the physical literature to assume that the set of
particles described by means of Vlasov equations, can be interpreted as an
effective medium in which the particle density rearranges due to the action
of an external field. The effective properties of the medium are usually
described by means of the so-called dielectric function.

In order to define the dielectric function we consider the following
generalization of the Vlasov equation (cf.~\eqref{S9E5}) in which we replace
the force due to a tagged particle in the Vlasov medium by an arbitrary
force term $F_{ext}\left( y,w\right) :$%
\begin{equation*}
\partial _{\tilde{t}}f\left( y,w,\tilde{t}\right) +w\cdot \nabla _{y}f\left(
y,w,\tilde{t}\right) +\left[ F_{ext}\left( y,w\right) -\sigma \int_{\mathbb{R%
	}^{3}}\nabla _{y}\Phi \left( y-\eta \right) \rho \left( \eta ,\tilde{t}%
\right) d\eta \right] \cdot \nabla _{w}f\left( y,w,\tilde{t}\right) =0
\end{equation*}%
where $\sigma =\eps \left( L_{\eps }\right) ^{3}.$ We consider
solutions of this equation close to the spatially homogeneous solution $%
g\left( w\right) .$ We define $h\left( y,w,\tilde{t}\right) =f\left( y,w,%
\tilde{t}\right) -g\left( w\right) .$ Then, assuming that $F_{ext}$ and $h$
are small and linearizing, we obtain:%
\begin{equation}
\partial _{\tilde{t}}h\left( y,w,\tilde{t}\right) +w\cdot \nabla _{y}h\left(
y,w,\tilde{t}\right) +\left[ F_{ext}\left( y,\tilde{t}\right) -\sigma \int_{%
	\R^3}\nabla _{y}\Phi \left( y-\eta \right) \tilde{\rho}\left( \eta
,\tilde{t}\right) d\eta \right] \cdot \nabla _{w}g\left( w\right) =0
\label{A3E3}
\end{equation}%
with:%
\begin{equation}
\tilde{\rho}\left( \eta ,\tilde{t}\right) =\int_{\R^3}h\left( \eta
,w,\tilde{t}\right) dw \ . \label{A3E4}
\end{equation}

Notice that the total force exerted by the combination of external forces
and  forces due to the particles of the Vlasov medium is:%
\begin{equation}
F\left( y,\tilde{t}\right) =F_{ext}\left( y,\tilde{t}\right) -\sigma \int_{%
	\R^3}\nabla _{y}\Phi \left( y-\eta \right) \tilde{\rho}\left( \eta
,\tilde{t}\right) d\eta \ .  \label{A3E5}
\end{equation}

In order to define the dielectric function we consider external forces with
the form $F_{ext}\left( y,\tilde{t}\right) =e^{i\left( \omega \tilde{t}%
	+k\cdot y\right) }F_{0}$ with $\omega \in \mathbb{R}$, $k\in \mathbb{R}%
^{3},\ F_{0}\in \R^3.$ In that case the force $F\left( y,\tilde{t}%
\right) $ is proportional to $e^{i\left( \omega \tilde{t}+k\cdot y\right) }.$
We define the dielectric function $\eps =\eps \left( k,\omega
\right) \in C\left( \R^3\times \mathbb{R};M_{3\times 3}\left( 
\mathbb{C}\right) \right),$ where we denote as $M_{3\times 3}\left( \mathbb{%
	C}\right)$ the set of $3\times 3$ matrices with complex coefficients, by
means of
\begin{equation}
F\left( y,\tilde{t}\right) =\left[ \eps \left( k,\omega \right) F_{0}%
\right] e^{i\left( \omega \tilde{t}+k\cdot y\right) } \ . \label{A3E6}
\end{equation}

In order to compute the function $h$ for $F_{ext}\left( y,\tilde{t}\right)
=e^{i\left( \omega \tilde{t}+k\cdot y\right) }F_{0}$ we look for solutions
of (\ref{A3E3}), (\ref{A3E4}) with the form $h\left( x,w,t\right)
=e^{i\left( \omega \tilde{t}+k\cdot y\right) }H\left( w\right) .$ Then $%
H\left( w\right) $ satisfies:%
\begin{equation}
i\left( \omega \tilde{t}+k\cdot w\right) H\left( w\right) -\left( 2\pi
\right) ^{\frac{3}{2}}\sigma i\left[ \nabla _{w}g\left( w\right) \cdot k%
\right] \hat{\Phi}\left( k\right) \tilde{\rho}_{0}=-F_{0}\cdot \nabla
_{w}g\left( w\right)  \label{A3E7}
\end{equation}%
where
\begin{equation*}
\tilde{\rho}_{0}=\int_{\R^3}H\left( w\right) dw, \qquad
\int_{\R^3}\nabla _{y}\Phi \left( y-\eta \right) e^{-ik\cdot
	\left( y-\eta \right) }d\eta =\left( 2\pi \right) ^{\frac{3}{2}}ik\hat{\Phi}%
\left( k\right) \ .
\end{equation*}
Therefore
\begin{equation*}
H\left( w\right) =\frac{\left( 2\pi \right) ^{\frac{3}{2}}\sigma \left[
	\nabla _{w}g\left( w\right) \cdot k\right] \hat{\Phi}\left( k\right) \tilde{%
		\rho}_{0}}{\left( \omega \tilde{t}+k\cdot w+i 0^{+}\right) }-\frac{F_{0}\cdot
	\nabla _{w}g\left( w\right) }{i\left( \omega \tilde{t}+k\cdot w+i0^{+}\right) } \ .
\end{equation*}
We now integrate with respect to the $w$ variable and we obtain
\begin{equation*}
\tilde{\rho}_{0}-\left( 2\pi \right) ^{\frac{3}{2}}\sigma \tilde{\rho}_{0}%
\hat{\Phi}\left( k\right) \int_{\R^3}\frac{\nabla _{w}g\left(
	w\right) \cdot k}{\left( \omega \tilde{t}+k\cdot w+i0^{+}\right) }dw=-\int_{%
	\R^3}\frac{F_{0}\cdot \nabla _{w}g\left( w\right) }{i\left( \omega 
	\tilde{t}+k\cdot w+i0^{+}\right) }dw
\end{equation*}%
whence%
\begin{equation*}
\tilde{\rho}_{0}=-\frac{1}{\Delta _{\sigma }\left( k,i\omega \right) }\int_{%
	\R^3}\frac{F_{0}\cdot \nabla _{w}g\left( w\right) }{i\left( \omega 
	\tilde{t}+k\cdot w+i0^{+}\right) }dw \ .
\end{equation*}
Then, using that $F_{ext}\left( y,\tilde{t}\right) =e^{i\left( \omega \tilde{%
		t}+k\cdot y\right) }F_{0}$ it follows that 
\begin{eqnarray*}
	F\left( y,\tilde{t}\right) &=&F_{ext}\left( y,\tilde{t}\right) -\sigma \int_{%
		\R^3}\nabla _{y}\Phi \left( y-\eta \right) \tilde{\rho}\left( \eta
	,\tilde{t}\right) d\eta  \\
	&=&e^{i\left( \omega \tilde{t}+k\cdot y\right) }\left( F_{0}+\frac{\left(
		2\pi \right) ^{\frac{3}{2}}\sigma }{\Delta _{\sigma }\left( k,i\omega
		\right) }\hat{\Phi}\left( k\right) k\int_{\R^3}\frac{\nabla
		_{w}g\left( w\right) \cdot F_{0}}{\left( \omega \tilde{t}+k\cdot
		w+i0^{+}\right) }dw\right)
\end{eqnarray*}%
whence
\begin{equation*}
\eps \left( k,\omega \right) =I+\frac{\left( 2\pi \right) ^{\frac{3}{2}%
	}\sigma }{\Delta _{\sigma }\left( k,i\omega \right) }\hat{\Phi}\left(
k\right) \int_{\R^3}\frac{k\otimes \nabla _{w}g\left( w\right) }{%
	\left( \omega \tilde{t}+k\cdot w+i0^{+}\right) }dw \ .
\end{equation*}

\bigskip

\bigskip 

\noindent\textbf{Acknowledgment. }The authors acknowledge support through the CRC
1060 \textit{The mathematics of emergent effects }at the University of Bonn
that is funded through the German Science Foundation (DFG), as well as the
support of the Hausdorff Research Institute for Mathematics (Bonn), through the Junior
Trimester Program on Kinetic Theory. R.W. acknowledges support  of Universit\'e de Lyon through the IDEXLYON Scientific Breakthrough Project ``Particles drifting and propelling in turbulent flows", 
and the hospitality of the UMPA ENS Lyon.

\bigskip


\bigskip

\end{document}